\def\openone{\leavevmode\hbox{\small$1$\normalsize\kern-.33em$1$}}
\newcommand{\abs}[1]{\left| #1 \right|} 
\newcommand{\avg}[1]{\left< #1 \right>}
\newcommand{\matrixel}[3]{\left< #1 \vphantom{#2#3} \right|
 #2 \left| #3 \vphantom{#1#2} \right>} 
\let\baraccent=\= 
\renewcommand{\=}[1]{\stackrel{#1}{=}}
\newcommand{\ketbra}[2]{\left| #1 \vphantom{#2}\right>\!\!\left< #2\vphantom{#1}\right|}
\newcommand{\be}{\begin{equation}}
\newcommand{\bel}[1]{\begin{equation}\label{#1}}
\newcommand{\ee}{\end{equation}}
\newcommand{\dd}{\mathrm{d}}
\newcommand{\C}{\mathbb{C}}
\renewcommand{\H}{\hat{\mathcal{H}}}
\DeclarePairedDelimiter\floor{\lfloor}{\rfloor}
\newtheorem{theorem}{Theorem}
\newtheorem{corollary}[theorem]{Corollary}
\newtheorem{lemma}[theorem]{Lemma}
\newcommand{\edit}[1]{{\color{red}#1 }}
\begin{document}
\title{Thermal state structure in the Tavis--Cummings model and rapid simulations in mesoscopic quantum ensembles}
    \author{Lane G. Gunderman}
    \email{lanegunderman@gmail.com}
   \affiliation{The Institute for Quantum Computing, University of Waterloo, Waterloo, Ontario, N2L 3G1, Canada}
    \affiliation{Department of Physics and Astronomy, University of Waterloo, Waterloo, Ontario, N2L 3G1, Canada}
    \affiliation{Current affiliation: University of Illinois Chicago, Department of Electrical and Computer Engineering, Chicago, Illinois, 60607}
    \author{Troy Borneman}
    \affiliation{The Institute for Quantum Computing, University of Waterloo, Waterloo, Ontario, N2L 3G1, Canada}
    \affiliation{High Q Technologies Inc, Waterloo, Ontario, N2L 3G1, Canada}
    \author{David G. Cory}
    \affiliation{The Institute for Quantum Computing, University of Waterloo, Waterloo, Ontario, N2L 3G1, Canada}
    \affiliation{Department of Chemistry, University of Waterloo, Waterloo, Ontario, N2L 3G1, Canada}    

\date{\today} 

\begin{abstract}
Hybrid quantum systems consisting of a collection of $N$ spin-$1/2$ particles uniformly interacting with an electromagnetic field, such as one confined in a cavity, are important for the development of quantum information processors and will be useful for metrology, as well as tests of collective behavior. Such systems are often modeled by the Tavis--Cummings model and having an accurate understanding of the thermal behaviors of this system is needed to understand the behavior of them in realistic environments. We quantitatively show in this work that the Dicke subspace approximation is at times invoked too readily, in specific we show that there is a temperature above which the degeneracies in the system become dominant and the Dicke subspace is minimally populated. This transition occurs at a lower temperature than priorly considered. When in such a temperature regime, the key constants of the motion are the total excitation count between the spin system and cavity and the collective angular momentum of the spin system. These enable perturbative expansions for thermal properties in terms of the energy shifts of dressed states, called Lamb shifts herein. They enable efficient numeric methods that scale in terms of the size of the spin system. Notably the runtime to obtain certain parameters of the system scales as $O(\sqrt{N})$, and is thus highly efficient. These provide methods for approximating, and bounding, properties of these systems as well as characterizing the dominant population regions, including under perturbative noise. In the regime of stronger spin-spin coupling the perturbations outweigh the expansion series terms and inefficient methods likely are needed to be employed, removing the computational efficiency of simulating such systems. The results in this work can also be used for related systems such as coupled-cavity arrays, cavity mediated coupling of collective spin ensembles, and collective spin systems.
\end{abstract}

\keywords{cQED, thermal states, mesoscopic ensembles, collective behavior, complexity of quantum systems, Tavis-Cummings model}

\maketitle

\section{Introduction}\label{sec:intro}

In this paper, we focus on the collective interaction of a mesoscopic spin ensemble, with $N$ spins, with a single mode quantum electromagnetic field, henceforth referred to as a cavity. The case of a single spin, $N=1$, is called the Jaynes--Cummings model and has been extensively studied and yielded many insights into fundamental aspects of light-matter interactions \cite{shore1993jaynes,larson2021jaynes}. This system has been used to verify their quantum nature \cite{fink2008climbing} as well as being a useful tool for designing quantum devices, and so a more full understanding of the thermal properties of these systems in the case of larger $N$ could aid further progress in the development of quantum information processors \cite{blais_circuit_2020,fink2009dressed,yang_probing_2020,zou_implementation_2014}, hybrid quantum devices \cite{kurizki_quantum_2015,morton_hybrid_2011,xiang_hybrid_2013}, and radiative cooling of spin ensembles \cite{wood_cavity_2014,wood_cavity_2016,bienfait_controlling_2016,albanese_radiative_2020,ranjan_pulsed_2019}, among other possible uses. 

In our system we take each particle to be a non-interacting spin-1/2 particle, coupled uniformly to the electromagnetic field. After a rotating wave approximation, this is modeled by the Tavis-Cummings (TC) Hamiltonian,
    
    \begin{equation}
        \hat{H} = \omega_c \hat{a}^\dagger \hat{a} + \omega_s \hat{J}_z + g_0(\hat{a}\hat{J}_+ + \hat{a}^\dagger \hat{J}_-),\label{eq.tc}
    \end{equation}
    where $g_0$ is the single particle-mode coupling strength--we work in natural units throughout this work ($\hbar=k_b=1$) \cite{tavis1968exact}.
    
    This work builds off the techniques and results of a prior paper of a superset of the authors of this work \cite{gunderman2022lamb}. As shown in our prior work, we have the ability to efficiently compute, as well as provide meaningful statistics on, the energy level structure of the Tavis-Cummings Hamiltonian under a certain set of assumptions. While those results provided some information they failed to provide practical methodology for taking these results and testing them, this work proceeds to carry those results to experimental expectations as well as providing tools of use for other systems with similar structures. Most of the experimental assumptions required here can be reasonably realized in a laboratory setting, as we just require that the ensemble be on resonance with the cavity, $\omega_c = \omega_s \equiv \omega_0$, and that $g_0\sqrt{N} \ll \omega_0$ (albeit this expression is not quite correct as we argued) and lastly no spin-spin interactions. In our prior results we also did not include thermal effects, thus limiting the practical of the results. Under these assumptions, we have that angular momentum, $j$, and the number of total excitations, $k$, that is excitations stored in both the spin ensemble and the cavity, are conserved. This induces a two parameter family of non-interacting subspaces.
    
    
    This work is organized as follows. It begins with the definitions needed for our analysis. Following this we remark on the temperature regime of particular interest and show that so long as the system is sufficiently above absolute zero there will be nearly no population in the Dicke ($N/2$) subspace, nor in the lowest-excitation levels within the angular momentum subspaces, finding a smooth transition where the degeneracies of this system increasingly dominate. This tends to fall in the $10$ to $300$ mK range, although varies with the parameters of the system. \if{false}We then proceed to provide the uncoupled, $g_0=0$, partition function and average energy, which we then use to highlight the changes that occur from the introduction of the coupling between the spins and the cavity. Through this analysis we see that below a sufficiently small temperature the Dicke subspace is well populated, but beyond it the Dicke subspace is minimally populated and a perturbative approach must instead be used. We complement these mathematical results with numerical results which can show features that are overshadowed otherwise. These numerical results are achieved by implementing the improved computational methods from our prior work \cite{gunderman2022lamb}.\fi
    If the temperature of the system is low enough that the degeneracy does not dominate, more brute-force methods may be employed. If, however, the temperature is in a more degeneracy dominated regime, the introduced perturbative expansion may be used for computing the partition function and the Boltzmann averaging, which allows us to generate expressions for the shifts in the systems Helmholtz and average energies due to the coupling between the cavity and spin ensemble. Finally, through this perturbative expansion, we are able to generate optimal algorithms for computing many expectation values, as well as histograms for these properties--bringing the runtime down from $O(N^3)$ to $\Theta(\sqrt{N})$, both with a $O(T)$ dependence on temperature, for temperature $T$ \footnote{In this work $\Theta$ time means that the expression is upper-bounded and lower-bounded by the argument's time complexity.}. This allows for readily computing expectations for systems of $1000$ spins, although larger would be possible with careful programming.
    
    While we focus on the partition function, average energy of the system, and the Lamb-shift induced fluctuations of the photonic number operator as well as rotation about a collective axis, the techniques employed and the results shown can be used for other thermodynamic properties. Having access to both quick numerical methods for computing properties of these systems and perturbative expansions for these properties will allow additional insights into this model, as well as systems with similar structures such as coupled-cavity arrays \cite{hartmann2008quantum,nissen2012nonequilibrium,tomadin2010many}, collective spin ensembles with cavity mediated interactions \cite{astner2017coherent,norcia2018cavity}, multi-connected Jaynes-Cummings models and the extension of this work to the Tavis-Cummings version \cite{seo2015quantum,xue2017quantum}, as well as works that consider a collective spin system on its own \cite{wang2012quantum,munoz2023phase,omanakuttan2023spin}. The utility of this work for a myriad of physically motivated models with applications to a variety of quantum technologies suggests that this work will help with understanding and designing experiments for collective mesoscopic systems.
    

\if{false}
\subsection{Literature Review/Connections}

\begin{itemize}
    \item coupled-cavity arrays :
    \item pair of spin ensembles :
    \item TC of JC Lin Tian :
    \item General TC stuff \cite{schuster_high-cooperativity_2010}:
\end{itemize}
\fi
\section{Definitions} \label{sec:def}
    We include our notations and definitions in this section. The Pauli operators are written in the Zeeman basis. \if{false}, so that
    \begin{equation}
        \hat{\sigma}_z = \ketbra{\uparrow}{\uparrow} - \ketbra{\downarrow}{\downarrow},\quad 
        \hat{\sigma}_+ \ket{\downarrow} = \ket{\uparrow},\quad \hat{\sigma}_+ |\uparrow\rangle=0 .
    \end{equation}\fi
    These spin operators can be combined as a sum of tensor products to produce the collective versions of these spin operators. Let $N$ be the number of spin--$1/2$ particles. Then the collective spin operators are given by
    \begin{equation}
        \hat{J}_z = \frac{1}{2} \sum_{i=1}^N \hat{\sigma}_z^{(i)},\quad
        \hat{J}_\pm = \sum_{i=1}^N \hat{\sigma}_\pm^{(i)}.
    \end{equation}
    where the superscript on the Pauli operator indicates action only on the $i$-th particle.
    
    \if{false}
    The collective operator algebra is a sub--algebra of self--adjoint operators acting on the $N$--spin system, and conveniently satisfies the same commutation relations as those for a single particle spin operator. By a change of basis, we can identify the transverse spin operators,
    \begin{align}
        \hat{J}_x &= \frac{1}{2}\big(\hat{J}_+ + \hat{J}_-\big),\\
        \hat{J}_y &= \frac{1}{2i}\big(\hat{J}_+ - \hat{J}_-\big).
    \end{align}
    The transverse spin operators, along with $\hat{J}_z$, span a collective $\mathfrak{su}(2)$ algebra, which differ from a single spin--1/2 Pauli operators in that the collective operators are not involutory. The representations of the $\mathfrak{sl}(2;\C)$ operators can be defined by their action on a state of total angular momentum $j$ with $z$ component $m$:\fi
    The action of these operators on a state of total angular momentum $j$ with $z$ component $m$ are given by:
    \begin{align}
        \hat{J}_z \ket{j,m} &= m\ket{j,m}\\
        \hat{J}_\pm \ket{j,m} &= \sqrt{j(j+1) - m(m\pm 1)}\ket{j,m \pm 1}.
    \end{align}
    
    We note that our Hamiltonian has two good quantum numbers representing conserved quantities, so long as the system is cool enough that the rotating-wave approximation continues to hold. The first of these is the total angular momentum, $j$, which determines the eigenvalues of the total angular momentum operator, $\hat{\bm{J}}^2 = \hat{J}_x^2 + \hat{J}_y^2 + \hat{J}_z^2$, with eigenvalues $j(j+1)$. The second of these conserved quantities is the number of total excitations, $k$, given as the eigenvalues of the excitation operator,
    \begin{equation}
        \hat{K} = \hat{a}^\dagger\hat{a} + \hat{J}_z + \frac{N}{2}\openone.
    \end{equation}
    The scaled identity term in the excitation operator ensures excitations are non--negative, as the action of $\hat{J}_z$ on the ground state has eigenvalue $-N/2$.
    \if{false}
    All of the Hamiltonians considered in this work share the same internal structure, defined by 
    \begin{equation}
        \H_0 =  \omega_0 (\hat{a}^\dagger\hat{a} + \hat{J}_z).
    \end{equation}
    The interaction Hamiltonian is given by the Dicke model \cite{dicke1954coherence}:
    \begin{equation}\label{dicke}
        \H_{D,int} = 2g_0 \big(\hat{a} + \hat{a}^\dagger\big) \hat{J}_x.
    \end{equation}
    Applying the rotating wave approximation, the counter--rotating term is discarded, leaving us with
    \begin{equation}
        \H_{int} = g_0 \big(\hat{a} \hat{J}_+ + \hat{a}^\dagger \hat{J}_- \big),
    \end{equation}
    which is the interaction term in the Tavis--Cummings (TC) Hamiltonian. \if{false}Note that tensor products are omitted for brevity, so that, for example,
    \begin{equation}
        \hat{a} \otimes \hat{J}_+ \equiv \hat{a}\hat{J}_+.
    \end{equation}\fi

    Lastly, we note that all states live within the Hilbert space,
    \begin{equation}
        \mathscr{H} = \operatorname{L}^2\left(\mathbb{R}\right) \otimes \big(\C^2\big)^{\otimes N}.
    \end{equation}
    \fi

We recall next the pertinent aspects and definitions from our prior work. Our Hamiltonian may be written in direct sum form as
    \begin{equation}
        \hat{H} \cong \bigoplus_{j,k}\big( \omega_0 k \openone_{j,k} +  g_0 L(j,k)\big)^{\otimes d_j},
    \end{equation}
where $L(j,k)$ are hollow tridiagonal square matrices with known entries and dimension $|\mathcal{B}_{j,k}|=\min\{2j+1,k-k_0(j)+1\}$, with $k_0(j)=N/2-j$. The set of eigenvalues for $g_0 L(j,k)$ is written as $\Lambda(j,k)$, which are referred to as the \textit{Lamb shifts} for those values of $j$ and $k$. These correspond to the energy shifts induced from dressing the eigenstates of a given $j$ and $k$ value. \if{false} interaction Hamiltonian, which we define in equation \eqref{eq:coupling_matrix} and refer to as the coupling matrices. 
    
We define a basis for a general $(j,k)$ subspace with total angular momentum $j$ and $k$ excitations as 
    \begin{equation}\label{eq:lamb_basis}
        \mathcal{B}_{j,k} = \lbrace \ket{\alpha_{j,k}} \,|\, \alpha = 1,\cdots,n_{j,k}, n_{j,k} + 1\rbrace,
    \end{equation}
    using a shorthand ket representation of the tensor product of a spin-cavity state
    \begin{equation}
        \ket{\alpha_{j,k}} = \ket{k-\alpha-k_0(j)}\ket{j,-j+\alpha}.
    \end{equation}
    The single parameter, $\alpha$, provides a convenient representation of states within a $(j,k)$ subspace. The value, $n_{j,k} = \abs{ \mathcal{B}_{j,k} } - 1$, one less than the dimension, is chosen for convenience. We define $k_0(j) = N/2-j$ as the number of excitations present in the ground state of an angular momentum $j$ subspace within an $N$ spin ensemble. Explicitly, $n_{j,k}$ is given as
    \begin{equation}
        n_{j,k} = \operatorname{min}\lbrace 2j, k - k_0(j)\rbrace.
    \end{equation}
    If $k < k_0(j)$, then the basis set is empty and there are no states present at this excitation level within the $j$ angular momentum subspace.

    \if{false}The coupling matrices' entries can be found by applying the interaction term from $\mathcal{\hat{H}}_{TC}$ on the bases defined in equation \eqref{eq:lamb_basis}. \fi The Lamb shift coupling matrix for the $(j,k)$ subspace is then given by
    \begin{align}\label{eq:coupling_matrix}
        \nonumber L(j,k) &= \sum_{\alpha=1}^{n_{j,k}} l_{\alpha}(j,k)\bigg(\ketbra{\alpha_{j,k}}{(\alpha+1)_{j,k}}\\
        &\hspace{1.5cm}+ \ketbra{(\alpha+1)_{j,k}}{\alpha_{j,k}}\bigg),
    \end{align}
    where the matrix elements $l_{\alpha}(j,k)$ are given by
    \begin{align}
        \nonumber &\frac{1}{ g_0}\matrixel{\alpha_{j,k}}{\H_{int}}{(\alpha+1)_{j,k}}\\
        &= \sqrt{\big(2\alpha j - \alpha(\alpha-1)
        \big)\big(k-k_0(j)-\alpha+1\big)}.
    \end{align}
    In the above expression subscripts are only included within kets such as $\ket{\alpha_{j,k}}$, while $\alpha$ itself is a scalar.\fi Some properties of note for these $L(j,k)$ includes that all odd moments of the eigenvalues are zero, while the second moment has an analytic expression, and has expressions for bounds on the largest eigenvalue \cite{gunderman2022lamb}.
    
    The index $j$ runs from $N/2$ to $0$ ($1/2$) when $N$ is even (odd). Each angular momentum space is of dimension $2j+1$. The degeneracy of the subspace with total angular momentum $j$ on $N$ spins is given as
    \begin{equation}
        d_j = \frac{N! (2j+1)}{(N/2 - j)!(N/2+j+1)!}.
    \end{equation}
    That is, there are $d_j$ disjoint angular momentum subspaces with total angular momentum $j$ present in a direct sum decomposition of $\big(\C^2\big)^{\otimes N}$ \cite{wesenberg2002mixed}.
    
    The eigenvalues for one of these coupling matrices may be computed in $O(|\mathcal{B}_{j,k}|\log |\mathcal{B}_{j,k}|)$. \if{false}The eigenvectors take $O(|\mathcal{B}_{j,k}|^2)$.\fi Lastly, the degeneracies have a region of strong support of $O(\sqrt{N})$ about $j^*$, the most degenerate angular momentum subspace, where $j^*\approx \sqrt{N}/2$. This means that the vast majority of the degeneracies lay within this region. These observations aide in our analysis. To avoid particular opacity in the main text, we relegate rigorous arguments and computation details to the appendices.
    

\section{Temperature domain and $g_0=0$ case}

In this section we begin by providing the solutions to the uncoupled versions of the partition function and average energy, so that they may be used as references with which we compare the coupled TC model. Notably, the uncoupled case has an analytic solution and so there are no regimes to worry about, whereas the coupled case requires differing approaches depending on the regime. The movement between these regimes is shown, where one is degeneracy dominated and the other is Boltzmann limited, which provides the foundation for our further results as we focus on the rich degeneracy dominated regime.

\subsection{$g_0=0$, or uncoupled, case}

Throughout this work, we will compare our results with those which would be expected from an uncoupled cavity-spin ensemble system. The uncoupling can be due to $g_0=0$, or having a sufficiently large difference between the coefficients $\omega_s$ and $\omega_c$. We take the $g_0=0$ case so the expressions more closely match those we will obtain later in this work. In this case the reference Hamiltonian is given by $H_0=\omega_0(a^\dag a+J_z+\frac{N}{2}\openone)$ which has a partition function, as shown in appendix \ref{g0}, given by:
    \begin{equation}\label{z0}
        Z_0=(1-e^{-\beta_c\omega_0})^{-1}(1+e^{-\beta_s\omega_0})^N,
    \end{equation}
where $\beta_c=T_c^{-1}$ is the inverse temperature of the cavity, while $\beta_s=T_s^{-1}$ is the inverse temperature of the spin system. This has an average energy of
\begin{equation}\label{z0energy}
        \langle E\rangle_{0}=\omega_0 (e^{\beta_c \omega_0}-1)^{-1}+N\omega_0 \frac{e^{-\beta_s\omega_0}}{1+e^{-\beta_s\omega_0}}.
    \end{equation}

This analytical solution holds regardless of the temperature of each of the uncoupled subsystems and could even permit field inhomogeneities for the spins. These equations themselves are of minimal interest on their own, but can help illuminate the differences in the system upon introducing the coupling Hamiltonian and considering the total Tavis--Cummings model. As no analytical solution like equation (\ref{z0}) is known for the Tavis--Cummings model, in the next subsection we consider the various temperature regimes that can be of interest in the Tavis--Cummings model and specify which regime we focus on in the remainder of this work.

\subsection{Low $T$, minimal Dicke}

Of all the angular momentum subspaces, quite possibly the easiest one to say much about is the Dicke space, which is the completely symmetric subspace. This subspace has particularly nice eigenstates and has been studied extensively \cite{garraway2011dicke,holstein1940field,roses2020dicke}. However, this subspace is only singularly degenerate, $d_{N/2}=1$, while for larger $N$ all other angular momentum subspaces have significantly greater degeneracies. Even though the Dicke space contains the unique ground state with $k=0$ excitations there is a temperature at which the exponential suppression due to the Boltzmann factor is no longer significant compared to the vast multiplicity of other angular momentum subspaces and the Dicke space is left with minimal population. This regime shift temperature is, as we shall argue, a small temperature, meaning that if one is not below this value, it is likely not correct to describe the system as being within the Dicke space. For typical parameters this occurs for temperatures between $10$ mK and $300$ mK for small systems of $10^{3}$ spins, above which the Dicke population is minimal.

As a short, simplified argument for the expression for this regime change temperature we consider the following reduced picture. Let $k\leq 1$ and let the system have $N$ spins. In this case only the first two levels in the Dicke space are populated and the $j=\frac{N}{2}-1$ angular momentum space's lowest energy state is populated. Figure \ref{subcut} shows this subspace of the full spectrum. The partition function due to the Dicke subspace is given by $1+e^{-\beta(\omega_0-g_0\sqrt{N})}+e^{-\beta(\omega_0+g_0\sqrt{N})}$, while that due to $j=\frac{N}{2}-1$ is $(N-1)e^{-\beta\omega_0}$. Then
\begin{widetext}
\begin{equation}
    p(Dicke)=\frac{1+e^{-\beta(\omega_0-g_0\sqrt{N})}+e^{-\beta(\omega_0+g_0\sqrt{N})}}{1+e^{-\beta(\omega_0-g_0\sqrt{N})}+e^{-\beta(\omega_0+g_0\sqrt{N})}+(N-1)e^{-\beta\omega_0}}.
\end{equation}
\end{widetext}

\begin{figure}[t]
    \centering
    \includegraphics[scale=0.3]{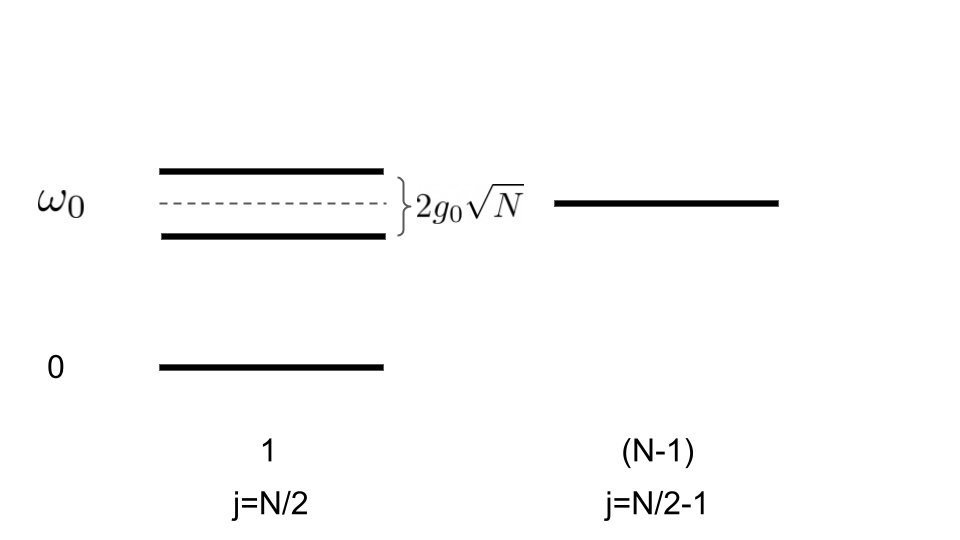}\label{subsection}
    \caption{This shows the portion of the energy spectrum considered for the regime change argument.}
    \label{subcut}
\end{figure}

\if{false}
\begin{equation}
    p(Dicke)< \frac{Z_{Dicke}}{Z_{Dicke}+(N-1)e^{-\beta\omega_0}}
\end{equation}
We will truncate $Z_{Dicke}$ as $1+2e^{-\beta\omega_0}\cosh(\beta g_0\sqrt{N})$, where we have assumed that $e^{-2\beta\omega_0}\ll 1$ and that due to the rotating-wave approximation the next set of Lamb shifts are far smaller than $\omega_0$. Then the probability of being within the Dicke (totally symmetric) subspace is bounded by:
\begin{equation}
    \frac{1+2e^{-\beta\omega_0}\cosh(\beta g_0\sqrt{N})}{1+e^{-\beta\omega_0}(N-1+2\cosh(\beta g_0\sqrt{N}))}
\end{equation}
\fi
This is only close to $1$ when $1+e^{-\beta\omega_0}2\cosh(\beta g_0\sqrt{N})> (N-1)e^{-\beta\omega_0}$. Since $g_0\sqrt{N}\ll \omega_0$ due to the rotating-wave approximation, we may approximate this requirement with $e^{\beta\omega_0}> (N-1)$, which provides $T< \omega_0/\log(N-1)$, with $\log$ being the natural log function throughout this work. Since $N$ is typically large we will take this as 
\begin{equation}
T< \omega_0/\log N.
\end{equation}
If further subspaces are included in the denominator this value can only decrease, however, as we show in appendix \ref{regimes}, this asymptotic expression is still correct.

\begin{center}
\begin{table}
\begin{tabular}{||c| c||} 
 \hline
  & $N_c$  \\ [0.5ex] 
 \hline\hline
 $T=10$ mK & $6.96\cdot 10^{20}$  \\ 
 \hline
 $T=100$ mK & $121$  \\ 
 \hline
 $T=300$ mK & $4$  \\ 
 \hline
 $T=1$ K & $1$ \\
 \hline
\end{tabular}
\caption{$N_c$ is the value of $N$ such that $N<e^{\hbar\omega_0/(k_b T)}$, so that mostly the Dicke subspace is populated. The above table uses $\omega_0=20\pi $ GHz. It means that for $T$ at and above $1$ K the population will not be predominantly in the Dicke space, regardless of the number of spins in the system. When $T=100$ mK, once the ensemble involves around $120$ spins the system is minimally in the Dicke subspace, meaning that for mesoscopic systems population in the Dicke subspace will be minimal.}
\end{table}
\end{center}

The rest of this paper focuses on the regime where $T>\omega_0/\log N$, but the system is still within the rotating-wave approximation. See Figure \ref{dosfig} for a pictorial representation of this regime. In this higher temperature regime rich structure and dynamics are exhibited, which will be explored in detail throughout this work. While the consideration of spin ensembles in the "high temperature approximation" is reminiscent of this result, those results sometimes use simplifications not appropriate for our system. For instance in Slichter, it is assumed that the interactions between the spins is effectively \textit{local}, while here we inherently require the system to be \textit{collective} in its behavior \cite{slichter2013principles}.

\begin{figure}[t]
    \centering
    \includegraphics[scale=0.25]{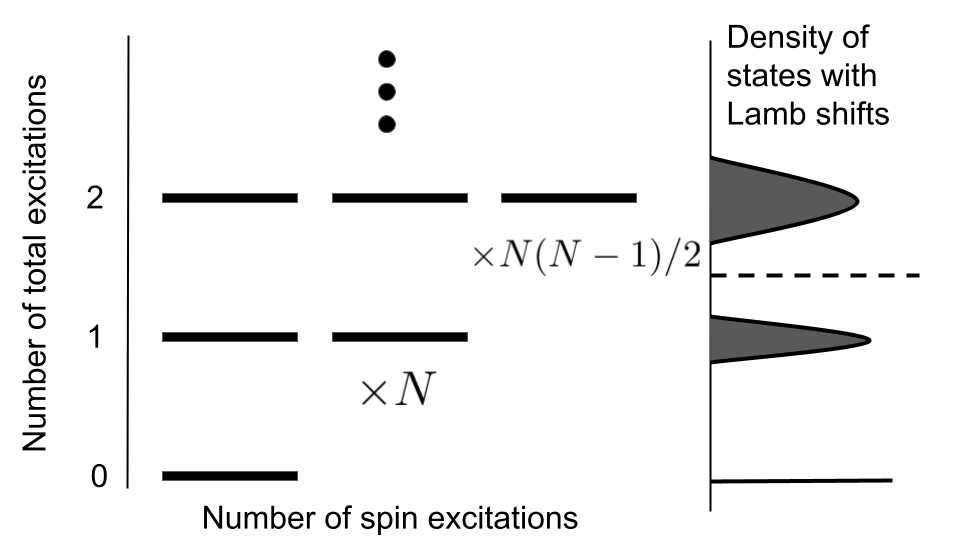}
    \caption{This figure shows schematically what the density of states looks like for an ensemble of spins coupled with a cavity. The rotating-wave approximation holds so long as the envelope of states for two different total number of excitations do not overlap.}
    \label{dosfig}
\end{figure}

\subsection{Higher $T$}

Next, we consider $T>\omega_0/\log N$ and what happens as $T$ is slowly increased. The analytic continuation of the function for the degeneracy of the angular momentum subspaces is a unimodal function with a peak at $j^*=\sqrt{N}/2-1/2$ and with the vast majority of the probability mass between $0$ and $O(\sqrt{N})$. To gain insights into the behavior of the system as the temperature increases, let us consider the population ratio of adjacent angular momentum subspaces. In particular, for some temperature $T$ we consider the summation over $k$, resulting in a population ratio of:
\begin{eqnarray}
    \frac{p(j)}{p(j+1)}&=&\frac{d_j(e^{-\beta \omega_0}+2e^{-2\beta\omega_0}+3e^{-3\beta\omega_0}+\cdots)}{d_{j+1}(1+2e^{-\beta\omega_0}+3e^{-2\beta\omega_0}+\cdots)}\\
    &\approx &\frac{d_je^{-\beta\omega_0}}{d_{j+1}},
\end{eqnarray}
where for this consideration we use that the Lamb shift splittings are small, $g_0 \max \Lambda (j,k)\ll \omega_0$ so that the rotating-wave approximation holds, and so neglect them here. While one of the above geometric series terminates a term earlier than the other, the size of that term is near zero as $xe^{-ax}$ tends to zero rapidly. More generally the ratio of the populations in angular momentum space $j$ and the Dicke space is given, up to the negligible geometric series terms, by
\begin{equation}
    \frac{p(j)}{p(N/2)}=d_je^{-\beta\omega_0(N/2-j)},
\end{equation}
which means that this subspace has more population when  $d_je^{-\beta\omega_0(N/2-j)}>1$, which is when $T>\omega_0 (N/2-j)/\log d_j$. This expression is a monotonic function in $j$ which very slowly increases to a maximal value of $\omega_0/(2\log 2)$, as shown in appendix \ref{ratio}. So this regime change temperature in the prior subsection is only a value whereby the population in the Dicke subspace is small, but it is not sufficient to say that further angular momentum subspaces are predominantly populated. To continue this result further, consider the ratio of the populations in adjacent angular momentum values once again, which provides:

\begin{equation}
    \frac{d_je^{-\beta\omega_0}}{d_{j+1}}>1\Rightarrow T> \omega_0/\log (d_{j+1}/d_j).
\end{equation}
The denominator changes sign for $j<j^*$, meaning that there is no temperature where these angular momentum values have a greater probability. However, for $j>j^*$ there will be more population in the smaller angular momentum space once the temperature is sufficiently large, then the width of $j^*$ to $O(\sqrt{N})$ ought to be considered to capture the dominant population. The smallness in the ratio is not sharp though, meaning that a sliding collection of angular momentum values must be considered. This is illustrated in Figure \ref{ROSSfigure}. Formally the rectangle of what must be considered slides to the left (towards lower $j$ values) as the temperature increases, however, since the region left of $j^*$ is small, the entire thing may be considered without an increase in computational time. Since the ratio is not sufficiently small for most temperatures, we do not know that the subset of angular momentum values which must be considered is asymptotically smaller than $N$. While on the one hand taking these sums over $k$ would reduce the temperature dependence in the runtime to simulate these systems, the dependency on $N$ is not particularly good, especially once the Lamb shifts are considered. In the next section we consider the impact of including the Lamb shifts on the partition function as well as other expectations.

\begin{figure}[t]
    \centering
    \includegraphics[scale=0.30]{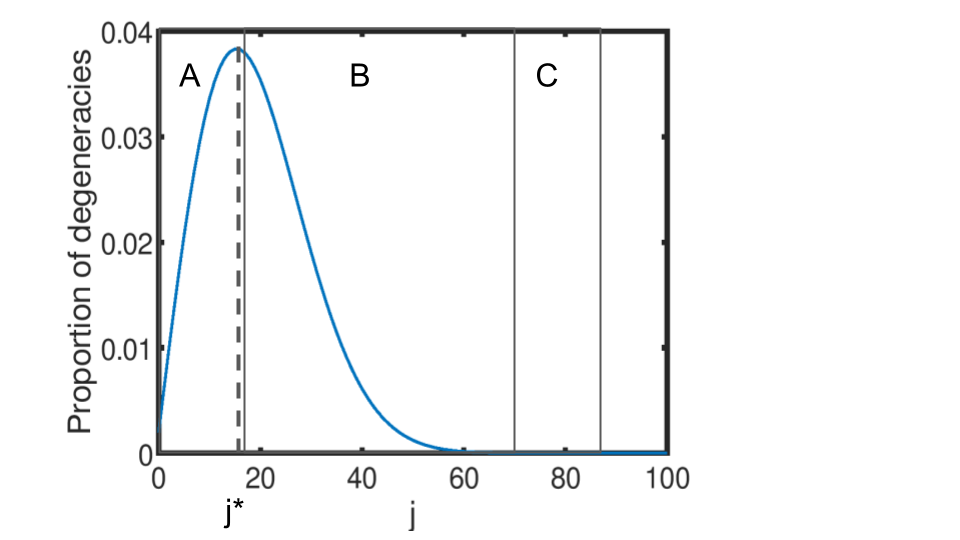}
    \caption{For the temperature regime considered, the boxes labelled $B$ and $C$ (right of the $j^*$ line) are the only subspaces with majority population so long as $T<\omega_0/(2\log 2)$. Above this temperature the boxes labelled $A$ and $B$'s angular momentum subspaces must be included. The width is $2\sqrt{N}$ here for both $A+B$ and $B+C$. Plotted for $N=1000$, zoomed in to the $j\leq 100$ subset.}
    \label{ROSSfigure}
\end{figure}



\section{Perturbative Expansions and shifts in expectations}

In the prior section we neglected the effects of the Lamb shifts as they were considered small in comparison to the excitation energy splitting of $\omega_0$. We now include the effects of the Lamb shifts generating a perturbative expansion that can be carried to the desired precision. This provides a more tractable way to describe the system and allows for easier qualitative understanding of it, as well as leading to a faster simulation runtime.

As the system involves many spins, we take the system as an ensemble, using a density matrix and obeying the canonical Boltzmann distribution for populations. In this statistical mechanics setting, knowing the partition function, and its rate of change with respect to parameters, is of central importance as it provides many experimentally accessible values. For the Tavis--Cummings model the partition function is given by:

\begin{equation}
        Z_{total}(\beta) = \sum_{k=0}^\infty \sum_{j=j_0(k)}^{N/2}\bigg( d_j e^{-\beta k \omega_0} \sum_{\lambda\in\Lambda(j,k)} e^{-\beta \lambda g_0}\bigg),
    \end{equation}
    where we have defined
    \begin{equation}
        j_0(k) = \max \lbrace N/2-k, \ N/2 - \floor{N/2}\rbrace.
    \end{equation}
    Upon Taylor expanding and using our results on the parity of eigenvalues we may write  $Z_{total}(\beta)=Z_0(\beta)+Z_{pert}(\beta)$. In this expression $Z_0(\beta)$ is the partition function of the system with $g_0=0$, which is exactly solved in equation (\ref{z0}). $Z_{pert}(\beta)$ represents the perturbative expansion portion of the total partition function. Throughout the work we will truncate $Z_{pert}(\beta)$ to only involve the leading term, however, further terms may be included and their structure is discussed in appendix \ref{pertexp}. In this expansion we have:
\begin{equation}
        Z_0(\beta)=\sum_k e^{-\beta k\omega}\sum_j d_j|\mathcal{B}_{j,k}|,
    \end{equation}
    which is the same as equation (\ref{z0}), and
    \begin{equation}
        Z_{pert}(\beta)= \frac{(\beta g_0)^2}{2}\sum_k e^{-\beta k\omega}\sum_j d_j|\mathcal{B}_{j,k}| Var(\Lambda(j,k)),
    \end{equation}
    where we have dropped the higher order terms\footnote{By definition, $|\mathcal{B}_{j,k}| Var(\Lambda(j,k))=tr(L(j,k)^2)$}. While $Z_0$ corresponds to the case of $g_0=0$ it can also be interpreted as the spin ensemble undergoing excitation swaps with the cavity zero times, while $Z_{pert}$ is induced from a double-quantum transition (transferring to the cavity and back, or vice versa) with the next term being a quadruple-quantum transition, and so forth. In effect these perturbative expansion terms provide the correlations induced in the system from the coupling. The strength of these higher order transitions can be neglected compared to the leading perturbative term so long as $\frac{1}{2}(\beta g_0)^2Var(\Lambda(j,k))<1$ for all $j,k$ with appreciable population, where $Var(\Lambda(j,k))=\frac{1}{|\mathcal{B}_{j,k}|}\sum_{\lambda\in \Lambda(j,k)}\lambda^2$ and has an analytic expression \cite{gunderman2022lamb}. This can be used as a guide for the size of the spin ensemble needed in order to see the effects of the coupling with the cavity, or to know that this perturbative expansion may be truncated as early as it is presented here. 
    
    This expansion for the partition function can be used to determine the shifts induced in certain expectations. In principle any expression of the partition function may be computed, but to illustrate the utility, let us consider the shift in the Helmholtz free energy of the system. \if{false}This is one of the few macroscopic thermodynamic properties that may be computed directly from the quantum model, without specification of other variables.\fi For a closed system at thermal equilibrium we have:
    \begin{eqnarray}
        \nonumber -\beta \Delta\langle A\rangle&:=& -\beta(\langle A\rangle_{Z_{total}}-\langle A\rangle_{Z_0})\\
        \nonumber &=& \log(Z_0+Z_{pert})-\log Z_0\\
        \nonumber &=& (\log Z_0+\log(1+Z_{pert}/Z_0))-\log Z_0\\
        &= & Z_{pert}/Z_0+O((Z_{pert}/Z_0)^2).\label{helmz}
    \end{eqnarray}
    
As this is a negative value this means that the introduction of the coupling has slightly reduced the amount of free energy in the system, which agrees with intuition as for each excitation a larger portion of the population will reside within the dressed states with the most negative Lamb shift values. A plot of $-\beta\Delta\langle A\rangle$, to first order, is shown in Figure \ref{zratio}, numerically computed as a function of $N$. Over one order of magnitude the ratio is a linear function, so assuming a similar linear trend continues for larger $N$ a system with around $10^{16}$ spins results in a ratio of $1/100$, leading to an experimentally noticeable shift. For differing temperatures a similar trend occurs, with varying slopes.
\if{false}
    \begin{figure}[t]
    \centering
    \includegraphics[scale=0.45]{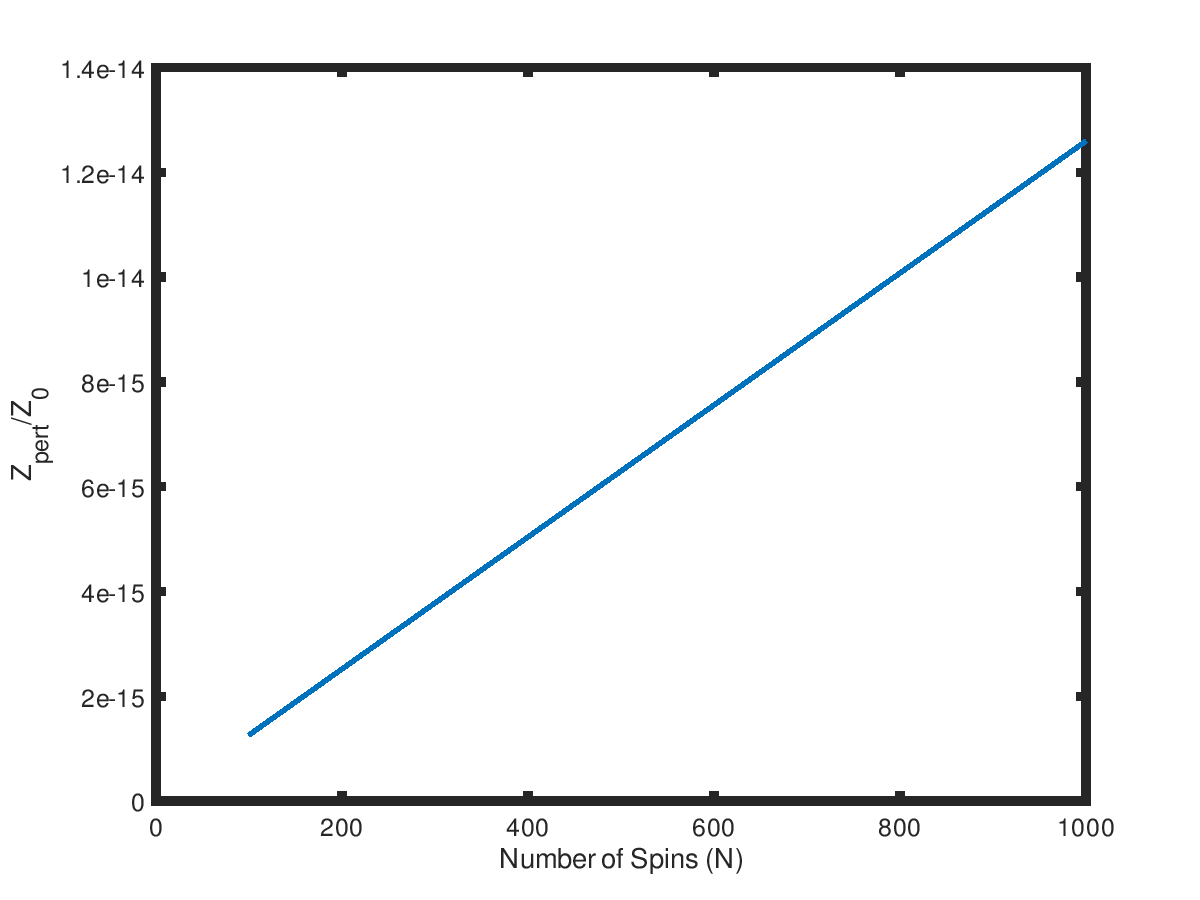}
    \caption{Figure of $Z_{pert}/Z_0$ as a function of $N$ with $N$ in $100$ to $1000$, $T=0.3$ K, $g_0=100$ Hz, $\omega_0=10$ GHz.}
    \label{zratio}
\end{figure}
\fi
    \begin{figure}[t]
    \centering
    \includegraphics[scale=0.40]{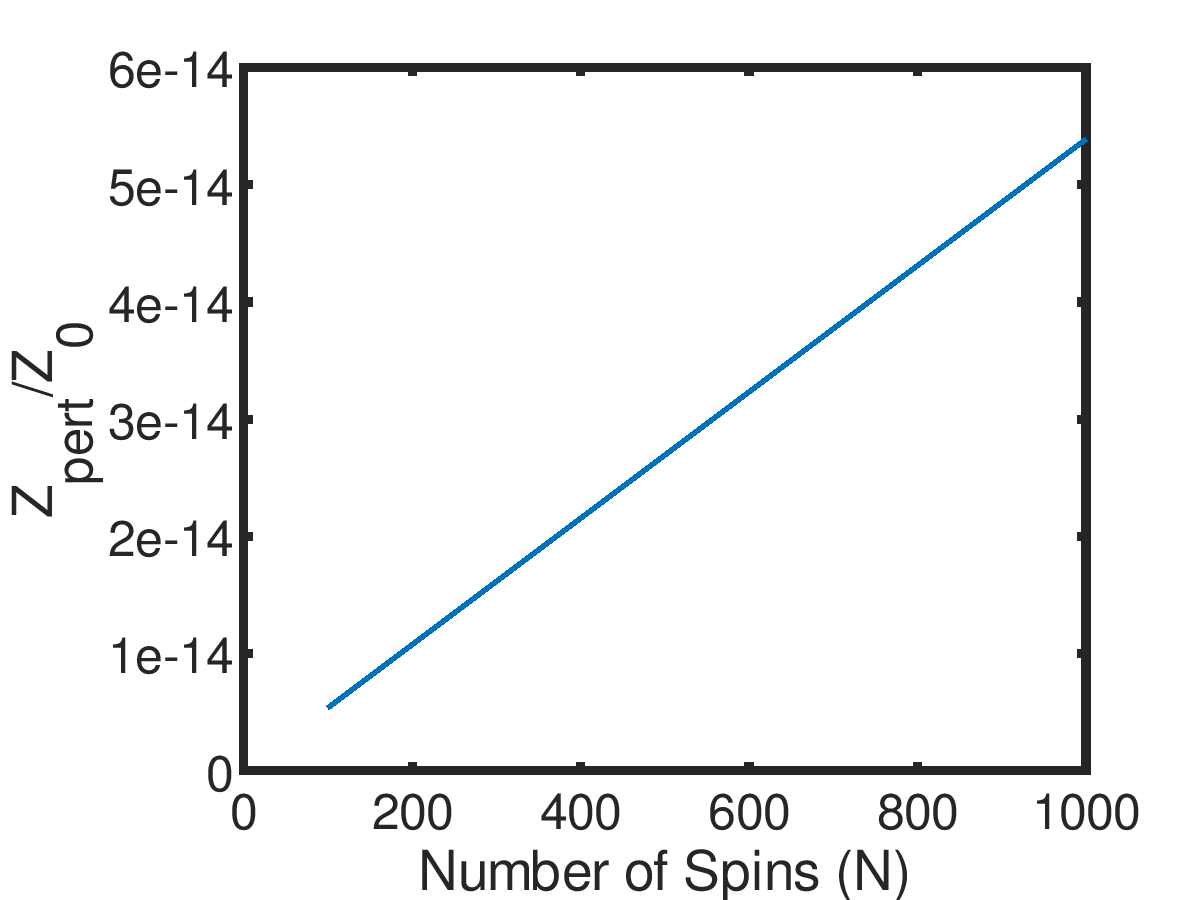}
    \caption{Figure of $Z_{pert}/Z_0$ as a function of $N$ with $N$ in $100$ to $1000$, $T=0.3$ K, $g_0=200\pi $ Hz, $\omega_0=20\pi$ GHz. Equation (\ref{helmz}) states this is proportional to the first order term in the shift of the Helmholtz free energy due to the coupling: $-\beta\Delta\langle A\rangle$.}
    \label{zratio}
\end{figure}

Performing a similar analysis, see appendix \ref{epert}, one finds that the average energy shift in the system is given by:
\begin{eqnarray}
    \Delta \langle E\rangle &:=& \langle E\rangle_{Z_{total}}-\langle E\rangle_{Z_0}\\
    &\approx &((\langle k\rangle_{Z_{pert}}-\langle k\rangle_{Z_0}) \omega_0-\frac{2}{\beta})\frac{Z_{pert}}{Z_0}.
\end{eqnarray}
In the above $\langle k\rangle_{Z_{pert}}$ is the expectation of $k$ over the normalized distribution for $Z_{pert}(k)$, and $\langle k\rangle_{Z_0}$ is the same but for $Z_0(k)$, the same as equation (\ref{z0energy}). This expression for the shift in average energy is also always negative, again matching intuition that having lower-energy dressed states to populate means that the average energy of the system will be decreased.

While these analytical results are of some interest and use, they are somewhat limited in practical utility. The primary problem is that they still require evaluation of summations to obtain values. In the next section we show that this perturbative expansion, paired with the reduced subset of $j$ values needing to be considered, may be used to provide a drastic speedup in simulation runtimes of this system.

\section{Rapid Simulation of Thermal Observables}

In the study of mesoscopic systems it is rare to obtain analytical results for physically meaningful systems due to the complexity of such systems. Simulating mesoscopic systems often requires a trade-off between precision and speed, at times using heuristics to bound the accuracy of such methods. Using the tools shown here, we may simulate the Tavis--Cummings model with arbitrarily small error and in optimal runtime dependence on the size of the ensemble. While this work only provides the speedup for the Tavis--Cummings model, a similar methodology might be useful for other systems exhibiting collective behaviors such as a pair of spin ensembles interacting via mediating fields.

Before diving into the new results, let us consider the naive approach for computing thermal functions. Assuming one uses the decomposition of the Hamiltonian as a direct sum of spaces labelled by good quantum numbers $j$ and $k$, this task would require finding the eigenvalues of the terms in the direct sum. These are, up to, $(N+1)\times (N+1)$ matrices, which takes time $O(N^2)$ to solve and there are $O(N)$ of them for each value of $k$, lastly these eigenvalues must be found for each $k$ of importance, for which there are $O(T)$ of them. Putting these together the total complexity is $O(TN^3)$. A slight improvement can be obtained by using the faster eigenvalue solutions for tridiagonal matrices, but this only reduces the problem to $O(TN^2\log N$) generally \cite{coakley2013fast}.

To illustrate our reduction we will consider the computation of the partition function's two components $Z_0$ and $Z_{pert}$ for some chosen temperature. $Z_0$ takes $O(1)$ time since it has an analytic expression, so the focus will be on $Z_{pert}$. To compute $Z_{pert}$ we must perform the sums in:
\begin{widetext}
\begin{equation}
\sum_k e^{-\beta k\omega_0 } \sum_j d_j \left[
    \frac{(\beta g_0)^2}{2}\sum_{\lambda\in \Lambda(j,k)} \lambda^2+\frac{(\beta g_0)^4}{4!} \sum_{\lambda\in \Lambda(j,k)} \lambda^4+\ldots\right].
\end{equation}
\end{widetext}

While one could solve for the eigenvalues $\lambda$ in the above by finding the roots of the characteristic polynomial for each $L(j,k)$, this can be circumvented by computing traces. In general the cost of computing the trace is $O(N)$ as the coupling matrices may be as large as $(N/2)\times (N/2)$. If, however, we truncate after the first term, the time to compute $\sum \lambda^2$ is only $O(1)$ since we have an analytic expression for this in terms of $j$, $k$, and $N$. This is a valid approximation so long as $\frac{(\beta g_0)^2}{2}\sum \lambda^2< 1$. This removes two full powers of $N$ dependence.

Following this, due to the temperature regime that we are in only a small subset of $j$ values contribute much to the summation. From the region of strong support for the degeneracies, we see that $\Theta(\sqrt{N})$ angular momentum values must be considered assuming a selected fractional error $\delta$ is permitted, as shown in appendix \ref{ndepen}--this collection of $j$ values which must be considered are illustrated in Figure \ref{jvals}. Lastly, the outermost summation is bounded by a geometric series with rate $\beta\omega_0$, so using $O(T)$ terms suffices to capture the majority of the probability density. While $O(T)$ suffices, this bound is of somewhat limited use since $T$ will eventually become large enough that the rotating-wave approximation breaks down for a notable number of populated states. In total, however, this gives a runtime of $O(T\sqrt{N})$. Note that this region of strong support is known to be tightly $\Theta(\sqrt{N})$ meaning that this runtime is \textit{optimal} scaling in the parameter of $N$. An important caveat to this discussion is that this result only applies for the degeneracy dominated regime, such that $O(\sqrt{N})$ excitations exist in the system, below that it is best to apply brute-force methods due to the comparative smallness of the problem.

If one wished to include any further perturbative terms, the trace must be computed, which at worst will take $\Theta(N)$ time since there are values of $j$ and $k$ with that dimension. The subset of $j$ values that must be considered still remains $\Theta(\sqrt{N})$. This generates a dependency of $\Theta(N^{3/2})$ for including additional terms. If one wished to compute the Lamb shifts themselves, or achieve machine precision, this raises the cost to $O(N^{3/2}\log N)$, which is a modest increase for the additional information gained.

    \begin{figure}[th]
    \centering
    \includegraphics[scale=0.25]{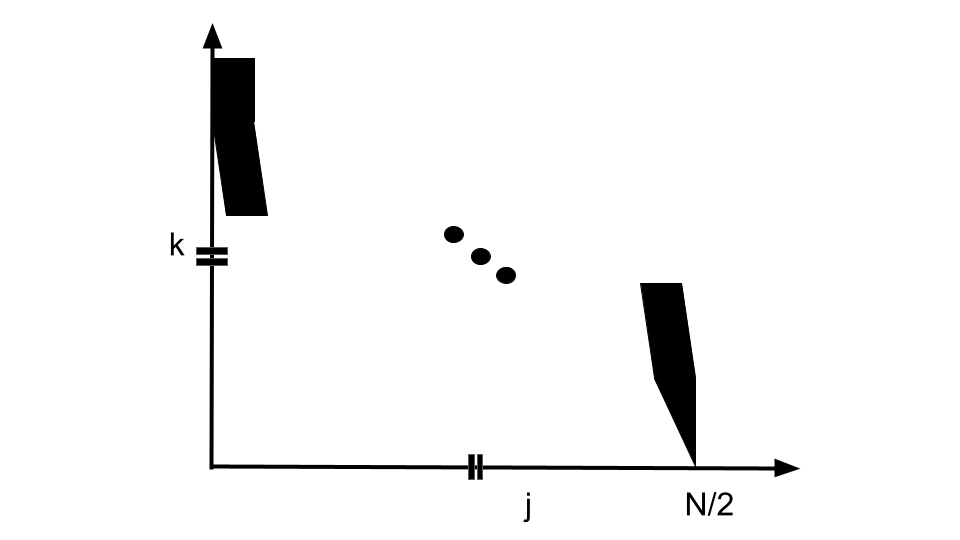}
    \caption{Schematic figure showing $j$ values that need summing over for any given $k$ choice. The width of each shape is $\Theta (\sqrt{N})$, aside from the tapered tip of the triangle, on the lower right portion of the diagram, for which there are insufficiently many angular momentum values available, so all of them are used there.}
    \label{jvals}
\end{figure}

While the above showed the rapidity of computing a scalar, the partition function, our reductions also permit the rapid simulation of some time evolving operators as well. For an observable $\mathcal{O}$ the structure of the summation, with the good quantum numbers $j$ and $k$, is preserved so long as the operator only depends on $j$ and $k$, as well as some other parameter that can be summed out. Any function of the cavity and collective spin operators, which includes most observables of interest, satisfy this requirement. While these may be simulated using the same summation, the same rapidity cannot be promised without one more requirement, which is that the operator $\mathcal{O}$ is a subexponential function in $j$ and $k$. \if{false}--some examples of subexponentially growing functions include $J_x$, $e^{itJ_z}$, and $e^{(a^\dag a^\dag- aa) it}$.\fi When $\mathcal{O}$ is subexponential the region of strong support in the degeneracies and the geometric series bound on excitations can be applied, otherwise these break down and the full summation must be considered.

Naturally, if one wanted to generate a histogram of the populations as a function of $k$ or $j$ the respective summation can be suspended and instead used to generate the points. This is used to show the populations of the terms in the partition function for various temperatures in Figure \ref{zgang}. In this figure we see that the population's center drastically shifts as we move into the degeneracy dominated regime ($T=0.3K$), additionally the width of the distribution increases, as expected.

A summary of the algorithmic improvement results are presented in the table below:

\begin{widetext}

     \begin{center}
     \begin{table}[thb]\label{speed}
\begin{tabular}{||c| c| c| c| c||} 
 \hline
 Error & $O(g_0^4)$ & $O(g_0^{2c}),\ c> 2$ & Machine Precision & Naive \\ [0.5ex] 
 \hline\hline
 $Z_{pert}$, $Z_{tot}$, $\Delta\langle A\rangle$, $\Delta \langle E\rangle$ & $\Theta(\sqrt{N})$ & $\Theta(N^{3/2})$ & $O(N^{3/2}\log N)$ & $O(N^3)$ \\ 
 \hline
 $\mathcal{O}(j,k)$ (no region of strong support promise) & $O(N)$ & $O(N^2)$ & $O(N^2\log N)$ & $O(N^3)$ \\
 \hline
 $\mathcal{O}(j,k)$ (region of strong support promised) & $\Theta(\sqrt{N})$ & $\Theta(N^{3/2})$ & $O(N^{3/2}\log N)$ & $O(N^3)$ \\ [1ex]
 \hline
\end{tabular}
\caption{The above table lists the different run times needed for a chosen level of precision, as a function of $g_0$. The quickest, nontrivial level of precision is markedly faster. For comparison, the time needed in the naive approach is listed. In order to utilize the region of strong support, the observable $\mathcal{O}$ must be a subexponential function in $j$ and $k$.}
\end{table}
\end{center}

\if{false}
    \begin{figure}[h]
    \centering
    \includegraphics[scale=1]{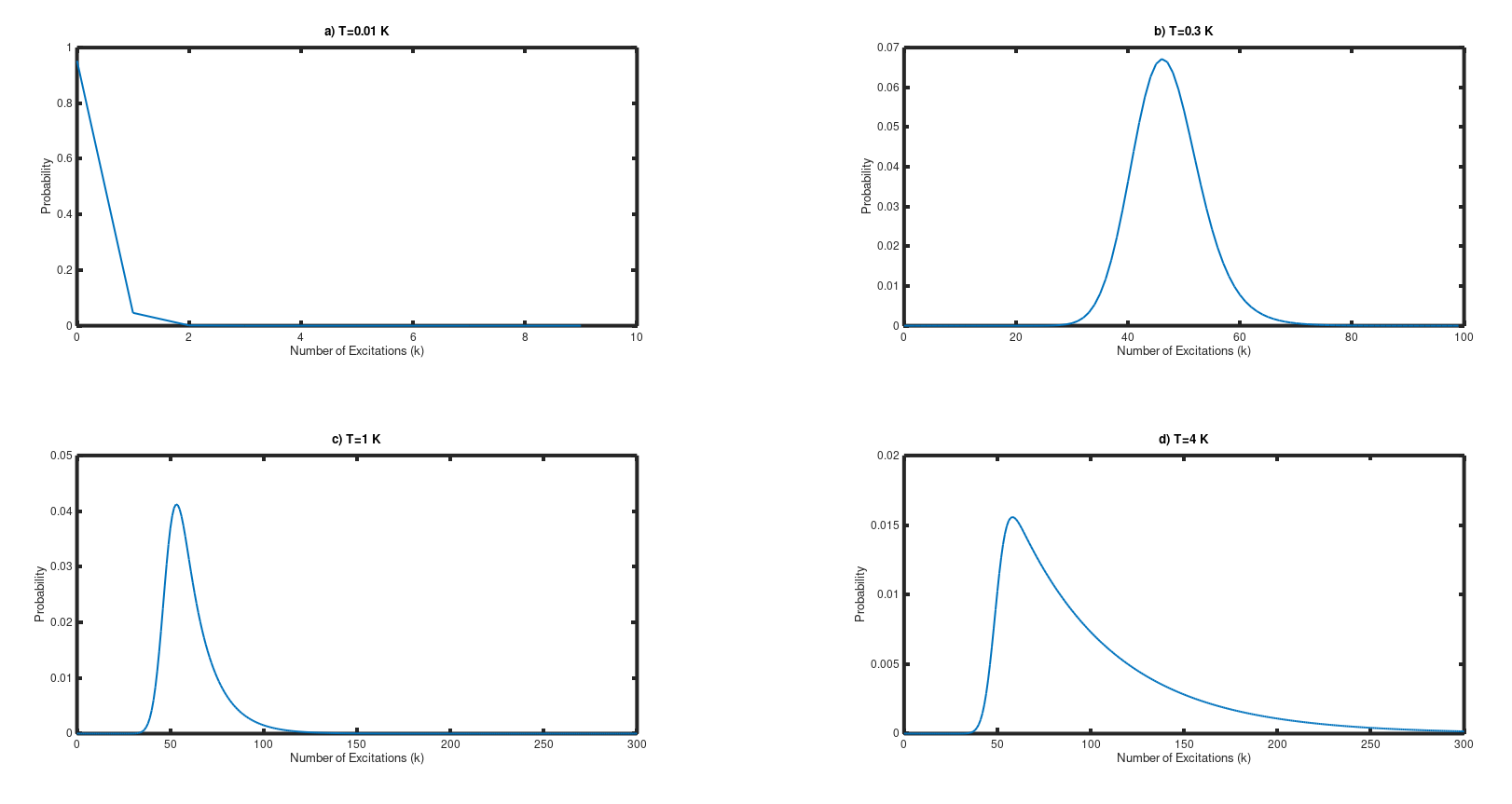}
    \caption{Figure of the distribution of the population for $Z_0$ as a function of the number of excitations in the system ($k$). These are plotted for $N=100$, $\omega_0=10$ GHz.}
    \label{zgang}
\end{figure}
\fi
    \begin{figure}[h]
    \centering
    \includegraphics[scale=0.8]{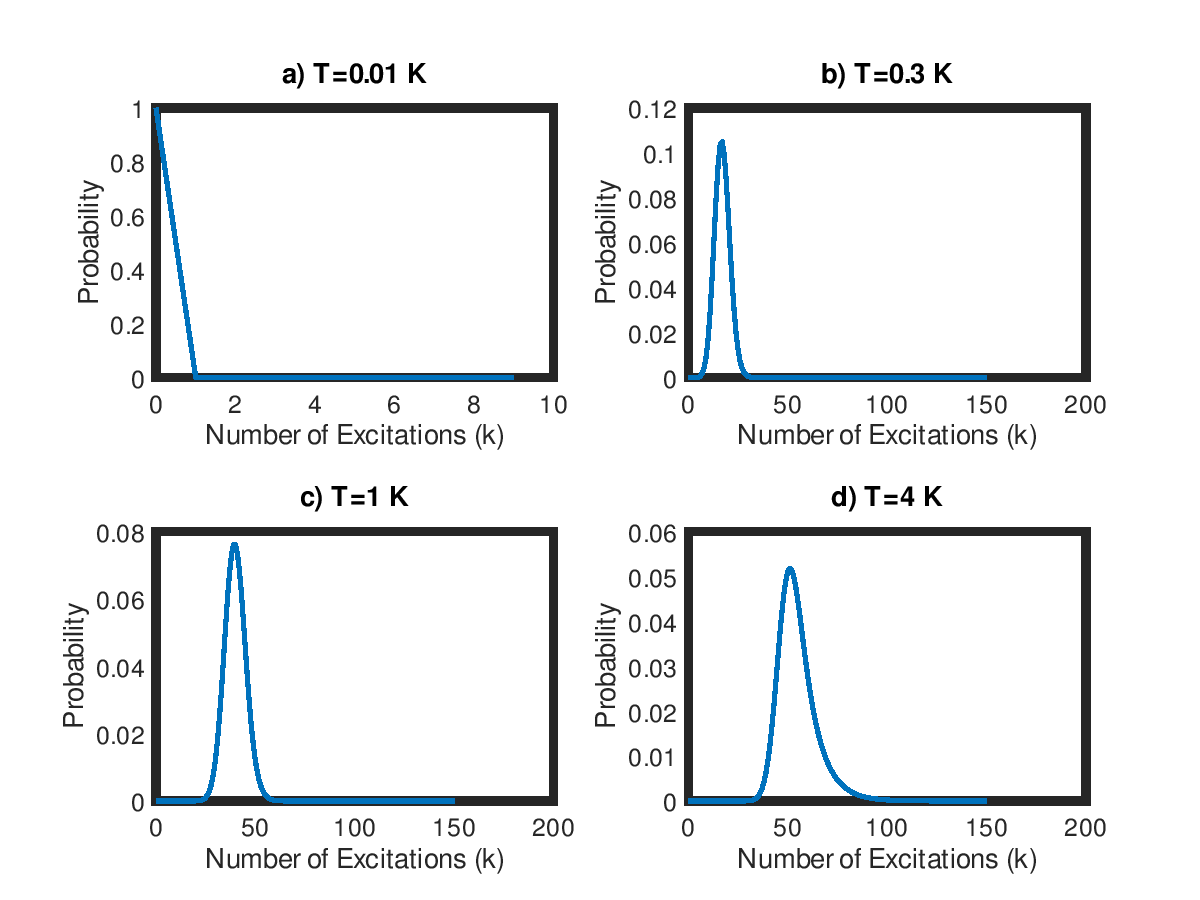}
    \caption{Figure of the distribution of the population for $Z_0$ as a function of the number of excitations in the system ($k$). These are plotted for $N=100$, $\omega_0=20\pi$ GHz.}
    \label{zgang}
\end{figure}

\if{false}
    \begin{figure}[h]
    \centering
    \includegraphics[scale=0.9]{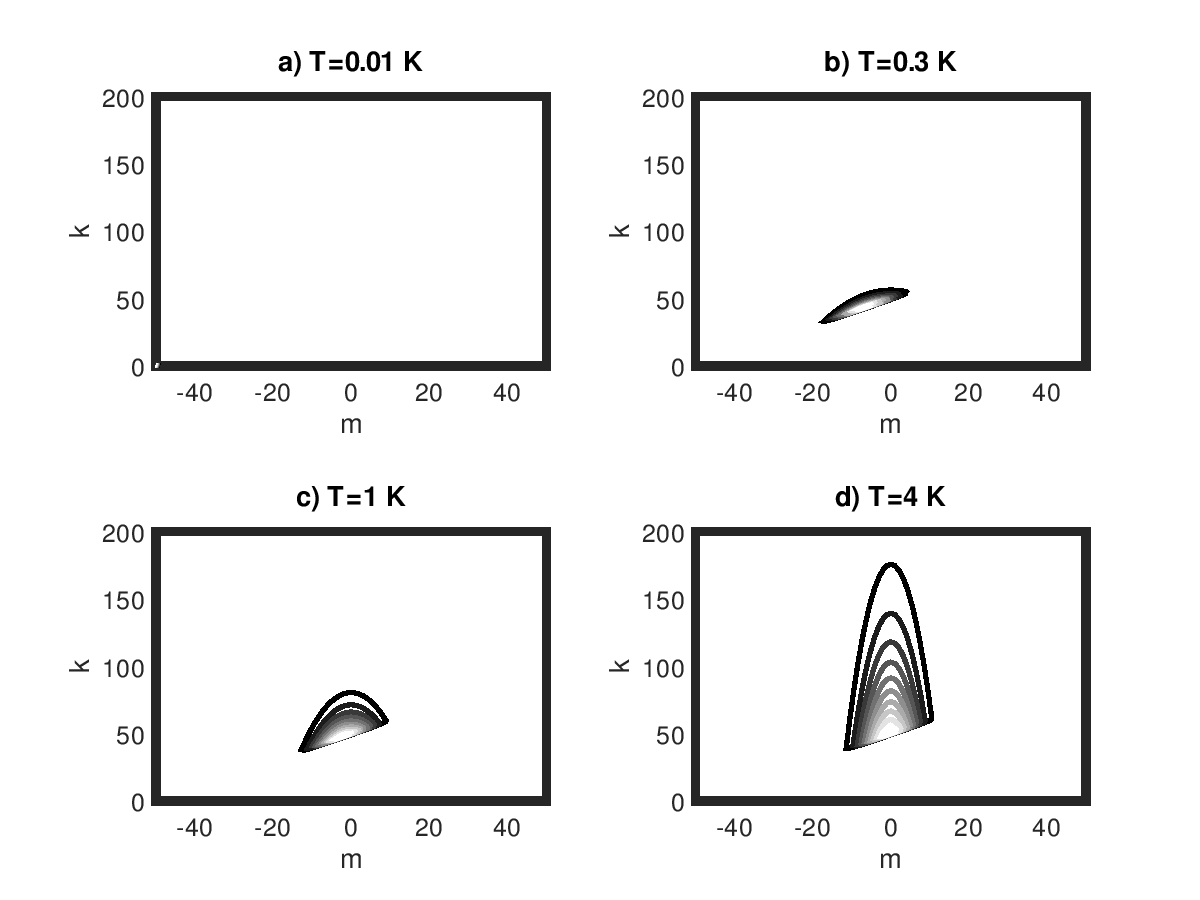}
    \caption{Figure of the distribution of the population for $Z_{total}$ as a function of the number of excitations in the system ($k$) against the secondary spin quantum number ($m$). These are plotted for $N=100$, $g_0=100$ Hz, and $\omega_0=10$ GHz. Note that for $T=0.01$ Kelvin the population is essentially in the Dicke subspace and global groundstate.}
    \label{distprobs}
\end{figure}
\fi

    \begin{figure}[h]
    \centering
    \includegraphics[scale=0.9]{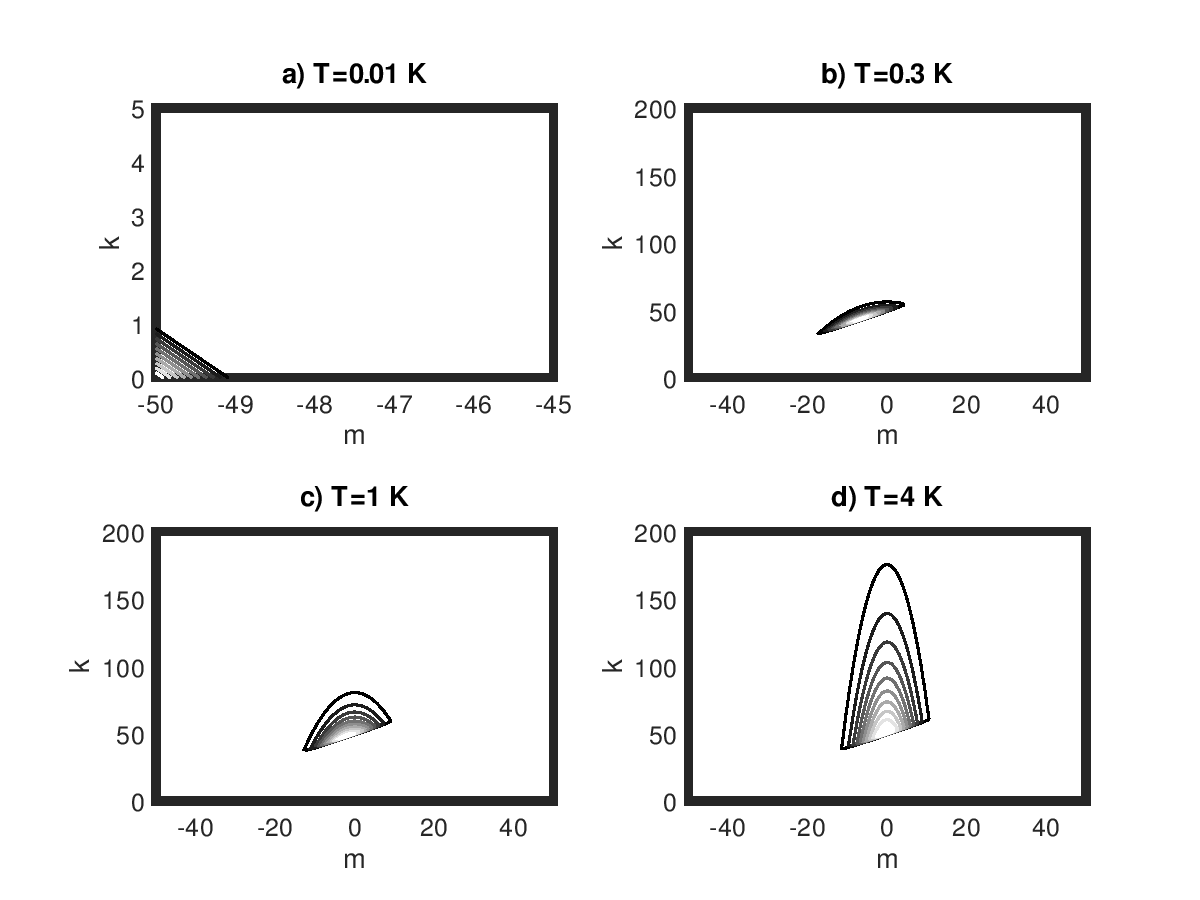}
    \caption{Figure of the distribution of the population for $Z_{total}$ as a function of the number of excitations in the system ($k$) against the secondary spin quantum number ($m$). These are plotted for $N=100$, $g_0=200\pi$ Hz, and $\omega_0=20\pi$ GHz. Note that for $T=0.01$ Kelvin the population is essentially in the Dicke subspace and global groundstate. }
    \label{distprobs}
\end{figure}

\if{false}
    \begin{figure}[h]
    \centering
    \includegraphics[scale=0.9]{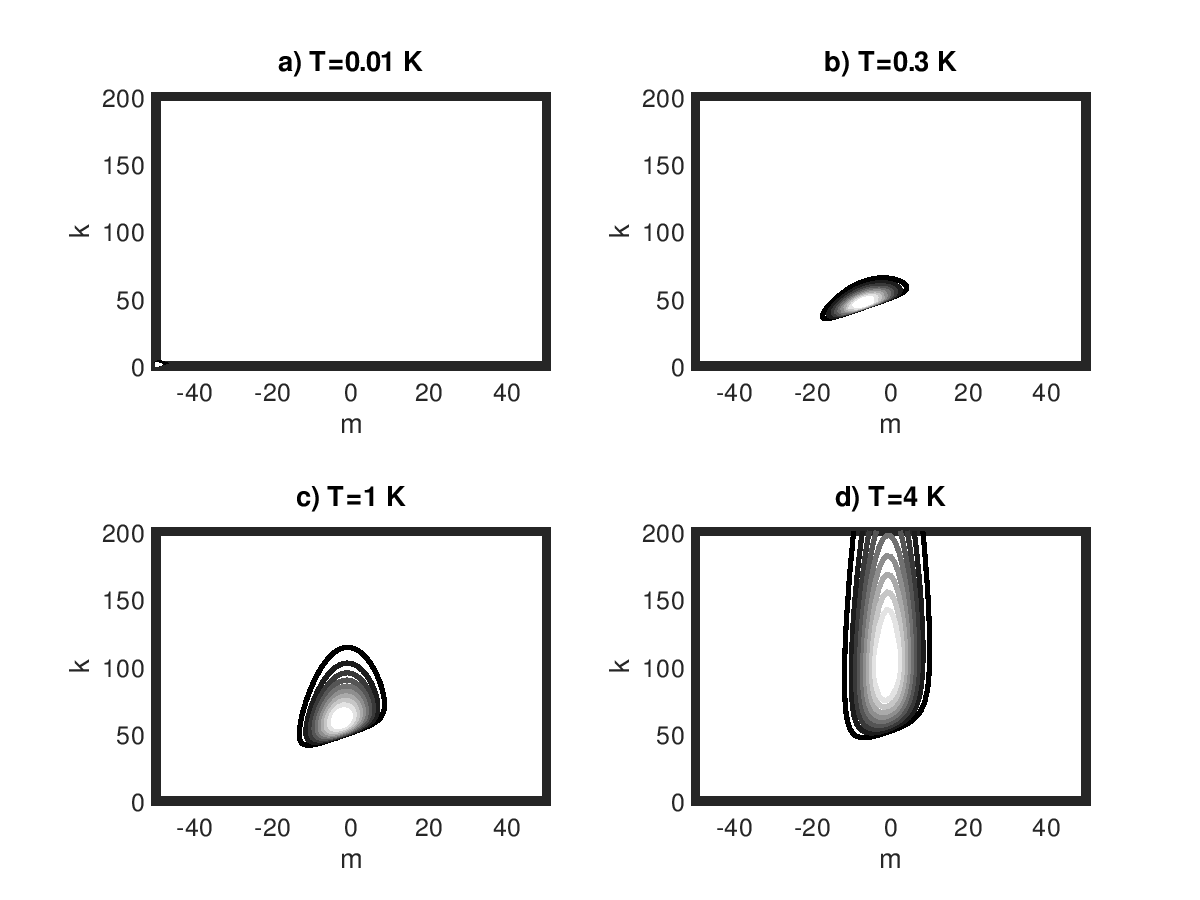}
    \caption{Figure of the distribution of the population for $Z_{pert}$, the Lamb shifts alone, as a function of the number of excitations in the system ($k$) against the secondary spin quantum number ($m$). These are plotted for $N=100$, $g_0=100$ Hz, and $\omega_0=10$ GHz. Note that for $T=0.01$ Kelvin the population is essentially in the Dicke subspace and global groundstate. The volumes are given by $2.81\cdot 10^{-16}$, $1.25\cdot 10^{-15}$, $3.82\cdot 10^{-16}$, and $7.91\cdot 10^{-17}$, respectively. The temperature dependence of these volumes does not seem to have a clear trend.}
    \label{shiftsalone}
\end{figure}
\fi

    \begin{figure}[h]
    \centering
    \includegraphics[scale=0.9]{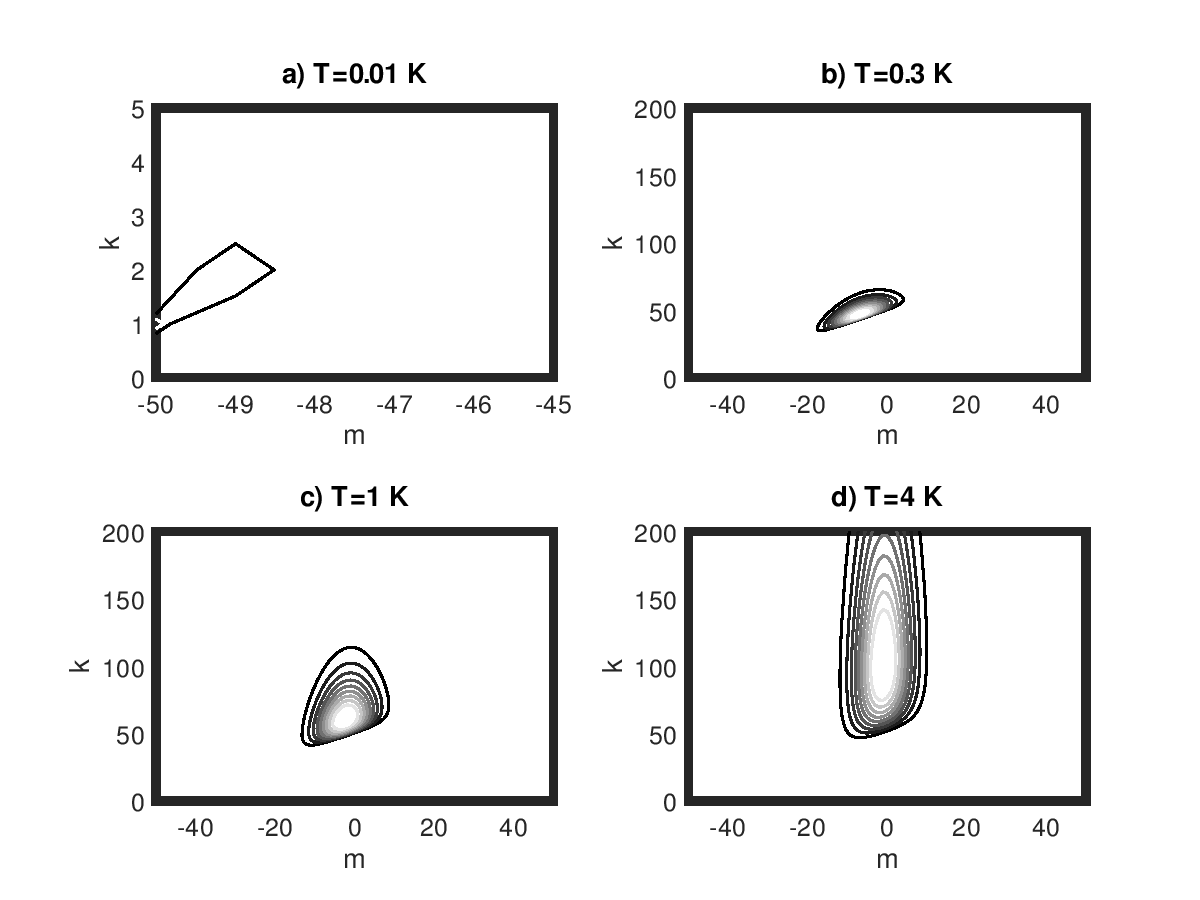}
    \caption{Figure of the distribution of the population for $Z_{pert}$, the Lamb shifts alone, as a function of the number of excitations in the system ($k$) against the secondary spin quantum number ($m$). These are plotted for $N=100$, $g_0=200\pi$ Hz, and $\omega_0=20\pi$ GHz. Note that for $T=0.01$ Kelvin the population is essentially in the Dicke subspace and global groundstate. The volumes, normalized by $Z_{total}$, are given by $2.81\cdot 10^{-16}$, $1.25\cdot 10^{-15}$, $3.82\cdot 10^{-16}$, and $7.91\cdot 10^{-17}$, respectively.}
    \label{shiftsalone}
\end{figure}
\if{false}
    \begin{figure}[h]
    \centering
    \includegraphics[scale=0.8]{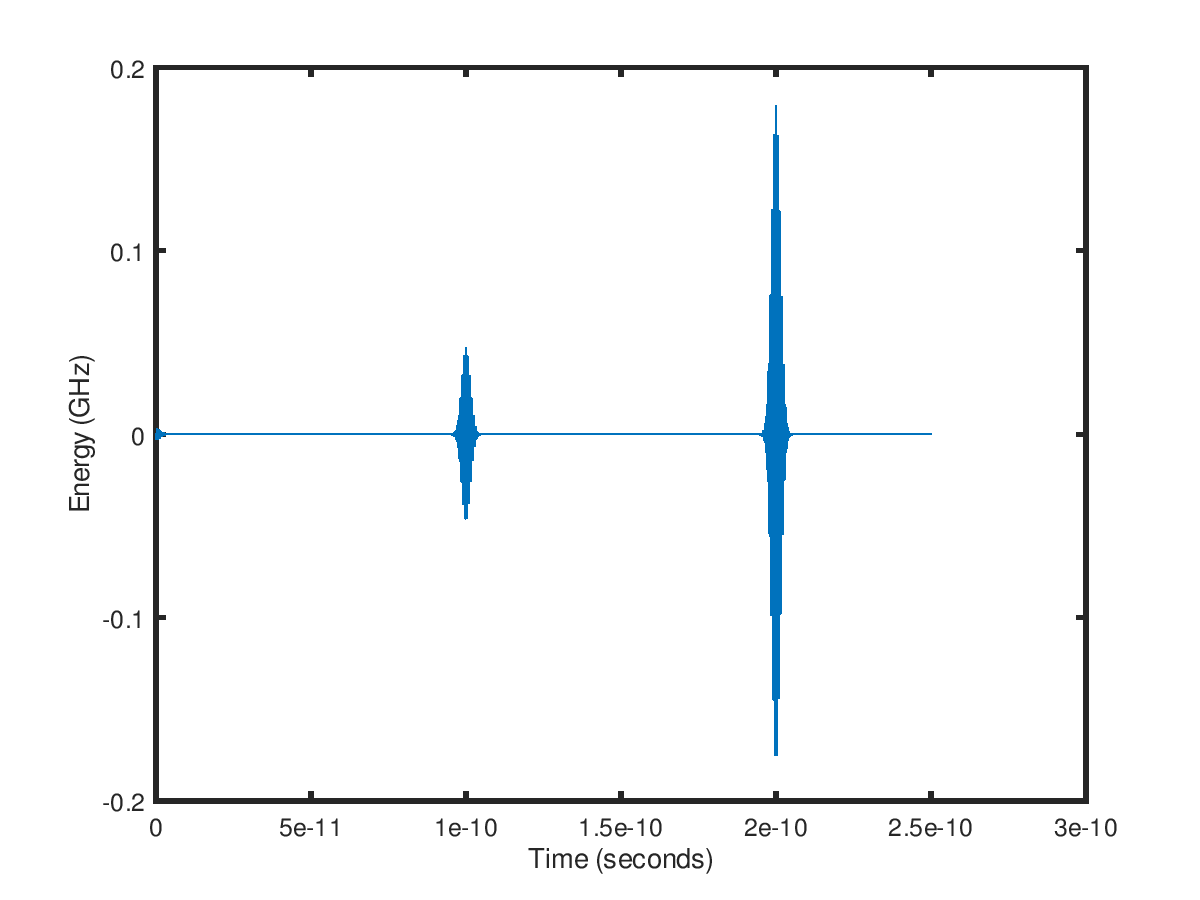}
    \caption{The real component of the Lamb-shift induced shift in the expectation of $\omega_0 a^\dag a$ self-evolution of the Tavis--Cummings model beginning with the thermal state (equation (\ref{realself})) for $N=1000$, $T=0.3$ K, $g_0=200\pi$ Hz, and  $\omega_0=20\pi$ GHz.}
    \label{nhatreal1000}
\end{figure}
\fi
\if{false}
\begin{figure}[h!]
    
    \begin{subfigure}{.3\textwidth}
        \includegraphics[width=.95\linewidth]{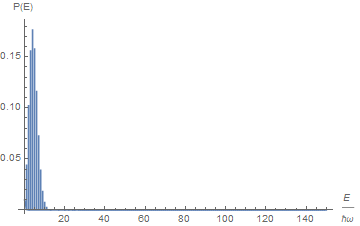}
        \caption{$T = 0.025$ K}
    \end{subfigure}
    \begin{subfigure}{.3\textwidth}
        \includegraphics[width=.95\linewidth]{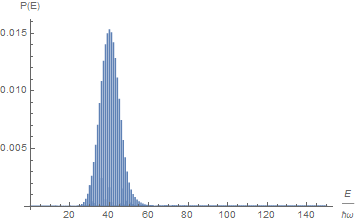}
        \caption{$T = 0.200$ K}
    \end{subfigure}
    \begin{subfigure}{.3\textwidth}
        \includegraphics[width=.95\linewidth]{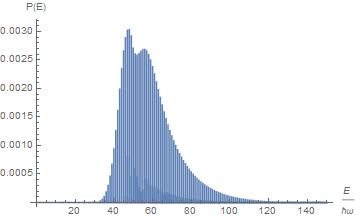}
        \caption{$T = 1.000$ K}
        \label{fig:1c}
    \end{subfigure}
    
    \caption{Thermal population distribution for the TC Hamiltonian at $\omega_0 = 10$ GHz, $g_0 = 10$ Hz, and $N=100$ spins, for selected temperatures of $T=0.025,0.200,1.000$ Kelvin. Notice that at each temperature, the ground state population is not appreciably populated for temperatures on the order of hundreds of mK or greater. \edit{These images have binning problems. The true plots look more like (b) for higher T regime}}
    \label{fig:thermal_distro}

    \end{figure}
    \fi
\if{false}
\begin{figure}[t]
    \centering
    \includegraphics[scale=0.5]{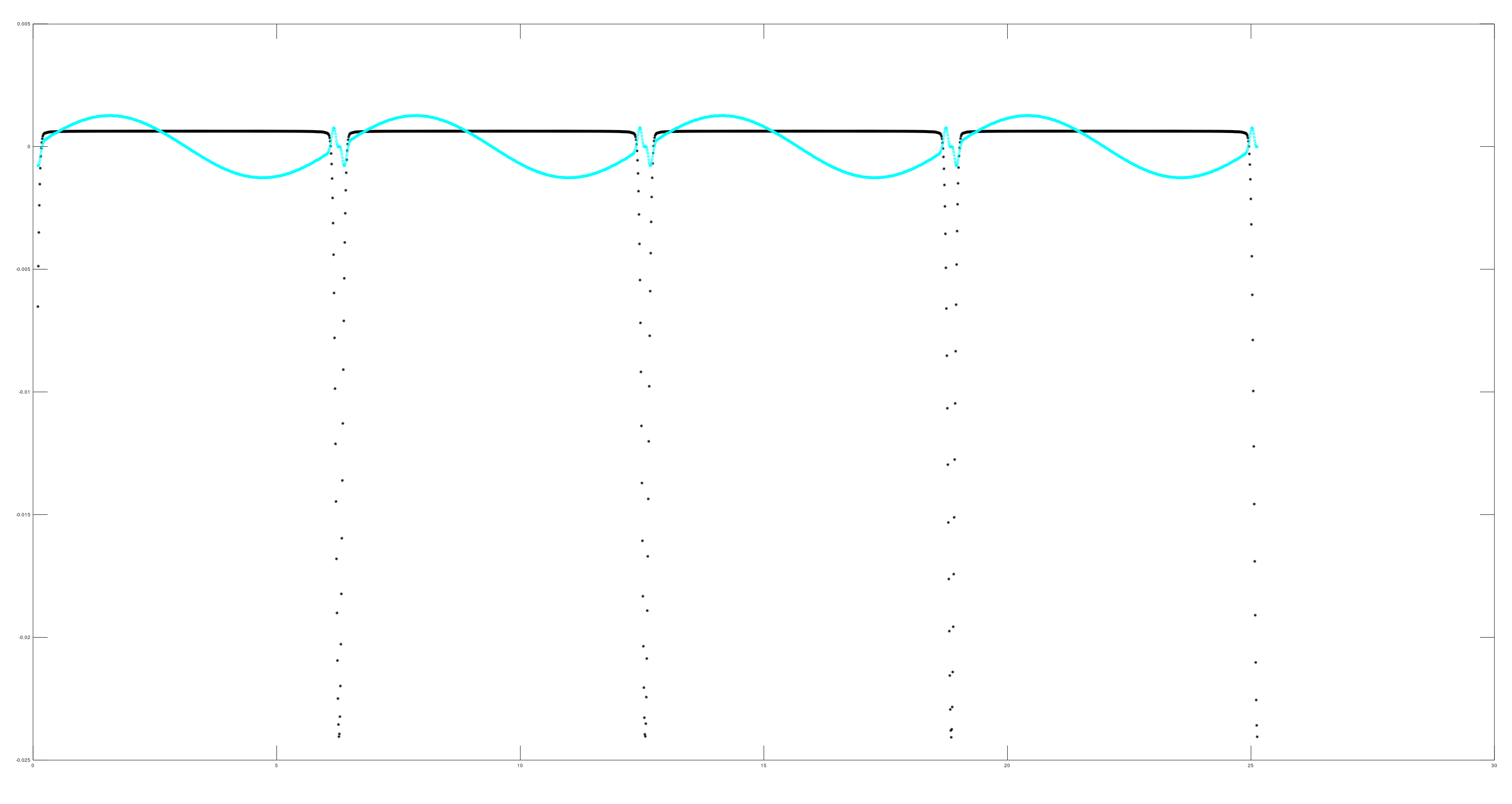}
    \caption{$N=1000$, $\omega_0=10GHz$, $T=1K$, $t\in(0,8\pi)$. Runtime: 3 minutes. The $y$-axis is very small values ($<.025$). Blue (no dips) is $Im(\langle e^{itJ_z}\rangle(t))$ and black (with dips) is $Re(\langle e^{itJ_z}\rangle(t))$. IMAGE NEEDS UPDATING, AS WELL AS SCRIPT}
\end{figure}
\fi

\if{false}

\begin{figure}[t]
    \centering
    \includegraphics[scale=0.75]{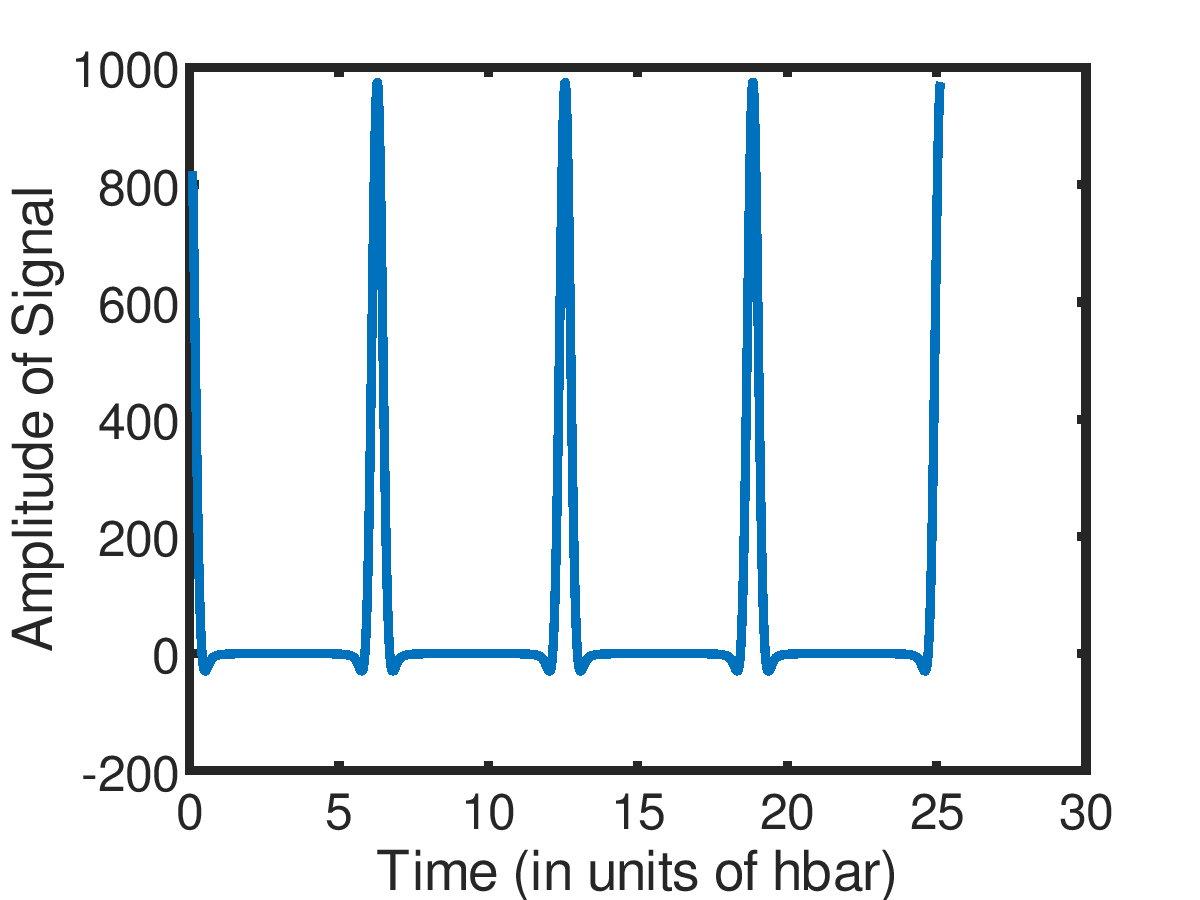}
    \caption{$N=100$, $\omega_0=10GHz$, $T=1K$, $t\in(0,8\pi)$. Runtime: 3 seconds. This shows the signal from equation (\ref{real}).}
    \label{realsignal}
\end{figure}

\begin{figure}[t]
    \centering
    \includegraphics[scale=0.75]{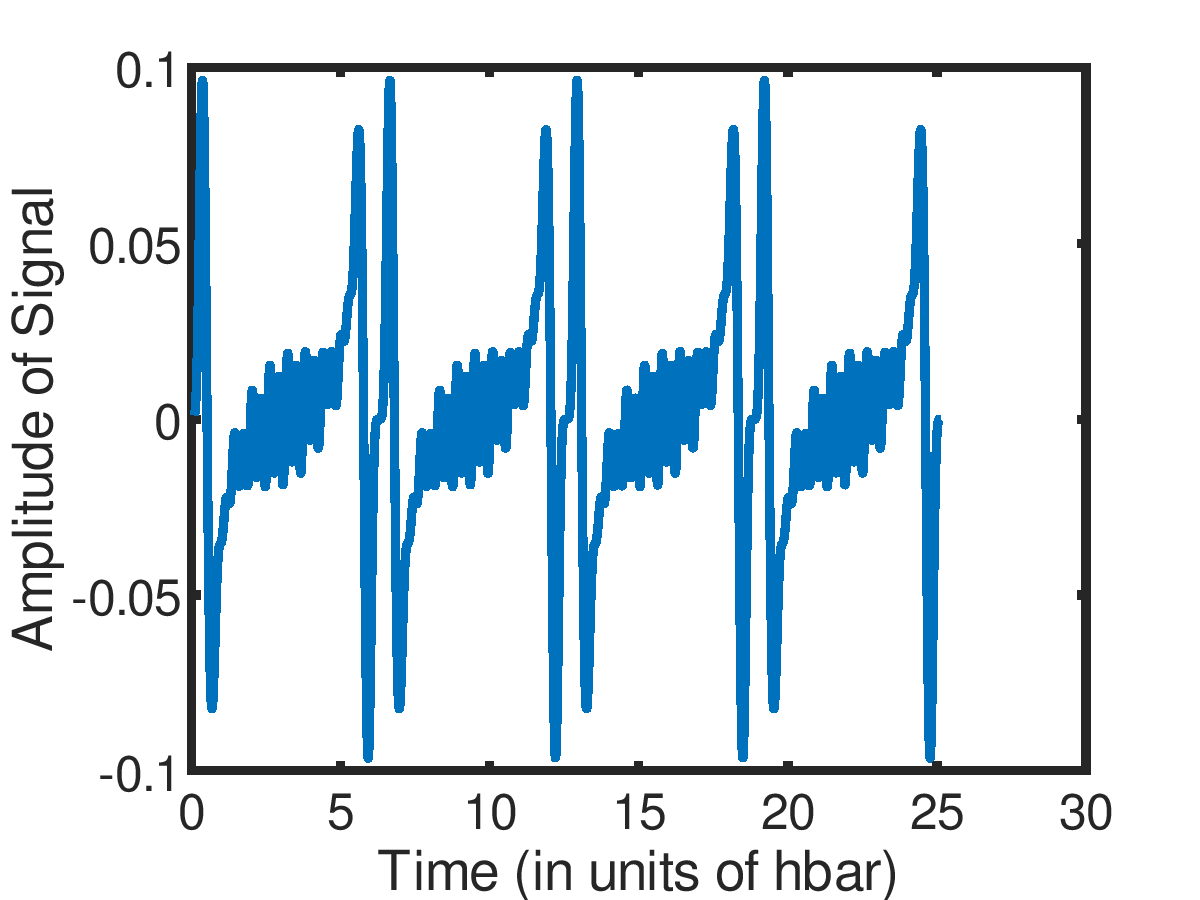}
    \caption{$N=100$, $\omega_0=10GHz$, $T=1K$, $t\in(0,8\pi)$. Runtime: 3 seconds. This shows the signal from equation (\ref{imag}).}
    \label{imagsignal}
\end{figure}

\fi

\if{false}
\begin{figure}[t]
    \centering
    \includegraphics[scale=0.5]{Visuals (32).png}
    \caption{Figures of $J_z$ due to Lamb shifts parameterized by magnetic spin value, $m$, and number of excitations, $k$, against the strength for those values. The first figure (bottom left) is a top view, with white being positive values and black being negative, while the majority of the space, as expected, is effectively zero. The middle figure is an angled perspective of the same plot allowing the amplitude and distribution to be seen more easily. The final figure (upper right) is the summation of the effect along the $k$ axis, resulting the cumulative effect as a function of $m$. For these plots $N=100$, $T=0.5$ K, $g_0=100$ Hz, and $\omega_0=10$ GHz.}
    \label{imagsignal}
\end{figure}
\fi

\if{false}

    \begin{figure}[h]
    \centering
    \includegraphics[scale=0.9]{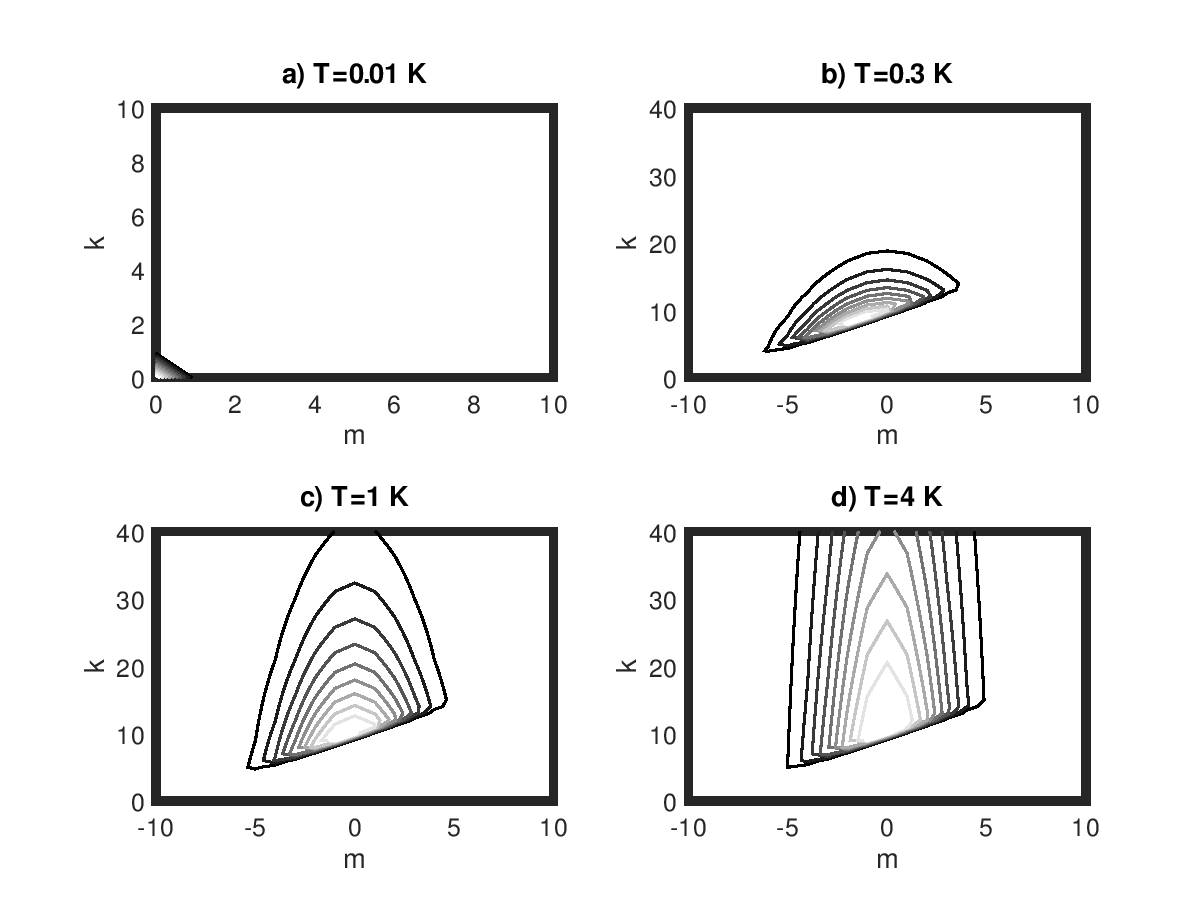}
    \caption{aaaaa}
\end{figure}
\fi
\end{widetext}

\section{Mesoscopic Experimental Predictions}

To ground the prior algorithmic results, we provide some potential use cases for our results and show experimental predictions from such experiments. This can guide experimental design as well as determine preliminary characterization of the noise impacting a given collective system. For this purpose we consider two different direct observable shifts: fractional shift in the profile of photon count and a driven experiment which could form the basis for Hahn echo sequences. Notably, with our algorithmic improvements we are able to compute values for ensembles of $1000$ spins, larger than anywhere else reported\footnote{We are limited not by runtime or computational power, but by digits of precision and more nuanced implementation details.}.



\subsection{Fractional Shift in $a^\dag a$}

As one experimentally observable signature of the impact due to Lamb shifts is the number of photons in the system at a given temperature. As the shift itself is quite small until particularly large systems are used, we will consider the fractional shift in the mean--$\langle a^\dag a\rangle_{Z_{pert}}/\langle a^\dag a\rangle_{Z_0}$. We also consider the fractional shift in the variance. Intuitively, this shift can be thought of as being due to a decrease in the average energy of the system due to Lamb shifts (as downshifted dressed states are available), which in turn reduces the number of photons in the system for a given temperature. For this we assume that we may switch on the coupling term $g_0$, which may be experimentally realized through bringing the spin system's Zeeman splitting onto resonance with the cavity's frequency.

For numerically computing $\langle a^\dag a\rangle_{Z_{pert}}/\langle a^\dag a\rangle_{Z_0}$ and $Var(a^\dag a)_{Z_{pert}}/Var( a^\dag a)_{Z_0}$, we not that the observable $a^\dag a$ (and finite powers of it) are subexponential in the parameters $j$ and $k$, which means that we may utilize our fastest algorithmic reduction. Formulae for these expressions are computed in appendix \ref{nshift}. We then compute these observables for systems with $\omega_0=20\pi GHz$, $g_0=200\pi Hz$, $T=0.3 K$, and $N$ ranging from $100$ to $1000$. From the slope with these parameters a fractional shift of a percent or more occurs for $N$ near $10^{15}$, while a fractional shift in the variance might be detected for $N$ near $10^{13}$. The results are shown in Figure \ref{nshiftlamb}.

\begin{widetext}

\begin{figure}[thb]
    \centering
    \includegraphics[scale=0.9]{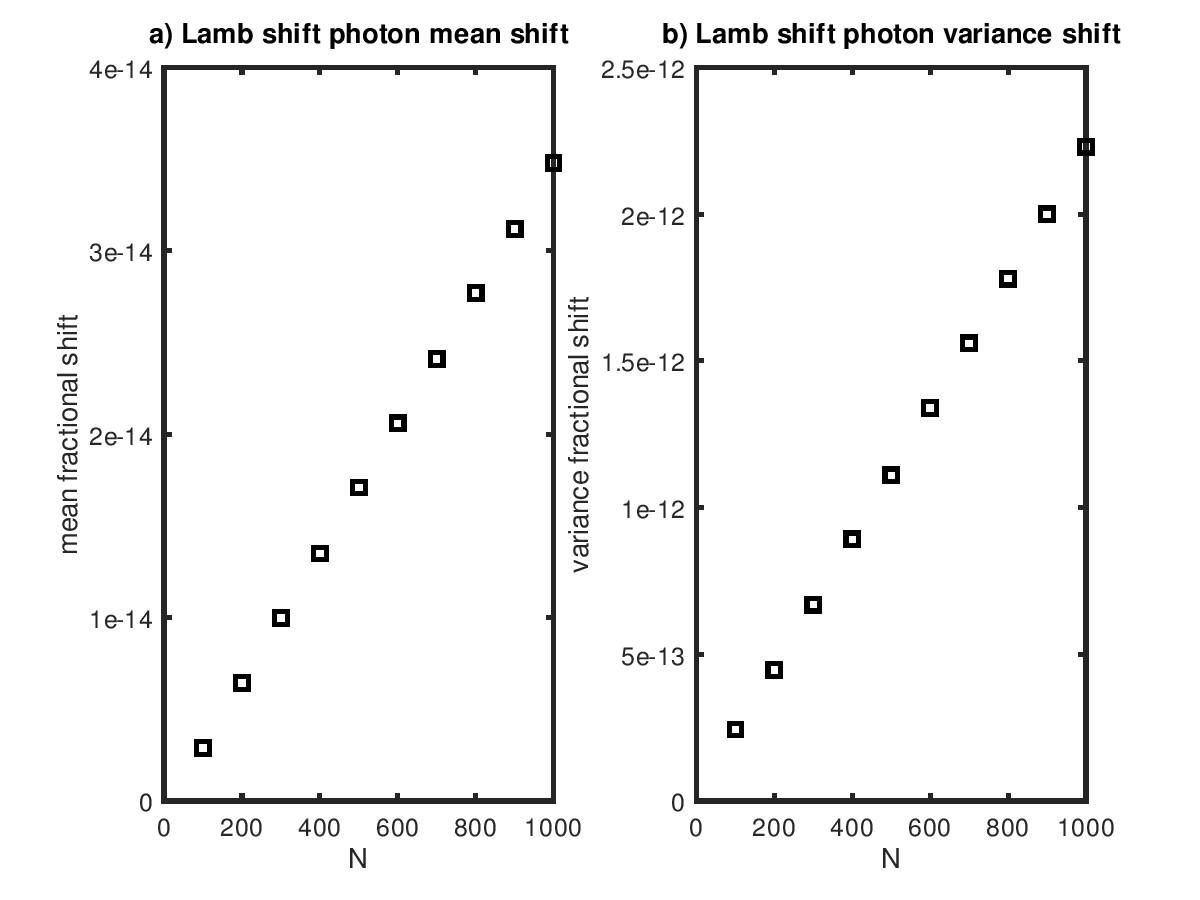}
    \caption{The fractional change in the mean (a) and variance (b) number of photons in the cavity as a function of $N$. This is plotted for $T=0.3$ K, $g_0=200\pi$ Hz, $\omega_0=20\pi$ GHz, and $N=100$ to $1000$. This is a linear trend ($r^2=1$). For (a) the slope is $3.54\cdot 10^{-17}$ and intercept is $-6.49\cdot 10^{-16}$. Once $N\approx 10^{15}$ this fraction would be appreciable. For (b) the slope is $2.21\cdot 10^{-15}$ and intercept is $9.4\cdot 10^{-15}$. Once $N\approx 10^{13}$ this fraction would be appreciable.}
    \label{nshiftlamb}
\end{figure}

\end{widetext}

These observables could serve as tests for how collective a system is or the level of discrepancy due to environmental factors. Whilst a scalar observable is a useful initial metric, having an understanding of the expected behavior under a controlling pulse would generally be more helpful. Next we explore such a case.

\subsection{Driven Rotation}

As our second application of our results, we consider evolving the system under a $J_x$ drive, along with the Hamiltonian itself, then observing $J_y$ at a given time. We find:
\begin{multline}
tr(J_y e^{it(H+2\Omega\cos(\omega_0 t) J_x)}\rho_{th}e^{-it(H+2\Omega\cos(\omega_0 t) J_x)})\\
= \sin(\Omega t) tr(J_z \rho_{th}),
\end{multline}
and compute an expression for $tr(J_z \rho_{th})$ in appendix \ref{drivensoln}. Such a sequence and associated observable could be used to form the basis of a Hahn echo based experiment, and notable deviations from the sinusoidal behavior and the predicted amplitude provide indicators for the variety and strength of noise impacting the system. This, paired with an analogous $J_z$ drive, provide  controls for traversing the hypersphere for collective spin systems, which helps open exploration for using these devices.

This observable (and associated control) also satisfies our strictest requirements for simulation and so we are able to compute parameters for systems of size $1000$ without supercomputer power. As the sinusoidal portion is not impacted without the presence of noise, we focused on the amplitude term. Considering systems with $\omega_0=20\pi$ GHz, $g_0=200\pi$ Hz, $T=0.3$ K, and $N$ ranging from $100$ to $1000$, we find a quadratic dependence on the number of spins in the system. The simulation results are shown in Figure \ref{jzshiftplot}.

    \begin{figure}[h]
    \centering
    \includegraphics[scale=0.4]{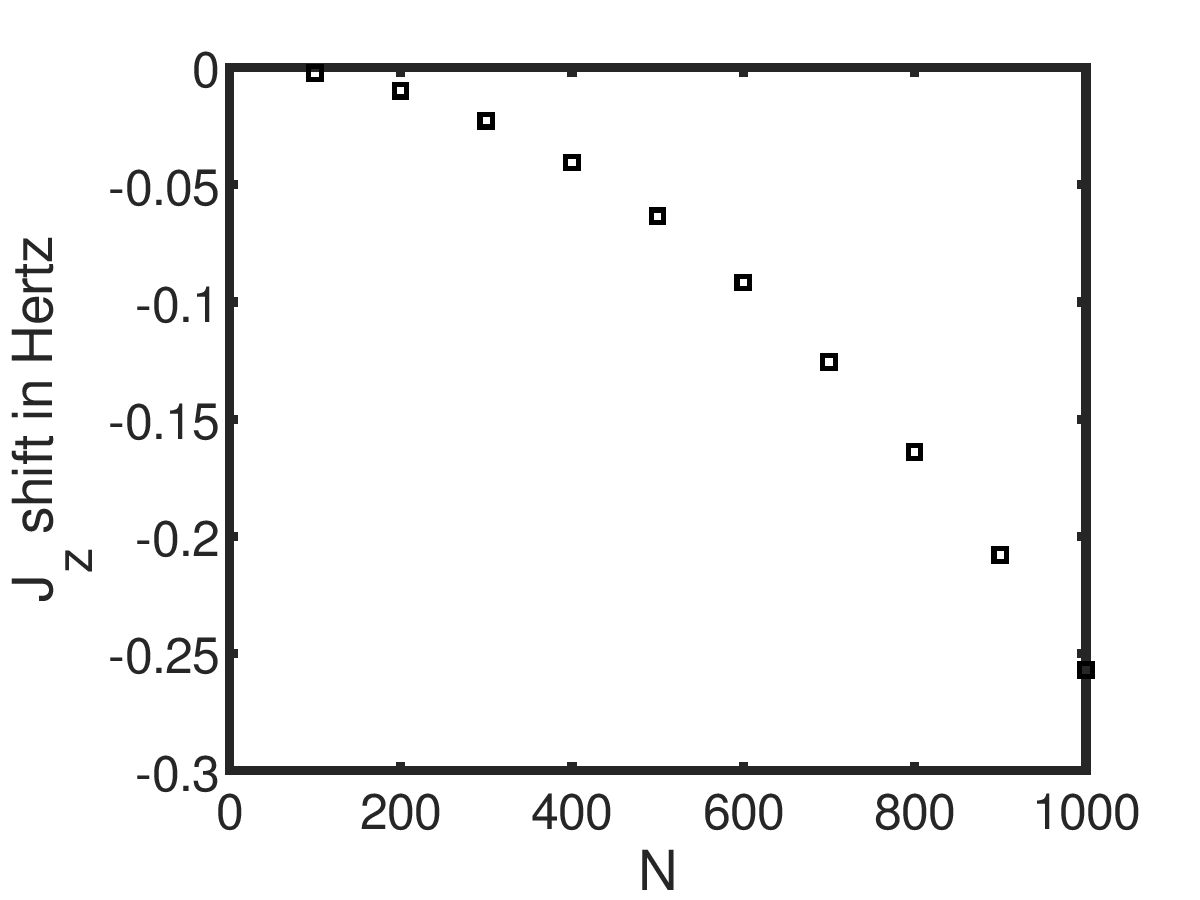}
    \caption{The shift in the value of $tr(\omega_0 J_z \rho_{th})$ due to the Lamb shifts, given by the first order correction to the trace. The negative sign indicates the availability of lower spin values, weighing the mean down. This is plotted for $T=0.3$ K, $g_0=200\pi$ Hz, $\omega_0=20\pi$ GHz, and for $N$ as $100$ to $1000$. These data points are found to follow a quadratic dependence on $N$: $-2.60\cdot 10^{-7}N^2+2.82\cdot 10^{-6}N-8.6\cdot 10^{-5}$, where a shift of $1$ GHz occurs once $N\approx 10^7$. Using a Dicke approximation and the same parameters this value rapidly tends to $0$ due to the increasing total shift in the partition function \footnote{For $N=100$ to $1000$ the shifts are $[-7\cdot 10^{-11},-3\cdot 10^{-18},-7.1\cdot 10^{-26},-1.3\cdot 10^{-33},-2.1\cdot 10^{-41},-3.1\cdot 10^{-49},-4.4\cdot 10^{-57},-5.9\cdot 10^{-65},-7.8\cdot 10^{-73},-9.9\cdot 10^{-81}]$}.} 
    \label{jzshiftplot}
\end{figure}

Unlike the fractional shift in the photon count, this provides a time-dependent signature which can be sought and leveraged to determine characteristics of the system. We conclude with these results, however, considerations of the impact of noise on these signatures will be of great interest we anticipate--the incorporation of weak noises, and when the simulation reductions are still valid, are discussed in appendix \ref{perts}.

\if{false}We will consider the changes in the number operator, $a^\dag a$, under the self-evolution of the Hamiltonian:
\begin{equation}
    \langle \omega_0 a^\dag a e^{itH}\rangle_{Z_{total}}(t)=\frac{tr(\omega_0 a^\dag a e^{-\beta H} e^{itH})}{tr(e^{-\beta H})}.
\end{equation}
In the appendix we break this into real and imaginary components and rewrite the expression in terms of sums, obtaining for the real portion:
\begin{widetext}
\if{false}\begin{multline}\label{realself}
    Real[\langle e^{itH}\rangle_{Z_{total}}(t)]=\frac{1}{2Z_{total}}(\sum_k e^{-\beta \omega_0 k}\cos(\omega_0 k t)\sum_j d_j |\mathcal{B}|(1+\frac{\beta^2-t^2}{2}Var(\Lambda(j,k)))\\
    -\sum_k e^{-\beta\omega_0 k}\sin(\omega_0 k t)\beta t \sum_j d_j |\mathcal{B}|Var(\Lambda(j,k)))
\end{multline}\fi
\begin{multline}\label{realself}
    Real[\langle \omega_0 a^\dag a e^{itH}\rangle_{Z_{total}}(t)]=\frac{1}{2Z_{total}}(\sum_k e^{-\beta \omega_0 k}\cos(\omega_0 k t)\sum_j d_j \frac{\beta^2-t^2}{2}\hat{n}_{pert}(j,k)\\
    -\sum_k e^{-\beta\omega_0 k}\sin(\omega_0 k t)\beta t \sum_j d_j \hat{n}_{pert}(j,k)
\end{multline}
\end{widetext}
where we have made the approximation that the timescale of interest is such that $tH_{pert}\ll 1$. Note that this operator is bounded by $1$ and so is subexponential in $j$ and $k$ meaning that the region of strong support is kept.

\begin{widetext}
\if{false}
    \begin{figure}[h]
    \centering
    \includegraphics[scale=0.8]{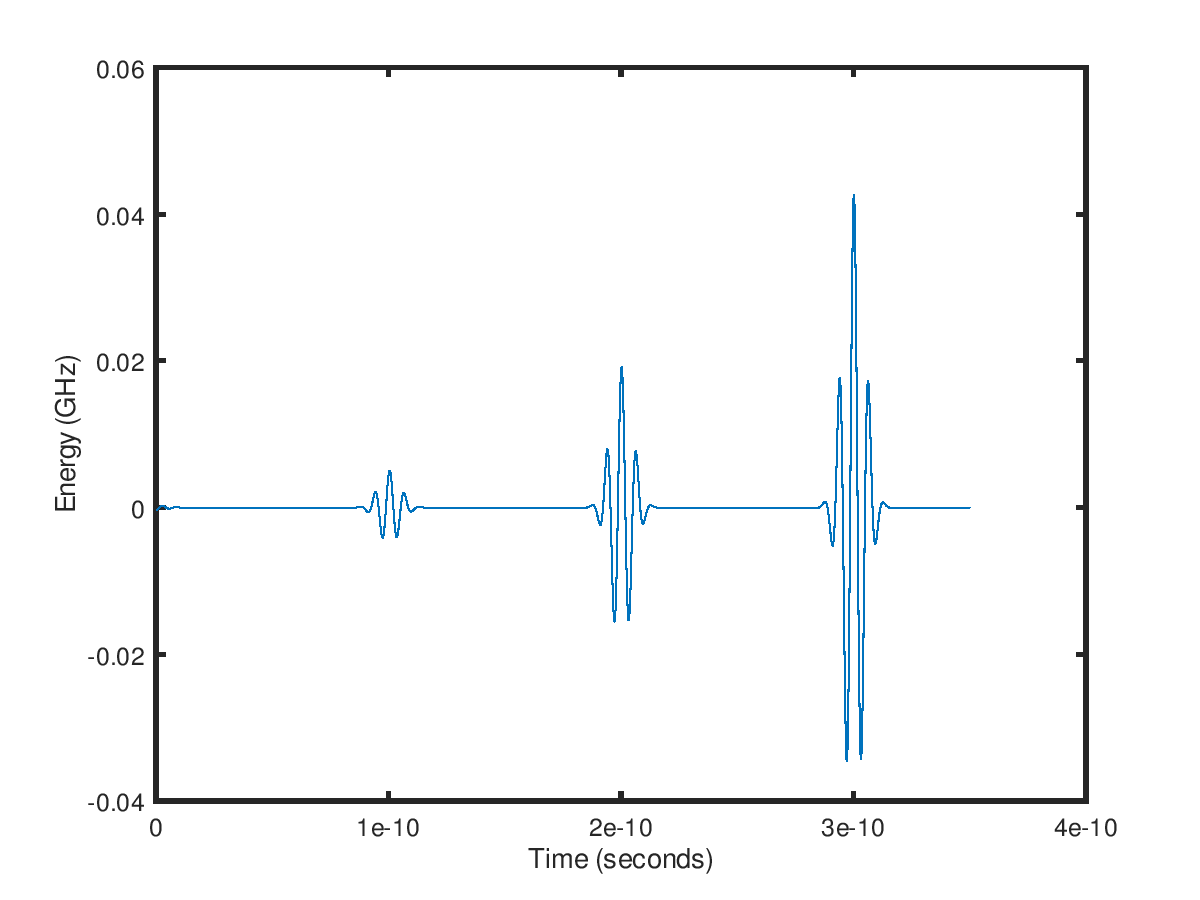}
    \caption{The real component of the Lamb-shift induced shift in the expectation of $\omega_0 a^\dag a$ self-evolution of the Tavis--Cummings model beginning with the thermal state (equation (\ref{realself})) for $N=100$, $T=0.3$ K, $g_0=200\pi$ Hz, and  $\omega_0=20\pi$ GHz. \textbf{I have plots for $N=1000$, the problem is that the packets have too high frequency to see anything. What should be done? Focus on the pack in $N=1000$? Figure \ref{nhatreal1000} has this image}}
    \label{nhatreal}
\end{figure}
\fi
\if{false}
    \begin{figure}[h]
    \centering
    \includegraphics[scale=0.8]{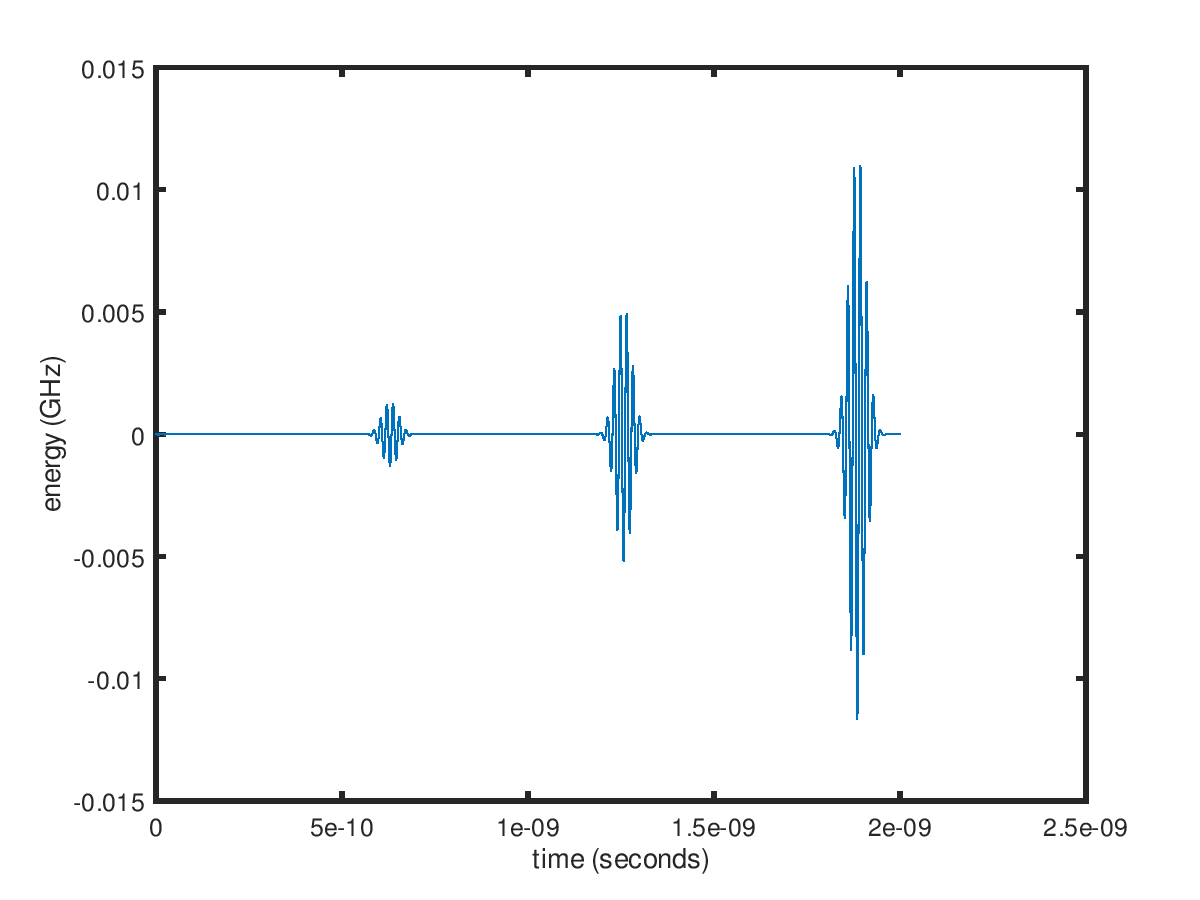}
    \caption{The real component of the Lamb-shift induced shift in the expectation of $\omega_0 a^\dag a$ self-evolution of the Tavis--Cummings model (equation (\ref{realself})) for $N=100$, $T=0.1$ K, $g_0=100$ Hz, and  $\omega_0=10$ GHz.}
    \label{nhatreal}
\end{figure}
\fi

\if{false}
    \begin{figure}[h]
    \centering
    \includegraphics[scale=0.8]{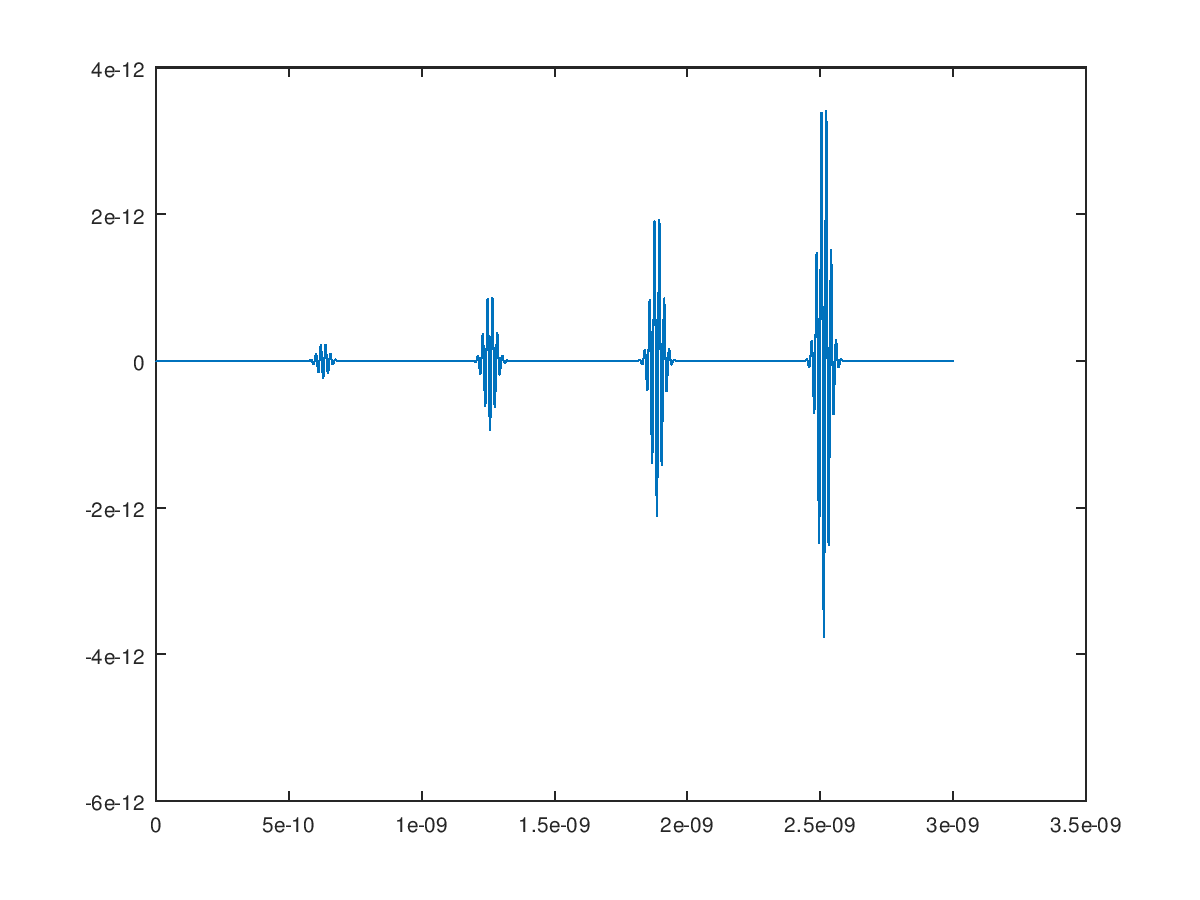}
    \caption{The real component of the Lamb-shift induced self-evolution of the Tavis--Cummings model (equation (\ref{realself}), with the $1$ in the first innermost sum removed) for $N=100$, $T=0.1$ K, $g_0=100$ Hz, and  $\omega_0=10$ GHz. The $x$-axis is time in seconds.}
    \label{selfsignalreal}
\end{figure}

    \begin{figure}[h]
    \centering
    \includegraphics[scale=0.8]{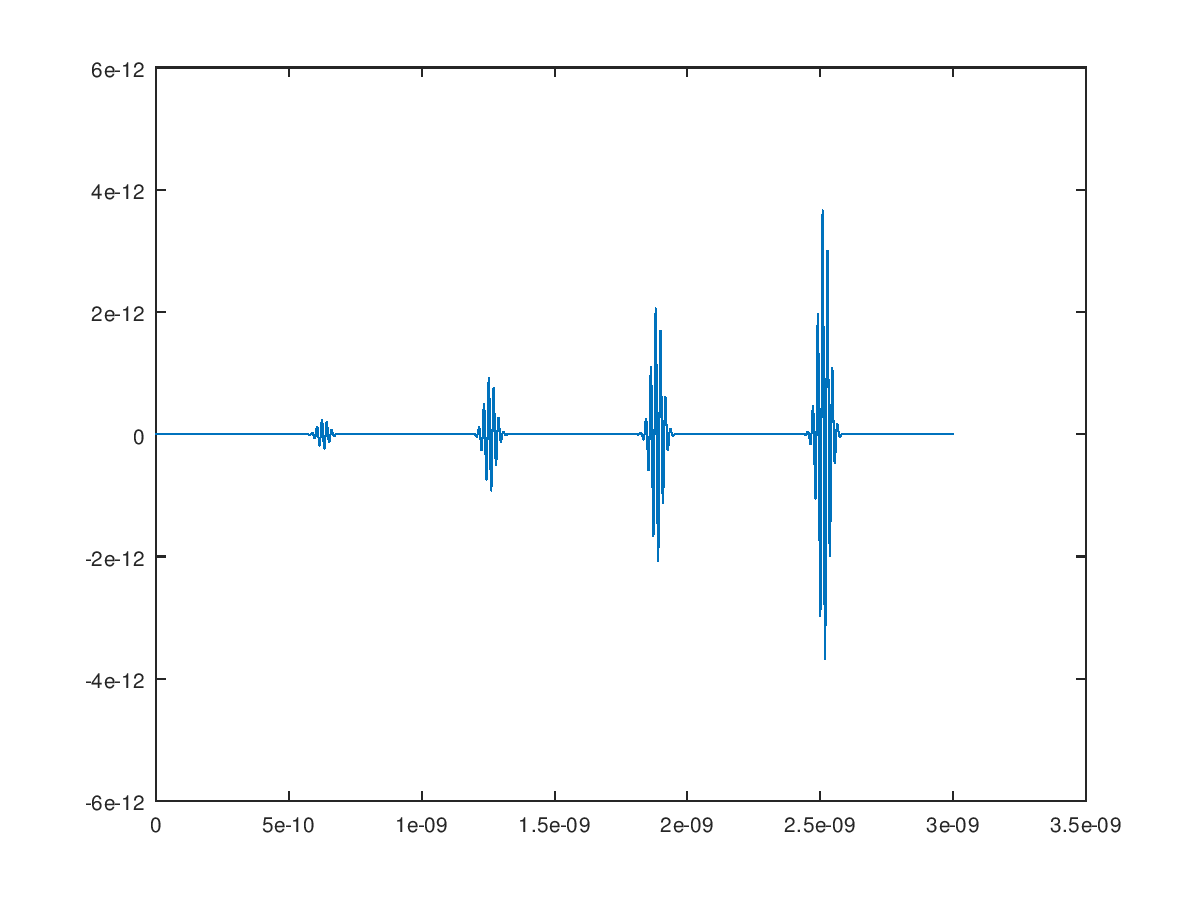}
    \caption{The imaginary component of the Lamb-shift induced self-evolution of the Tavis--Cummings model (equation (\ref{imagself})) for $N=100$, $T=0.1$ K, $g_0=100$ Hz, and  $\omega_0=10$ GHz. The $x$-axis is time in seconds.}
    \label{selfsignalimag}
\end{figure}
\fi
\end{widetext}


\if{false}To ground this discussion, we will consider the operator $e^{itJ_z}$ and its expectation. Firstly, this operator is a collective spin operator and has no dependence on $k$, so the sum structure is preserved.  Note that this operator is bounded by $1$ and so is subexponential in $j$ and $k$.\fi \if{false} As shown in appendix \ref{evolution} this has:
\begin{widetext}
\begin{equation}\label{real}
    Re(\langle e^{itJ_z}\rangle)=\frac{1}{Z_{total}}\sum_k ke^{-\beta \omega_0 k}\sum_j d_j(\sin(jt)\cot(t/2)+\cos(jt))
\end{equation}

\begin{equation}\label{imag}
    Im(\langle e^{itJ_z}\rangle)= \frac{1}{Z_{total}}\sum_k e^{-\beta \omega_0 k}\sum_j d_j 2\sin^2(t/2)((j+1)\sin(jt)-j\sin((j+1)t))
\end{equation}
\end{widetext}

Notice that since $e^{itJ_z}$ is not Hermitian, and so not an observable, we have written expressions for the real and imaginary components, which can be obtained as a sum and difference of the operator with both signs in the exponent. In these expressions the real component corresponds to observations made due to the oscillator portion of the uncoupled Hamiltonian, whereas the imaginary part is due to the collective Zeeman term in the uncoupled Hamiltonian. To compute this expression for a given time $t$ it takes $O(T\sqrt{N})$ as the same reductions apply. Considering a large collection of times, $\tau$, will not take more runtime as the trigonometric expressions can be evaluated ahead of time and the terms pulled out as needed.
\fi

These observables are weighted sums of sinusoidal functions, with the weights determined by $j$ and $k$. There is a quadratic dependency on time, $t$, showing that the Lamb shift induced self-signal will grow over time--although eventually is periodic. The spacing between the wave packets is dominantly determined by the value of $\omega_0$, while the width of the packets increases as the temperature decreases. A simulation is given in Figure \ref{nhatreal}.



\fi

\section{Conclusion}



Understanding further details about the interactions between ensembles of spins and cavities enables better control of such systems opening them for more uses. These hybrid quantum devices could be used as processors, a metrological tool, and a test bed for collective model behaviors. In this work we focused on the Tavis--Cummings model and the thermal properties of this model. In the case of this model we found that the system goes through a smooth transition from having predominantly angular momentum ground states populated to having the population spread out over a collection of $\Theta(\sqrt{N})$ collective spin values, the so-called degeneracy dominated regime. This informs a perturbative expansion for the partition function and general thermal averages in this model. From this expansion, we are able to obtain optimal simulation runtimes at least for many choices of observables, and demonstrate expectations, with up to $1000$ spins, from experiments. This provides a direct way to test the quality of the model and opens it up significantly for further exploration--both numerically, theoretically, and experimentally.

The body of this work has focused on the case of perfect on-resonance between the cavity and spins, as well as a perfectly homogeneous field. In order to improve the utility of these results the effects of such perfections removal should be analyzed and it should be determined whether current technology inhomogeneities are sufficiently small for the effects of these ensemble Lamb shifts to be observed. Considerations of these inhomogeneities are discussed in appendix \ref{perts}, and provide a benchmark with which experimental setups may be compared, as well as when the strength of these perturbations no longer allow our reductions without significant loss of accuracy. While this work does not delve into these considerations further, we provide bounds on validity which should aid further exploration.

Also of note, the complexity of this quantum system has been determined, even in the presence of slight perturbations. This means that within the complexity of quantum systems community and experimental realizations of early-term quantum devices, the Tavis--Cummings model, as considered here, is not sufficiently complex of a system. Either larger perturbations or dipolar couplings should be considered in order to generate a truly "hard" problem for a classical computer to solve.

In this work we were readily able to compute expectations for parameters for up to $N=1000$ spins on a laptop computer. As these generally took only a few minutes to generate the data, computational power is not the limiting constraint here but rather the digits in standard programming environments. Given the algorithmic scaling, with sufficiently careful programming there is no reason to not expect that expectations for systems of size $N\approx 10^7$ or larger could be computed, if desired. We leave this as another direction for others to explore.

Lastly these results must be translated fully into an experiment and a prediction for the experimental signal. The prior section provided clear expected experimental results which can provide insights into the exact regimes experimentally considered as well as how well this theory matches experiments. This will hopefully inform the parameters in the experiment and permit the resolution of this collective effect and the associated Lamb shifts, as well as better understanding of spin ensemble systems in this regime. In some regards this fundamental model is an ideal instance for collective properties and herein we have provided insights making it significantly more numerically accessible.

    \if{false}

\section{Results} \label{sec:methods}
    In the Tavis--Cummings model we usually find ourselves restricting to the Dicke subspace, so that we can find analytical and tractable numerical solutions to the functions of interest. However, these methods often fail to accurately represent the states and dynamics of the model, especially when the Dicke subspace is not sufficiently populated, as we have shown in our previous work \cite{gunderman2022lamb}. A common approximation procedure is the semi--classical approximation, which we have included in the appendix for completeness.
    
    In this work we are interested in the thermal states of the Tavis--Cummings model compared to the case where the cavity and spin ensemble are non--interacting. We begin by providing the behaviors of the non--interacting model and will proceed to the more general Tavis--Cummings case. The cavity and spin ensembles are said to be non--interacting if the coupling strength $g_0$ is zero or if the spin and cavity are sufficiently out of resonance. So that this case more closely matches that of the interacting Tavis--Cummings model we will assume that the coupling strength $g_0$ is zero.
    
    The thermal state of the spin ensemble is given by
    \begin{eqnarray}
         \hat{\rho}_{\text{th}}^{\text{spins}} &=& \otimes_{i=1}^N \frac{e^{-\beta \omega_0/2}|\uparrow\rangle\langle\uparrow|+e^{\beta\omega_0/2}|\downarrow\rangle\langle\downarrow|}{e^{-\beta \omega_0/2}+e^{\beta\omega_0/2}} \\
        &=& \otimes_{i=1}^N \frac{1}{2}\bigg(\openone - \tanh\big(\frac{\beta \omega_0}{2}\big) \hat{\sigma}_z^{(i)}\bigg). 
    \end{eqnarray}
    The thermal state of a cavity system is given by
    \begin{equation}
        \hat{\rho}_{\text{th}}^{\text{cavity}} = \big(1-e^{-\beta\omega_0}\big)\sum_{k=0}^{\infty}e^{-k \beta \omega_0}\ketbra{k}{k}.
    \end{equation}
    
    As there is no coupling term being considered at this time the thermal state of the two systems, having thermalized separately, is given by the product state
        $\hat{\rho} = \hat{\rho}_{\text{th}}^{\text{c}} \otimes \hat{\rho}_{\text{th}}^{\text{s}}$,
    so the partition function of this joint system can then be given as:
    \begin{equation}\label{z0}
        Z_0=(1-e^{-\beta\omega_0})^{-1}(e^{-\beta\omega_0/2}+e^{\beta\omega_0/2})^N.
    \end{equation}
    We also provide the average energy of this system as we will compare this to that of the coupled Tavis--Cummings model:
    \begin{equation}
        \langle E\rangle_{0}=\omega_0 (e^{\beta \omega_0}-1)^{-1}+\frac{\omega_0 N}{2}\left(1 - \tanh\left(\frac{\beta \omega_0}{2}\right)\right).
    \end{equation}
    This has a limit of $0$ at $T=0$, as we have shifted the Zeeman term by $\frac{N}{2}\openone$, and a high temperature limit of:
    \begin{equation}
        T+\frac{\omega_0}{2}(N-1)-\frac{\omega_0^2}{4}(N-\frac{1}{3})T^{-1}+O(T^{-3}).
    \end{equation}
    Having discussed the $g_0=0$ case, we are now prepared to make remarks about the $g_0> 0$ case.
    
    \subsection{Quasicritical temperature and low temperature information}
    
    STATE SUPER CLEARLY TEMP RANGE
    
    We begin by remarking on the comparative simplicity of the very low-temperature regime for this problem. Due to the utility of the analytic calculations in the Dicke subspace, it is useful to have a temperature dependent condition for determining whether the system is well approximated by a restriction to the Dicke subspace. Notably, since angular momentum is a conserved quantity when restricting to collective operators, any population that starts outside of the Dicke subspace cannot be transferred into it by collective dynamics. This has made the analysis of cavity cooling quite difficult, as one needs to break the collective symmetry in order to cool to the Dicke ground state \cite{wood2014cavity, wood2016cavity}. Further, as we have shown in figure \ref{fig:thermal_distro}, the thermal distributions indicate that we require the system to thermalize with an environment described by temperatures on the order of tens of mK or lower in order to appreciably populate the Dicke subspace.
    
    Using the block diagonal structure of the Hamiltonian we may write the partition function as:
    \begin{equation}
        Z_{total}(\beta) = \sum_{k=0}^\infty \sum_{j=j_0(k)}^{N/2}\bigg( d_j e^{-\beta k \omega_0} \sum_{\lambda\in\Lambda(j,k)} e^{-\beta \lambda g}\bigg),
    \end{equation}
    where we have defined
    \begin{equation}
        j_0(k) = \max \lbrace N/2-k, \ N/2 - \floor{N/2}\rbrace.
    \end{equation}
    For low temperatures, the degeneracy, $d_j$ plays an extremely important role within the partition function. We find a quasicritical temperature in our system, with value roughly $T_c=\omega_0/\log N$ (see appendix). Below $T_c$ a fair portion of the population is contained in the lowest lying energy states with the largest collective angular momentum, such as the Dicke subspace, but once we are above $T_c$ the rapid degeneracy increase across angular momentum subspaces dominates and the population is strongly biased toward the available angular momentum space closest to $j^*$.
    
    
    \subsection{Higher Temperature Perturbative Expansions}
    
    We now turn our attention to the case of higher temperatures. We find that we may expand the partition function in a perturbative series expansion. So long as $T\gtrsim g_0N\sqrt{\kappa}$, where $\kappa$ is the largest excitation with non-negligible population, (or generally $\beta\omega_0<1$, which is around $0.8$ Kelvin for a typical value of $\omega_0$), we may approximate $Z_{total}(\beta)\approx Z_0(\beta)+Z_{pert}(\beta)$ with:
    \begin{equation}
        Z_0(\beta)=\sum_k e^{-\beta k\omega}\sum_j d_j|\mathcal{B}_{j,k}|
    \end{equation}
    which is the same as equation (\ref{z0}), and
    \begin{equation}
        Z_{pert}(\beta)\approx \frac{(\beta g_0)^2}{2}\sum_k e^{-\beta k\omega}\sum_j d_j|\mathcal{B}_{j,k}| Var(\Lambda(j,k)).
    \end{equation}
    This then leads to the following expression for the average energy in this system:
    \begin{equation}
        \langle E\rangle\approx \langle k\rangle _{Z_0}\omega_0+((\langle k\rangle_{Z_{pert}}-\langle k\rangle_{Z_0}) \omega_0-\frac{2}{\beta})\frac{Z_{pert}}{Z_0}.
    \end{equation}
    Alternatively we can express this in terms of the change in the average energy of our spin-cavity system:
    \begin{equation}
        \Delta\langle E\rangle =((\langle k\rangle_{Z_{pert}}-\langle k\rangle_{Z_0}) \omega_0-\frac{2}{\beta})\frac{Z_{pert}}{Z_0},
    \end{equation}
    where $\langle k\rangle_{Z_0}$ is the average number of excitations with respect to the distribution $Z_0(\beta,k)/Z_0$, and $\langle k\rangle_{Z_{pert}}$ is the analogous expectation over the perturbative partition function. Additionally, we have as the shift induced in the Helmholtz free energy:
    \begin{equation}
        \Delta \langle A\rangle=\frac{Z_{pert}}{Z_0}
    \end{equation}
    
    \subsection{Computational Improvements}
     We note that the computational complexity of computing the first order correction is faster than the next order as we have an analytic expression for $Var(\Lambda(j,k))$--computing the first order correction costs $O(NT)$. Including the next term would cost $O(N^2T)$ (although this cost expression would not increase if another order or two were included as well), however, for machine precision level $O(N^2 T\log N)$ time is needed. If, however, we also utilize our knowledge that the degeneracies have a relatively small region of strong support with size $O(\sqrt{N})$, these time complexities reduce to: $O(\sqrt{N} T)$, $O(NT)$, and $O(N^{3/2}T\log N)$, respectively, where the extra half power for the last one comes from having to solve matrices of size up to $N\times N$. Note that in all the prior expressions we have suppressed the dependency on $\omega_0$ as this is just a scaling of the variable $T$ in effect. As thermodynamic properties are computed using the partition function and this is the bottle neck for performing those computations, most thermodynamic functions should take the same run time.
     
     Beyond these numerical speedups, we may make two general remarks on $\Delta\langle E\rangle$. Firstly, as $g_0$ tends toward zero, so does this expression. Secondly, as $\langle k\rangle_{Z_{pert}}<\langle k\rangle_{Z_0}$ [due to it being a higher moment of an exponential distribution], $\Delta \langle E\rangle\leq 0$, meaning that the introduction of the coupling reduces the average energy in system. Further simplifications of the expression for $\Delta\langle E\rangle$ are beyond the scope of this work, however, we provide some illustrative figures as well as analysis in the next subsection.
     \begin{widetext}
     \begin{center}
\begin{tabular}{||c| c| c| c| c||} 
 \hline
 Error & $O(g_0^4)$ & $O(g_0^{2c}),\ c> 2$ & Machine Precision & Naive \\ [0.5ex] 
 \hline\hline
 $Z_{pert}$, $Z_{tot}$, $\Delta\langle A\rangle$, $\Delta \langle E\rangle$ & $O(T\sqrt{N})$ & $O(NT)$ & $O(TN\log N)$ & $O(TN^3)$ \\ 
 \hline
 $\mathcal{O}(j,k)$ (no region of strong support promise) & $O(NT)$ & $O(N^2T)$ & $O(TN^2\log N)$ & $O(TN^3)$ \\
 \hline
 $\mathcal{O}(j,k)$ (region of strong support promised) & $O(T\sqrt{N})$ & $O(NT)$ & $O(TN\log N)$ & $O(TN^3)$ \\ [1ex]
 \hline
\end{tabular}
\end{center}
 \end{widetext}   
    \subsection{Numerical Computation of the Thermal Distribution}
    While the expressions in the previous sections provided some insights, they also caused other aspects to be overshadowed. Here we describe how we compute the thermal distribution, instead of the partition function, as well as the energy shift induced from the coupling $\Delta\langle E\rangle$.
    
    By decomposing the TC system into subspaces labeled by angular momentum $j$ and excitations $k$, we can efficiently solve for the energy eigenvalues. We can write the thermal state populations as a probability distribution over the allowed energy levels of the system. That is, we determine the set,
    \begin{equation}
        \mathcal{E}(\beta) = \lbrace (E, P(E \ | \ \beta)) \,|\, \forall E, \H \ket{E} = E \ket{E} \rbrace,
    \end{equation}
    where the probability of the label $E$ is determined by the number of eigenstates with energy $E$, given by $d(E)$, and the temperature of the system:
    \begin{equation}
        P(E \ | \ \beta) = d(E) e^{-\beta E} / Z(\beta).
    \end{equation}
    
    We find that the eigenvalue problem needs to only be solved once for a given value of $N$ and by storing the energy spectrum we can efficiently compute the distribution of the thermal state for a given new temperature or temperature range. The results are shown in figure \ref{fig:thermal_distro}.

    \begin{figure}[h!]
    
    \begin{subfigure}{.3\textwidth}
        \includegraphics[width=.95\linewidth]{images/point025_kelvin.png}
        \caption{$T = 0.025$ K}
    \end{subfigure}
    \begin{subfigure}{.3\textwidth}
        \includegraphics[width=.95\linewidth]{images/point2_kelvin.png}
        \caption{$T = 0.200$ K}
    \end{subfigure}
    \begin{subfigure}{.3\textwidth}
        \includegraphics[width=.95\linewidth]{images/1_kelvin.png}
        \caption{$T = 1.000$ K}
        \label{fig:1c}
    \end{subfigure}
    
    \caption{Thermal population distribution for the TC Hamiltonian at $\omega_0 = 10$ GHz, $g_0 = 10$ Hz, and $N=100$ spins, for selected temperatures of $T=0.025,0.200,1.000$ Kelvin. Notice that at each temperature, the ground state population is not appreciably populated for temperatures on the order of hundreds of mK or greater.}
    \label{fig:thermal_distro}

    \end{figure}

   To demonstrate this numerically, we consider the parameters $\omega_0 = 10$ GHz, $g_0 = 10$ Hz, and temperatures at or less than $1$ Kelvin. These parameters can be feasibly realized in an ESR system \cite{}. We demonstrate our ability to determine the thermal distribution by particularizing to $N=100$ identical spin--1/2 particles.

    It may be somewhat surprising to see that temperatures near $200$ mK result in a fairly high energy distribution, populating subspaces with 30 to 60 excitations. Even in the case of $25$ mK, it appears that the system has not fully cooled to its ground state.
    
    Plot for $\Delta\langle E\rangle$ instead! I want a plot of this!!! One curve with and one without, as well as just a plot of it.

    We see that while the existence of the coupling terms inhibits our ability to determine analytic solutions for the thermal state of the TC model, we find that we can still draw strong parallels on its behavior to the underlying subsystems. This concept is possibly best illustrated in figure \ref{fig:1c}, where the thermal distribution has two peaks, indicating that temperatures near 1 K form an intermediate regime, dominated by neither spin contributions, nor photonic contributions.
    

\section{Conclusion} \label{sec:conclusion}
    
    
    
 
    In this work we have described the thermal state for the Tavis--Cummings model, as well as provided a technique for perturbatively computing various thermal properties, as well as improvements to numerical methods. While these are useful aspects to know, converting these properties into other observables such as a free-induction decay (FID) would aid in verifying these results. In addition, this work assumes that the spin-cavity system is a closed system. Thus, while we allow thermalization with a reference bath, we do not allow for this interaction to induce any dissipation operators on the spins or the cavity. In a real experimental setting, there are a variety of noise sources which will inevitably break some of our idealizations and understanding how these will impact observations is crucial. For instance, a direct consequence will be that the computed Fourier spectrum of the FID will be composed not of a sum of delta functions, but rather, a sum of broadened curves. A full open quantum system treatment would be needed in order to properly predict the spectrum in this setting, which leads to new challenges, but also greater utility of these systems.

    \fi
    
\section*{Acknowledgements} \label{sec:acknowledgements}

We thank Andrew Stasiuk for help in the initial observation of regime change of the system, and Mohamed El Mandouh for helpful discussions. L.G. current affiliation: Department of Electrical and Computer Engineering, University of Illinois--Chicago, Chicago, Illinois, 60607.

\section*{Funding}
We gratefully thank the financial contributions of the Canada First Research Excellence Fund, Industry Canada, CERC (215284), NSERC (RGPIN-418579), CIFAR, the Province of Ontario, and Mike and Ophelia Lazaridis.

\bibliographystyle{unsrt}
\bibliography{refs}

\begin{thebibliography}{10}

\bibitem{shore1993jaynes}
Bruce~W Shore and Peter~L Knight.
\newblock The jaynes-cummings model.
\newblock {\em Journal of Modern Optics}, 40(7):1195--1238, 1993.

\bibitem{larson2021jaynes}
Jonas Larson and Themistoklis Mavrogordatos.
\newblock {\em The Jaynes--Cummings model and its descendants: modern research directions}.
\newblock IoP Publishing, 2021.

\bibitem{fink2008climbing}
JM~Fink, M~G{\"o}ppl, M~Baur, R~Bianchetti, PJ~Leek, Alexandre Blais, and Andreas Wallraff.
\newblock Climbing the jaynes--cummings ladder and observing its nonlinearity in a cavity qed system.
\newblock {\em Nature}, 454(7202):315--318, 2008.

\bibitem{blais_circuit_2020}
Alexandre Blais, Arne~L. Grimsmo, S.~M. Girvin, and Andreas Wallraff.
\newblock Circuit quantum electrodynamics, 2020.

\bibitem{fink2009dressed}
JM~Fink, R~Bianchetti, Matthias Baur, M~G{\"o}ppl, Lars Steffen, Stefan Filipp, Peter~J Leek, Alexandre Blais, and Andreas Wallraff.
\newblock Dressed collective qubit states and the tavis-cummings model in circuit qed.
\newblock {\em Physical review letters}, 103(8):083601, 2009.

\bibitem{yang_probing_2020}
Ping Yang, Jan~David Brehm, Juha Lepp{\"a}kangas, Lingzhen Guo, Michael Marthaler, Isabella Boventer, Alexander Stehli, Tim Wolz, Alexey~V Ustinov, and Martin Weides.
\newblock Probing the tavis-cummings level splitting with intermediate-scale superconducting circuits.
\newblock {\em Physical Review Applied}, 14(2):024025, 2020.

\bibitem{zou_implementation_2014}
LJ~Zou, David Marcos, Sebastian Diehl, Stefan Putz, J{\"o}rg Schmiedmayer, Johannes Majer, and Peter Rabl.
\newblock Implementation of the dicke lattice model in hybrid quantum system arrays.
\newblock {\em Physical review letters}, 113(2):023603, 2014.

\bibitem{kurizki_quantum_2015}
Gershon Kurizki, Patrice Bertet, Yuimaru Kubo, Klaus M{\o}lmer, David Petrosyan, Peter Rabl, and J{\"o}rg Schmiedmayer.
\newblock Quantum technologies with hybrid systems.
\newblock {\em Proceedings of the National Academy of Sciences}, 112(13):3866--3873, 2015.

\bibitem{morton_hybrid_2011}
John~JL Morton and Brendon~W Lovett.
\newblock Hybrid solid-state qubits: the powerful role of electron spins.
\newblock {\em Annu. Rev. Condens. Matter Phys.}, 2(1):189--212, 2011.

\bibitem{xiang_hybrid_2013}
Ze-Liang Xiang, Sahel Ashhab, JQ~You, and Franco Nori.
\newblock Hybrid quantum circuits: Superconducting circuits interacting with other quantum systems.
\newblock {\em Reviews of Modern Physics}, 85(2):623, 2013.

\bibitem{wood_cavity_2014}
Christopher~J Wood, Troy~W Borneman, and David~G Cory.
\newblock Cavity cooling of an ensemble spin system.
\newblock {\em Physical Review Letters}, 112(5):050501, 2014.

\bibitem{wood_cavity_2016}
Christopher~J Wood and David~G Cory.
\newblock Cavity cooling to the ground state of an ensemble quantum system.
\newblock {\em Physical Review A}, 93(2):023414, 2016.

\bibitem{bienfait_controlling_2016}
Audrey Bienfait, JJ~Pla, Yuimaru Kubo, Xin Zhou, Michael Stern, CC~Lo, CD~Weis, Thomas Schenkel, Denis Vion, Daniel Esteve, et~al.
\newblock Controlling spin relaxation with a cavity.
\newblock {\em Nature}, 531(7592):74--77, 2016.

\bibitem{albanese_radiative_2020}
Bartolo Albanese, Sebastian Probst, Vishal Ranjan, Christoph~W Zollitsch, Marek Pechal, Andreas Wallraff, John~JL Morton, Denis Vion, Daniel Esteve, Emmanuel Flurin, et~al.
\newblock Radiative cooling of a spin ensemble.
\newblock {\em Nature Physics}, pages 1--5, 2020.

\bibitem{ranjan_pulsed_2019}
Vishal Ranjan, Sebastian Probst, Bartolo Albanese, Andrin Doll, Oscar Jacquot, Emmanuel Flurin, Reinier Heeres, Denis Vion, Daniel Esteve, JJL Morton, et~al.
\newblock Pulsed electron spin resonance spectroscopy in the purcell regime.
\newblock {\em Journal of Magnetic Resonance}, 310:106662, 2020.

\bibitem{tavis1968exact}
Michael Tavis and Frederick~W Cummings.
\newblock Exact solution for an n-molecule—radiation-field hamiltonian.
\newblock {\em Physical Review}, 170(2):379, 1968.

\bibitem{gunderman2022lamb}
Lane~G Gunderman, Andrew Stasiuk, Mohamed~El Mandouh, Troy~W Borneman, and David~G Cory.
\newblock Lamb shift statistics in mesoscopic quantum ensembles.
\newblock {\em Quantum Information Processing}, 21(1):1--43, 2022.

\bibitem{hartmann2008quantum}
Michael~J Hartmann, Fernando~GSL Brandao, and Martin~B Plenio.
\newblock Quantum many-body phenomena in coupled cavity arrays.
\newblock {\em Laser \& Photonics Reviews}, 2(6):527--556, 2008.

\bibitem{nissen2012nonequilibrium}
Felix Nissen, Sebastian Schmidt, Matteo Biondi, Gianni Blatter, Hakan~E T{\"u}reci, and Jonathan Keeling.
\newblock Nonequilibrium dynamics of coupled qubit-cavity arrays.
\newblock {\em Physical review letters}, 108(23):233603, 2012.

\bibitem{tomadin2010many}
Andrea Tomadin and Rosario Fazio.
\newblock Many-body phenomena in qed-cavity arrays.
\newblock {\em JOSA B}, 27(6):A130--A136, 2010.

\bibitem{astner2017coherent}
Thomas Astner, Stefan Nevlacsil, Noomi Peterschofsky, Andreas Angerer, Stefan Rotter, Stefan Putz, J{\"o}rg Schmiedmayer, and Johannes Majer.
\newblock Coherent coupling of remote spin ensembles via a cavity bus.
\newblock {\em Physical review letters}, 118(14):140502, 2017.

\bibitem{norcia2018cavity}
Matthew~A Norcia, Robert~J Lewis-Swan, Julia~RK Cline, Bihui Zhu, Ana~M Rey, and James~K Thompson.
\newblock Cavity-mediated collective spin-exchange interactions in a strontium superradiant laser.
\newblock {\em Science}, 361(6399):259--262, 2018.

\bibitem{seo2015quantum}
Kangjun Seo and Lin Tian.
\newblock Quantum phase transition in a multiconnected superconducting jaynes-cummings lattice.
\newblock {\em Physical Review B}, 91(19):195439, 2015.

\bibitem{xue2017quantum}
Jian Xue, Kangjun Seo, Lin Tian, and Tao Xiang.
\newblock Quantum phase transition in a multiconnected jaynes-cummings lattice.
\newblock {\em Physical Review B}, 96(17):174502, 2017.

\bibitem{wang2012quantum}
Chen Wang, Yu-Yu Zhang, and Qing-Hu Chen.
\newblock Quantum correlations in collective spin systems.
\newblock {\em Physical Review A}, 85(5):052112, 2012.

\bibitem{munoz2023phase}
Manuel~H Mu{\~n}oz-Arias, Ivan~H Deutsch, and Pablo~M Poggi.
\newblock Phase-space geometry and optimal state preparation in quantum metrology with collective spins.
\newblock {\em PRX Quantum}, 4(2):020314, 2023.

\bibitem{omanakuttan2023spin}
Sivaprasad Omanakuttan and Tyler~James Volkoff.
\newblock Spin-squeezed gottesman-kitaev-preskill codes for quantum error correction in atomic ensembles.
\newblock {\em Physical Review A}, 108(2):022428, 2023.

\bibitem{wesenberg2002mixed}
Janus Wesenberg and Klaus M{\o}lmer.
\newblock Mixed collective states of many spins.
\newblock {\em Physical Review A}, 65(6):062304, 2002.

\bibitem{garraway2011dicke}
Barry~M Garraway.
\newblock The dicke model in quantum optics: Dicke model revisited.
\newblock {\em Philosophical Transactions of the Royal Society A: Mathematical, Physical and Engineering Sciences}, 369(1939):1137--1155, 2011.

\bibitem{holstein1940field}
T~Holstein and Hl~Primakoff.
\newblock Field dependence of the intrinsic domain magnetization of a ferromagnet.
\newblock {\em Physical Review}, 58(12):1098, 1940.

\bibitem{roses2020dicke}
Mor~M Roses and Emanuele~G Dalla~Torre.
\newblock Dicke model.
\newblock {\em Plos one}, 15(9):e0235197, 2020.

\bibitem{slichter2013principles}
Charles~P Slichter.
\newblock {\em Principles of magnetic resonance}, volume~1.
\newblock Springer Science \& Business Media, 2013.

\bibitem{coakley2013fast}
Ed~S Coakley and Vladimir Rokhlin.
\newblock A fast divide-and-conquer algorithm for computing the spectra of real symmetric tridiagonal matrices.
\newblock {\em Applied and Computational Harmonic Analysis}, 34(3):379--414, 2013.

\bibitem{das2016brief}
Shagnik Das.
\newblock A brief note on estimates of binomial coefficients.

\bibitem{spencer2014asymptopia}
Joel Spencer and Laura Florescu.
\newblock Asymptopia, volume 71 of student mathematical library.
\newblock {\em American Mathematical Society, Providence, RI}, page~66, 2014.

\bibitem{bauer1960norms}
Friedrich~L Bauer and Charles~T Fike.
\newblock Norms and exclusion theorems.
\newblock {\em Numerische Mathematik}, 2(1):137--141, 1960.

\bibitem{eisenstat1998three}
Stanley~C Eisenstat and Ilse~CF Ipsen.
\newblock Three absolute perturbation bounds for matrix eigenvalues imply relative bounds.
\newblock {\em SIAM Journal on Matrix Analysis and Applications}, 20(1):149--158, 1998.

\bibitem{kato2013perturbation}
Tosio Kato.
\newblock {\em Perturbation theory for linear operators}, volume 132.
\newblock Springer Science \& Business Media, 2013.

\end{thebibliography}
\clearpage

\appendix*

\newpage
\begin{widetext}
\section{} \label{sec:app}
\if{false}
\subsection{The Semi--Classical Approximation to the Tavis--Cummings Model}\label{sec:app1}
    A tractable way to describe features of the Tavis--Cummings (TC) model for a large number of spins is by restricting ourselves to the Dicke subspace and taking a semi--classical approach, by which we can produce analytic results for the thermal behaviour of the spin--cavity system. In the paper we are mainly interested in the thermal properties of the TC model by way of computing the partition function. Since the partition function is the Laplace transform of the density of states, we concern ourselves with calculating the density of states here.
    
    The TC Hamiltonian is given by equation \eqref{eq.tc}:
    \begin{equation}
        \H = \omega_0 (\hat{a}^\dagger \hat{a} + \hat{J}_z) + g_0(\hat{a}\hat{J}_+ + \hat{a}^\dagger \hat{J}_-).
    \end{equation}
    The semi--classical density of states of the TC model in the Dicke subspace well approximates the fully quantum density of states in the thermodynamic limit (is this the Holtstein-Primakoff approximation?--Lane). The semi--classical Hamiltonian is given by substituting the spin operators $\hat{J}_i$ with classical angular momentum variables $j_i$ and substituting the cavity operators
    with classical harmonic variables $q$ and $p$ \cite{bastarrachea2014comparative, de1992chaos}. The semi--classical Hamiltonian for our system of interest is then 
    \begin{equation}
        \mathcal{H} =\frac{\omega_0}{2} (q^2+p^2) +  \omega_0 j_z + g_0\sqrt{j} \sqrt{1- \frac{j_z^2}{j^2}}\left[q \cos\phi -p\sin\phi\right],
    \end{equation}
    where $\phi = \arctan(j_y/j_x)$. 
    
    The density of states equation is computed with \cite{gutzwiller2013chaos} 
    \begin{equation}
        \nu(E)=\frac{1}{(2 \pi)^{2}} \int \dd q \ \dd p \ \dd \phi \ \dd j_{z} \ \delta\left(E-\mathcal{H}\left(q, p, \phi, j_{z}\right)\right).
    \end{equation}
    We now proceed to evaluate this expression. We will use the identity $\delta(f(x))= \sum_i \frac{\delta(x-x_i)}{|f'(x_i)|}$, where $x_i$ are the real roots of $f(x)$. We first integrate over $q$, so we can rewrite $\delta\left(E-\mathcal{H}\left(q, p, \phi, j_{z}\right)\right) = \frac{\delta\left(q-q_{+}\right)}{\left| \partial \mathcal{H} / \partial q \right|_{q_{+}}} + \frac{\delta\left(q-q_{-}\right)}{|\partial \mathcal{H} / \partial q|_{q_{-}}}$, where $q_+$ and $q_-$ are the real roots of the quadratic equation $E- \mathcal{H}\left(q, p, \phi, j_{z}\right)$, such that 
    \begin{equation}
        \nu(E)=\frac{1}{(2 \pi)^{2}} \int \dd j_{z} \ \dd \phi \ \dd p \ \dd q \ \left(\frac{\delta\left(q-q_{+}\right)}{|\partial \mathcal{H} / \partial q|_{q_{+}}}+\frac{\delta\left(q-q_{-}\right)}{|\partial \mathcal{H} / \partial q|_{q_{-}}}\right).
    \end{equation}
    The roots are 
    \begin{align}
        \omega_0 q_{\pm} &= -g_0\sqrt{j}\cos\phi \sqrt{\left(1- \frac{j_z^2}{j^2}\right)}
        \pm \sqrt{-\omega_0^2 p^2 + bp + c},
    \end{align}
    where 
    \begin{align}
        b &= 2\omega_0 g_0\sqrt{j} \sin\phi \sqrt{\left(1- \frac{j_z^2}{j^2}\right)},\\
        c &= g_0^2 j \cos{\phi}^2\left(1- \frac{j_z^2}{j^2}\right) + 2\omega_0 (E -\omega_0 j_z).
    \end{align}
    In addition, we have 
    \begin{equation}
        |\partial \mathcal{H} / \partial q|_{q_{+}} =  |\partial \mathcal{H} / \partial q|_{q_{-}} = \sqrt{-\omega_0^2 p^2 + bp + c}.
    \end{equation}
    We then integrate with respect to $q$ from $q_-$ to $q_+$ and obtain simply
    \begin{equation}
        \nu(E)=\frac{1}{(2 \pi)^{2}} \int \dd j_{z} \ \dd \phi \ \dd p \ \frac{2}{\sqrt{-\omega_0^2 p^2 + bp + c}}.
    \end{equation}
    Next, we repeat the procedure once more for $p$. The limits of the integration must be such that denominator has real roots in $p$, so we must have 
    \begin{equation}
        b^2+4\omega_0^2c \geq 0.
    \end{equation}
    But first, we can rewrite the integral as 
    \begin{equation}
         \nu(E)=\frac{1}{(2 \pi)^{2}} \int \dd j_{z} \ \dd \phi \  \dd p \ \frac{1}{\sqrt{-\omega_0^2 (p-p_+)(p-p_-)}},
    \end{equation}
    and integrate from $p_-$ to $p_+$ and obtain 
    \begin{equation}
        \nu(E)=\frac{1}{\omega_0(2 \pi)^{2}} \int \dd j_{z} \ \dd \phi.
    \end{equation}
    The roots $p_{\pm}$ are real when 
    \begin{equation}
        \frac{jg_0^2}{2\omega_0} - \frac{j_z^2 g_0^2}{j\omega_0} - \omega_0 j \geq -E,
    \end{equation}
    which leaves us with three cases in $E$. First, when $E_{\text{min}} \leq E < -\omega_0 j$ [what is $E_{min}$? Add a quickie.--Lane], we have that $j_z \in [a_- , a_+]$ holds, where 
    \begin{equation}
        a_\pm = \frac{jg_0^2}{2\omega_0} \pm \frac{g_0}{\omega_0}\sqrt{2(E-E_{\text{min}})}.
    \end{equation}

    Second case is for $-\omega_0 j < E < \omega_0 j$, we have $j_z \in [-\omega_0 j, a_+]$. Finally, for $E > \omega_0 j$ we have $j_z \in [-\omega j , \omega j]$. We have no restrictions on $\phi$, so $\phi \in [0, 2\pi)$. Integrating with respect to $j_z$ and $\phi$ we obtain
    \begin{equation}
       \nu(E) = \begin{dcases*}
        \frac{2jg_0}{\omega_0^2}\sqrt{2(E-E_{\text{min}})}, & $E_{\text{min}} \leq E < -\omega_0 j $\\
        \frac{j}{\omega_0} \left[1- \omega_0^2/g_0^2 +\omega_0/g_0 \sqrt{2(E-E_{\text{min}})}\right], & $-\omega_0 j < E < \omega_0 j$\\
        \frac{2j}{\omega_0},  & $E > \omega_0 j$.\\
        \end{dcases*}
    \end{equation}
    This technique can be extended to the rest of the angular momentum subspaces, since in the TC model each angular momentum sector is decoupled from the rest in the block diagonal basis $\ket{\alpha_{j,l}}$. Thus we can write the semi--classical density of states for all the angular momentum sectors as
    \begin{equation}
        \nu(E) = \sum_{j=0}^{N/2} d_j \nu(E, j).
    \end{equation}
    
    However, the semi--classical approximation is only good for large $N$, so the approximation begins to break down for small $j$. So if the state of the system is populated by states in the lowest lying angular momentum sectors then the approximation will fail. This may not be of concern in the thermodynamic limit $N\longrightarrow \infty$. However, for finite $N$ and general dynamics, the semi--classical approximation will not suffice. 
    
\fi
\subsection{Computation of $g_0=0$ Thermalization Case}\label{g0}
Here we show the steps involved in computing the non-interacting partition function and average energy for a spin ensemble and cavity. This involves solving the problem for a cavity and a collection of $N$ collective spins.

For a cavity, the thermal state is given by:
\begin{equation}
    \rho_{\text{th}}^{\text{cavity}}=(1-e^{-\beta \omega_0})\sum_{k=0}^\infty e^{-\beta \omega_0 k} |k\rangle\langle k|
\end{equation}
The partition function for the cavity can then be given by:
\begin{equation}
    Z_c=(1-e^{-\beta \omega_0})^{-1}
\end{equation}
The average energy for a thermalized cavity is then given by:
\begin{eqnarray}
    \langle E\rangle_c&=&(1-e^{-\beta \omega_0})\sum_{k=0}^\infty k\omega_0 e^{-\beta \omega_0 k}\\
    &=& (1-e^{-\beta \omega_0})\sum_{k=0}^\infty -\frac{\partial}{\partial \beta} e^{-\beta \omega_0 k}\\
    &=& -(1-e^{-\beta \omega_0}) \frac{\partial}{\partial \beta}\sum_{k=0}^\infty  e^{-\beta \omega_0 k}\\
    &=& \omega_0 e^{\beta \omega_0} (1-e^{-\beta \omega_0})(1-e^{\beta \omega_0})^{-2}\\
    &=& \omega_0 (e^{\beta \omega_0}-1)^{-1}
\end{eqnarray}
\if{false}
The limits for this are:
\begin{equation}
    \lim_{T\rightarrow 0} \langle E\rangle_c=0
\end{equation}
\begin{eqnarray}
    \lim_{T\rightarrow \infty} \langle E\rangle_c&\approx & \omega_0 \frac{1}{1+\beta \omega_0-1}\\
    &=& \beta^{-1} -\frac{1}{2}\omega_0+\frac{\beta\omega_0^2}{12}+O(\beta^3)
\end{eqnarray}
\fi
For a collection of $N$ uniformly interacting spins the Hamiltonian is given by $\omega_0 (J_z+\frac{N}{2}\openone)$. Note that the eigenstates of this are equivalent to those of $\otimes_{i=1}^N \sigma_z^{(i)}$. As such, we have:
\begin{align}
         \hat{\rho}_{\text{th}}^{\text{spins}} &= \otimes_{i=1}^N \frac{e^{-\beta \omega_0}|\uparrow\rangle\langle\uparrow|+|\downarrow\rangle\langle\downarrow|}{1+e^{-\beta\omega_0}} 
    \end{align}
The partition function for this portion can be given by:
\begin{equation}
    Z_s=(1+e^{-\beta\omega_0})^N
\end{equation}
The average energy is then given by:
\begin{equation}
        \avg{E}_s = N\omega_0 \frac{e^{-\beta\omega_0}}{1+e^{-\beta\omega_0}}.
    \end{equation}
    \if{false}
The limits of this are:
\begin{equation}
    \lim_{T\rightarrow 0}\langle E\rangle_s=0,\quad \lim_{T\rightarrow\infty}\langle E\rangle_s=\frac{\omega_0 N}{2}(1-\frac{\beta \omega_0}{2}+O(\beta^3)
\end{equation}
\fi
The total energy when $g_0=0$ is then given by:
\begin{eqnarray}
    \langle E\rangle_{0}&=&\langle k\rangle_{Z_0}\omega_0\\
    &=&\omega_0 (e^{\beta \omega_0}-1)^{-1}+N\omega_0 \frac{e^{-\beta\omega_0}}{1+e^{-\beta\omega_0}}
\end{eqnarray}

\subsection{Cutoff Temperature Argument}\label{regimes}


\if{false}
The following Lemma will be useful in our argument below:
\begin{lemma}\label{tay}
In the Taylor series for $e^k$, with $k\geq 1$, very, very nearly, the largest term is the $k/3$-th term.
\end{lemma}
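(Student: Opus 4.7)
The plan is to show, by a direct analysis of the sequence $a_n = k^n/n!$, that its maximum is attained very close to $n = k/3$. First I would use the ratio of consecutive terms $a_{n+1}/a_n = k/(n+1)$ to establish that $(a_n)$ is unimodal, strictly increasing while $n+1 < k$ and strictly decreasing thereafter, so the maximum is attained at a single interior index. This unimodality is the structural fact underlying the statement and is essentially free once the ratio is written down.

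To locate the peak, I would apply Stirling's approximation $n! \sim \sqrt{2\pi n}(n/e)^n$ to pass to the smooth surrogate $f(n) = (ke/n)^n/\sqrt{2\pi n}$ and maximize $\log f(n)$ in $n$. Writing $n = \alpha k$ makes the $k$-dependence explicit: the problem collapses to extremizing a function of $\alpha$ alone, leaving a transcendental equation whose solution identifies the claimed peak. I would retain the next-to-leading Stirling correction, $\log n! = n\log n - n + \tfrac12 \log(2\pi n) + \tfrac{1}{12n} + \cdots$, since these subleading pieces are what can supply finite shifts in the critical point and therefore potentially produce the factor of $1/3$.

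The hard part will be genuinely recovering the specific factor of $1/3$: the leading-order Stirling analysis, stopped at the dominant balance $\log(k/n) = 0$, places the critical point at $\alpha = 1$, not $\alpha = 1/3$. Closing the gap therefore requires an additional ingredient. I would search for this ingredient in two places: either in the subleading Stirling contributions and in how they combine with the $\tfrac12\log(2\pi n)$ prefactor to produce a correction of fractional order in $\alpha$, or in the precise interpretation of ``largest term'' implied by the surrounding application context (for instance, whether an application-dictated weight multiplies each term and effectively shifts the argmax toward smaller indices). Pinning down the source of the shift from $\alpha = 1$ to $\alpha = 1/3$, and then controlling it uniformly in $k\geq 1$ so that the qualifier ``very, very nearly'' is justified with explicit bounds, is where essentially all of the work will lie; once that shift is in hand, the remainder of the proof reduces to routine estimates of $f$ near its maximum.
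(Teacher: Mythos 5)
You are right to be stuck at the final step, and you should not keep looking for the ``additional ingredient'': it does not exist, because the statement as given is false. Your ratio computation $a_{n+1}/a_n = k/(n+1)$ already settles the question completely — the sequence $k^n/n!$ increases while $n+1<k$ and decreases thereafter, so the largest term sits at $n=\lfloor k\rfloor$ (with a tie between $n=k-1$ and $n=k$ when $k$ is an integer), i.e.\ at $\alpha=1$. A quick numerical check confirms this: for $k=10$ the terms at $t=9$ and $t=10$ are both $\approx 2756$, while the $t=3$ term is $\approx 167$. Neither the $\tfrac12\log(2\pi n)$ prefactor nor any higher Stirling correction can move the argmax by a constant factor; they shift it by $O(1)$ at most. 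So the honest conclusion of your plan is a disproof of the lemma, and you should state it as such rather than as an unfinished proof.

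For comparison, the paper's own argument proceeds by differentiating $k^t/\Gamma(t+1)$ in $t$, obtaining the critical-point equation $\log k - H_t + \gamma = 0$ with $H_t$ the harmonic number (equivalently $\psi(t+1)=\log k$). This is correct up to that point and is just a smooth version of your ratio test. The error enters when the equation is rearranged as $H_t = \log k - \gamma$ instead of $H_t = \log k + \gamma$; substituting $H_t \approx \log t + \gamma$ then yields $t \approx e^{-2\gamma}k \approx 0.315\,k$, which is the source of the spurious ``$k/3$''. With the sign handled correctly the same substitution gives $\log t = \log k$, i.e.\ $t\approx k$, in agreement with your elementary argument. (Note also that this lemma and its proof appear only inside an \texttt{if false} block in the appendix and are not used elsewhere in the paper, so the error is inert; but any revival of that material should replace $k/3$ by $k$.)
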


\begin{proof}
We begin with the Taylor series:
\begin{equation}
    e^k=\sum_{t=0}^\infty \frac{k^t}{t!}
\end{equation}
We note that for $k\geq 1$, $k^0/0!<k^2/2!$ and $\lim_{t\rightarrow\infty} k^t/t!=0$. We then take the $t$ derivative by using the analytic continuation of the factorial function:
\begin{equation}
    \frac{d}{dt} \left(\frac{k^t}{t!}\right)=\frac{k^t (\log (k)-H_t+\gamma)}{\Gamma(t+1)} 
\end{equation}
where $\gamma$ is the Euler-Mascheroni constant, $H_t$ is the $t$-th Harmonic number, and $\Gamma$ is the standard $\Gamma$ function. Equating this to 0 we see that there's a single root:
\begin{eqnarray}
    \log(k)-\gamma&=&H_t\\
    &\approx& \gamma+\log(t)
\end{eqnarray}
So picking $t\approx e^{-2\gamma} k$ is the largest term. $e^{-2\gamma}\approx 0.315$. In the case of a truncated series, if the truncation includes fewer than this many terms, the largest term is the last one.
\end{proof}
\fi

In this appendix we show that for temperatures above $\omega_0/\log N$, the population within the Dicke subspace is minimal, as well as the population in the low-excitation manifolds being minimal. These will combine into our argument for the number of angular momentum subspaces that must be considered for a given temperature in order to capture the majority of the population in the system.

For our analysis we neglect the Lamb shifts, and at the end argue why they do not change the results, except in high order. To begin with, consider the partial partition function $\tilde{Z}$ composed of only considering the contributions to the partition function of the two angular momentum subspaces with the largest value ($j=N/2$ and $j=N/2-1$). This partial partition function is given by:
\begin{eqnarray}
    \tilde{Z}&=&(1+2e^{-\beta \omega_0}+3e^{-2\beta \omega_0}+\cdots)+N(e^{-\beta\omega_0}+2e^{-2\beta\omega_0}+3e^{-3\beta\omega_0}+\cdots)\\
    &=& \sum_{j=0}^{"\infty "} (j+1)e^{-j\beta\omega_0}+N\sum_{j=0}^{"\infty "} je^{-j\beta\omega_0},
\end{eqnarray}
where $"\infty "$ signifies that these sums are not truly taken to infinity, but rather to $N+1$ and $N$ respectively, then the remaining terms are $N+1$ times the exponential factor and $N$ times the exponential factor, however, since $N$ is large we take it as $\infty$ for these weighted geometric sums. Note that $\tilde{Z}<Z_0<Z_{total}$, and so we may bound the probability of being in the $N/2$ subspace by:
\begin{equation}
    p(Dicke)< \frac{\sum_{j=0}^{"\infty "} (j+1)e^{-j\beta\omega_0}}{\tilde{Z}}.
\end{equation}
This will be negligible when $p(Dicke)\ll 1$, which occurs when the second term in $\tilde{Z}$ dominates. This is promised when:
\begin{equation}
    \sum_{j=0}^{"\infty "} e^{-j\beta \omega_0}[(j+1)-Ne^{-\beta\omega_0}j]<0.
\end{equation}
This can be guaranteed if $Ne^{-\beta\omega_0}>1+1/j$ for all $j$, which provides $T_c=\omega_0/\log N$. This means that for $T>T_c$ the population in the Dicke subspace must be negligible. As discussed in the main text, while in the rotating-wave approximation regime, the Lamb shifts are small enough to not alter the asymptotic behavior in these expressions, as $g_0\max \lambda\ll \omega_0$ is required.


Consider the case where the system is cold enough that the maximum number of excitations appreciably excited in the system is given by some $k_{max} \ll N$. With this condition, we can approximate the binomial term in the degeneracy function using the following expression \cite{das2016brief}:
    \begin{equation}
        \binom{N}{k} = (1+o(1))\frac{N^k}{k!}.
    \end{equation}
    
    Motivated by our restriction of the number of excitations to be small and bounded, we consider the degeneracy of the angular momentum ground state with $k$ excitations. Given $N$ spins, the ground state of the subspace with angular momentum $j$ has $k=N/2-j$ spin excitations. 














    Then, the degeneracy of the ground state with $k$ excitations is given by
    \begin{equation}
        d_g(k) = \frac{N - (2k-1)}{N - (k-1)}\binom{N}{N-k}.
    \end{equation}
    Rearranging this term and utilizing the approximation of the binomial coefficient, we find that
    \begin{equation}
        d_g(k) = \frac{1}{1- \frac{k}{N - (2k-1)}} \frac{N^k}{k!} (1+o(1)).
    \end{equation}
    Since $k \ll N$, we can safely approximate the degeneracy of the ground state with $k$ excitations as  $d_g(k) \approx N^k/k!$.
    
    The dimension of the ground state subspace is 1, and has a zero energy splitting. This implies that the ground state contributions to the partition function are given by,
    \begin{eqnarray}
       \nonumber Z_g(\beta)&=&\sum_{k=0}^{k_{\max}}e^{-\beta k\omega_0} \frac{N^k}{k!}\\
        &=&\sum_{k=0}^{k_{\max}} \frac{1}{k!}(e^{-\beta\omega_0}e^{\log N})^k\\
        &\approx & e^{e^{\log N-\beta\omega_0}} 
    \end{eqnarray}
    \if{false} Consider the ratio of sequential terms $m$ and $m+1$ in this series above:
    \begin{equation}
        \frac{1}{m+1}e^{\log N-\beta\omega_0},
    \end{equation}
    when $\log N<\beta \omega_0$ the sum will converge rapidly. However, when $\log N>\beta\omega_0$ the sum will converge far slower.
    \fi
    
    Then the population in these states is given by:
    
    \begin{equation}
        \frac{e^{e^{\log N-\beta\omega_0}} }{(1-e^{-\beta\omega_0})^{-1}(e^{-\beta\omega_0/2}+e^{\beta\omega_0/2})^N},
    \end{equation}
    which is minimal for $T>T_c$.
    
\if{false}    In order to satisfy the assumption that the thermal state occupation is contained below some $k_{max} \ll N$, it must be the case that the contributions of higher excitation spaces is decreasing. In order to satisfy this condition, it must be the case that the prior series have a ratio less than 1. We can use this to compute a benchmark for determining whether or not the ground state is populated, where we re-introduce $\hbar$:
    \begin{equation}
        e^{-\hbar\beta\omega_0} N < 1 \implies T < \frac{\hbar\omega_0}{k_b \log N}.
    \end{equation}
    
    \fi
    Given these properties for temperatures above $\omega_0/\log N$, we then define the \textit{ground state cutoff temperature} to be
    \begin{equation}
        T_c =  \frac{\hbar\omega_0}{k_b \log N}.
    \end{equation}
    Above this temperature the population in the Dicke subspace and in the lowest energy states will be minimal, so the more full structure of the system must be considered.

    \subsection{Algorithmic scaling}\label{ndepen}
    
    
    \begin{theorem}\label{size}
    For an arbitrary constant error, $\delta$, the runtime to compute a parameter which is a function of $j$ and $k$, and subexponential in $j$, is $\Theta (\sqrt{N})$.
    \end{theorem}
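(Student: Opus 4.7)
The plan is to reduce the task to bounding the effective support of the double sum $\sum_k w_k \sum_j d_j f(j,k)$, where $w_k = e^{-\beta k \omega_0}$ arises from the Boltzmann factor (with any Lamb--shift dependence absorbed into $f$) and $f(j,k)$ is the assumed subexponential--in--$j$ observable; the cost per term is assumed $O(1)$ under the first--order truncation discussed earlier, so the runtime scales with the number of $(j,k)$ pairs retained. Since the $k$--dependence is already handled by the geometric bound on the Boltzmann tail, the theorem reduces to showing that for any fixed $k$ we need exactly $\Theta(\sqrt{N})$ values of $j$ to achieve relative error $\delta$.

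For the upper bound, I would recall that $d_j = \binom{N}{N/2-j}-\binom{N}{N/2-j-1}$ can be read off as the (unnormalised) distribution of the largest irreducible block size obtainable from $N$ spin--$1/2$ particles, which by the de Moivre--Laplace theorem (or a Stirling estimate) behaves like a one--sided Gaussian with mean $j^*\approx\sqrt{N}/2$ and standard deviation of order $\sqrt{N}$. Outside any window of radius $C_\delta\sqrt{N}$ about $j^*$, the contribution $\sum_j d_j$ is super--polynomially small in $C_\delta$, so choosing $C_\delta$ large enough makes the truncated sum within relative error $\delta$ of the full $\sum_j d_j$. Because $f(j,k)$ is subexponential in $j$, the product $d_j f(j,k)$ inherits the Gaussian tail behaviour of $d_j$: the log of $d_j$ contributes a quadratic term of order $-(j-j^*)^2/N$, which dominates any $o(j)$ contribution from $\log f(j,k)$ at the tails. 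Hence retaining only $|j-j^*|\le C_\delta\sqrt{N}$ yields the same relative error $\delta$ for the weighted sum, and so $O(\sqrt{N})$ evaluations suffice.

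For the lower bound, I would use that the variance of the normalised distribution $d_j/\sum d_j$ is itself $\Theta(N)$, so any subset of $j$--values of size $o(\sqrt{N})$ must omit at least a constant fraction of the total $d_j$--mass. Subexponentiality of $f(j,k)$ prevents the weighting from concentrating this distribution onto a narrower window, so for $\delta$ smaller than that constant fraction, any correct algorithm must sample $\Omega(\sqrt{N})$ distinct values of $j$. Combined with the upper bound, this gives runtime $\Theta(\sqrt{N})$.

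The main obstacle is making the word ``subexponential'' precise enough that it really does preserve the support of the Gaussian envelope: a function that is subexponential only in an asymptotic $j\to\infty$ sense could still grow fast enough on the finite range $[0,N/2]$ to shift the effective centre of mass outside the $\sqrt{N}$ window. I would address this by working with the quantitative notion $\log f(j,k) = o(j)$ uniformly on the relevant range, which makes the Gaussian term from $\log d_j$ the dominant contribution to the log--weight and pins the width of the effective support at $\Theta(\sqrt{N})$. The argument that the $k$--tail truncates at $O(T)$ terms independently of this analysis is a straightforward geometric--series estimate and can be handled separately as in the main text.
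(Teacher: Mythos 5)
Your proposal is essentially correct, but it reaches the $\Theta(\sqrt{N})$ bound by a genuinely different route than the paper. The paper never invokes a Gaussian approximation in its proof: it works with the exact ratio $d_j/d_{j+1}=\frac{2j+1}{2j+3}\cdot\frac{N/2+j+2}{N/2-j}$, shows this ratio is monotone and exceeds $1+N^{-1/2}$ once $j=\Theta(\sqrt{N})$ past the mode, folds in the dimension factor $\mu/(\mu+1)$ from $|\mathcal{B}_{j,k}|$, and then applies an elementary geometric-series lemma (a series with term ratio $(1+\epsilon)^{-1}$ reaches fractional error $\delta$ after $O(\log(\delta^{-1})/\epsilon)$ terms, here $\epsilon=N^{-1/2}$) to truncate the $j$-sum; the matching lower bound is imported from the authors' prior work on the region of strong support. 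Your version replaces the ratio test by a de Moivre--Laplace/Stirling envelope $d_j\propto(2j+1)e^{-2j^2/N}$ and direct tail estimates, which is cleaner and makes the role of the width $\Theta(\sqrt{N})$ more transparent; the paper's ratio argument is more elementary and avoids justifying the asymptotic normal approximation. Two caveats on your side. First, your lower bound as stated does not follow from the variance being $\Theta(N)$ alone (a distribution can have large variance yet concentrate most of its mass on $O(1)$ points); you additionally need the anti-concentration fact $\max_j d_j/\sum_j d_j = O(N^{-1/2})$, which does hold here and which is essentially what the cited prior-work result supplies. Second, your proposed formalization $\log f(j,k)=o(j)$ is not strong enough to keep the weighted mass inside the $\sqrt{N}$ window: e.g.\ $\log f=j/\log\log N$ satisfies it but shifts the mode of $d_jf$ to $j\sim N/\log\log N$. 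Since the Gaussian exponent is only $-2j^2/N$, you need something like polynomial growth, $\log f(j,k)=O(\log j)+O(\log k)$, which covers every observable the paper actually treats; note the paper's own proof silently drops $f$ from the ratio altogether, so it shares this gap rather than resolving it.
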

    
    
    
    
    To prove the dependence on $N$, we begin by noting that $d_j$ has a unique maximum at $j^*$, about which the ratio of adjacent terms is $1+O(N^{-3/2})$ \cite{gunderman2022lamb}. 
    Generally the ratio between two adjacent angular momentum subspaces is given by:
    \begin{eqnarray}
        \frac{d_j}{d_{j+1}}=\frac{2j+1}{2j+3}\cdot \frac{N/2+j+2}{N/2-j}.
    \end{eqnarray}
    Differentiating this, we see that there are no valid $j$ values that extremize this expression, and it begins with a ratio of $N$ for $j=N/2-1$ and tends to $\frac{1}{3}$ for $j=0$. Therefore, this ratio is monotonic, and bijective, over the domain of possible $j$ values, having a ratio between $1/3$ and $N-1$. 
    
    Next we will need the following lemma, a well-known result of geometric series:
    
    
    \begin{lemma}
    The number of terms that must be summed such that the geometric series with ratio $(1+\epsilon)^{-1}$ has a fractional error of $\delta$ is given by $O(\log(\delta^{-1})\epsilon^{-1})$, for $\delta\ll 1$ and $\epsilon\ll 1$. 
    \end{lemma}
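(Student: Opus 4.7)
The plan is to reduce the lemma to a direct computation of the tail of a geometric series and then invert the resulting logarithmic inequality. First I would let $r = (1+\epsilon)^{-1}$ and note that the full sum is $S = \sum_{k=0}^{\infty} r^k = 1/(1-r) = (1+\epsilon)/\epsilon$, while the partial sum $S_n = \sum_{k=0}^{n-1} r^k = (1-r^n)/(1-r)$. The tail $S - S_n = r^n/(1-r)$ therefore has the convenient property that the \emph{fractional} error $(S-S_n)/S$ equals $r^n$ exactly, with all dependence on $\epsilon$ cancelling out of the ratio itself.

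Next I would enforce the prescribed accuracy by demanding $r^n \leq \delta$, or equivalently $n\log(1+\epsilon) \geq \log(\delta^{-1})$. Inverting gives the exact bound $n \geq \log(\delta^{-1})/\log(1+\epsilon)$.

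The final step is to use the Taylor expansion $\log(1+\epsilon) = \epsilon - \epsilon^{2}/2 + O(\epsilon^{3})$, which is valid precisely in the regime $\epsilon \ll 1$ stipulated in the hypothesis. Thus $1/\log(1+\epsilon) = \epsilon^{-1}(1 + O(\epsilon))$, and the smallest admissible $n$ satisfies
\begin{equation}
n = \bigl\lceil \log(\delta^{-1})/\log(1+\epsilon) \bigr\rceil = O\!\bigl(\log(\delta^{-1})\,\epsilon^{-1}\bigr),
\end{equation}
which is exactly the claimed asymptotic.

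There is essentially no obstacle here since the geometric series telescopes cleanly; the only subtlety worth flagging is that the $\delta \ll 1$ assumption is needed merely to ensure $\log(\delta^{-1})>0$ so that $n$ is a positive integer, while the $\epsilon \ll 1$ assumption is what licenses replacing $\log(1+\epsilon)$ by $\epsilon$ up to a bounded multiplicative constant absorbed into the $O(\cdot)$. The lemma will then be applied in the body of Theorem \ref{size} with $\epsilon$ taken to be the logarithmic spacing $\log(d_{j+1}/d_j)$ away from the peak $j^*$, which is of order $N^{-1/2}$ in the $\Theta(\sqrt{N})$-wide window, giving the advertised $\Theta(\sqrt{N})$ runtime bound.
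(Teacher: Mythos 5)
Your proposal is correct and follows essentially the same route as the paper's own proof: both compute the closed form of the infinite geometric series, impose that the partial sum capture a $(1-\delta)$ fraction of it, and invert the resulting inequality using $\log(1+\epsilon)\approx\epsilon$ to obtain $O(\log(\delta^{-1})\epsilon^{-1})$. Your observation that the fractional error equals $r^n$ exactly makes the bookkeeping slightly cleaner than the paper's version, but the argument is the same.
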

    
    \begin{proof}
    We begin by noting that:
    \begin{equation}
        \sum_{i=0}^\infty (1+\epsilon)^{-i}=\frac{1}{1-\frac{1}{1+\epsilon}}=1+\epsilon^{-1}
    \end{equation}
    So we wish to find $t$ such that:
    \begin{equation}
        \sum_{i=0}^t (1+\epsilon)^{-i}\geq (1-\delta)(1+\epsilon^{-1})
    \end{equation}

    Note that this sum is equivalent to:
    \begin{equation}
        \sum_{i=0}^t 2^{-i\log (1+\epsilon)}\geq (1-\delta)(1+\epsilon^{-1})
    \end{equation}
    Then this is given by:
    \begin{eqnarray}
    t &=& O(\log_{(1+\epsilon)^{-1}}(\delta))\\
    &=& O(-\log (\delta)/\log(1+\epsilon))\\
    &\approx & O(\log(\delta^{-1})/\epsilon)
    \end{eqnarray}
    \end{proof}
    
    \begin{corollary}
    For a series of real numbers where the terms monotonically decrease by at least a constant ratio, $1+\epsilon$, the number of terms needed to approximate their sum to relative error $\delta$ is $O(\log(\delta^{-1})/\epsilon)$.
    \end{corollary}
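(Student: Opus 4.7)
The plan is to reduce the corollary to the preceding Lemma by sandwiching the tail of the general series between a geometric upper bound and a trivial lower bound. First I would iterate the hypothesis $a_{i+1} \le a_i/(1+\epsilon)$ starting from $i=0$ to obtain the termwise inequality $a_i \le a_0 (1+\epsilon)^{-i}$, so that the whole series is dominated termwise by a geometric series with ratio $(1+\epsilon)^{-1}$, for which the Lemma already delivers the desired count.

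Next I would bound the tail after summing the first $t+1$ terms by evaluating the geometric majorant,
\begin{equation}
R_t := \sum_{i=t+1}^\infty a_i \le a_0 \sum_{i=t+1}^\infty (1+\epsilon)^{-i} = \frac{a_0 (1+\epsilon)^{-t}}{\epsilon},
\end{equation}
which is exactly the closed form used inside the proof of the Lemma. For the denominator of the relative error, the trivial bound $S := \sum_{i=0}^\infty a_i \ge a_0$ gives $R_t/S \le (1+\epsilon)^{-t}/\epsilon$, an estimate that depends only on $t$ and $\epsilon$ and not on the particular sequence.

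Imposing $R_t/S \le \delta$ and solving for $t$ yields $t \ge \log((\epsilon\delta)^{-1})/\log(1+\epsilon)$. Using $\log(1+\epsilon) = \Theta(\epsilon)$ for $\epsilon \ll 1$ and $\log((\epsilon\delta)^{-1}) = \log\epsilon^{-1} + \log\delta^{-1} = O(\log\delta^{-1})$ in the regime $\delta \le \epsilon$ of practical interest, this matches the claimed $t = O(\log(\delta^{-1})/\epsilon)$ exactly.

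The only subtle step is the lower bound on $S$: were the $a_i$ signed or nearly canceling, the estimate $S \ge a_0$ would fail and one would need additional problem-specific input. In the intended application (Boltzmann-weighted degeneracies and moments of Lamb-shift eigenvalues summed over $j$ or $k$) every term is nonnegative, so the crude bound suffices and the reduction to the Lemma is a single line of algebra once the termwise inequality is established.
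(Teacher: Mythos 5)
Your reduction to the Lemma via the termwise domination $a_i \le a_0(1+\epsilon)^{-i}$ is the right idea and is essentially what the paper intends (its own proof of the corollary is a one-line appeal to the Lemma), and your observation that nonnegativity of the terms is needed is a point the paper's statement glosses over. However, there is a quantitative gap in your denominator bound. Using only $S \ge a_0$ you obtain $R_t/S \le (1+\epsilon)^{-t}/\epsilon$ and hence $t \gtrsim \log((\epsilon\delta)^{-1})/\epsilon$, which matches the claimed $O(\log(\delta^{-1})/\epsilon)$ only under your side condition $\log(\epsilon^{-1}) = O(\log(\delta^{-1}))$. That condition fails precisely where the corollary is used: in Theorem \ref{size} the error $\delta$ is an arbitrary \emph{constant} while $\epsilon = N^{-1/2}$, so your version yields $O(\sqrt{N}\log N)$ terms rather than $O(\sqrt{N})$ and would spoil the claimed $\Theta(\sqrt{N})$ optimality.

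The fix is to use the ratio hypothesis in the other direction to lower-bound the head sum rather than settling for $S \ge a_0$. From $a_i \ge a_t(1+\epsilon)^{t-i}$ for $i \le t$ one gets
\begin{equation}
\sum_{i=0}^{t} a_i \;\ge\; a_t\sum_{m=0}^{t}(1+\epsilon)^{m} \;=\; a_t\,\frac{(1+\epsilon)^{t+1}-1}{\epsilon},
\qquad
R_t \;\le\; a_t\sum_{m=1}^{\infty}(1+\epsilon)^{-m} \;=\; \frac{a_t}{\epsilon},
\end{equation}
so that $R_t/S \le \bigl((1+\epsilon)^{t+1}-1+1\bigr)^{-1} = (1+\epsilon)^{-(t+1)}$. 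The spurious $1/\epsilon$ cancels, and $t = O(\log(\delta^{-1})/\log(1+\epsilon)) = O(\log(\delta^{-1})/\epsilon)$ follows with no restriction relating $\delta$ and $\epsilon$. With this one-line repair your argument proves the corollary as stated and in the strength actually needed downstream.
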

    
    \begin{proof}
    This is the same as the above, except now we allow the ratio to possibly become more severe than $1+\epsilon$, as our summations will have.
    \end{proof}
    
    
    
 \begin{proof}[Proof of Theorem \ref{size}] Fix $k$. Then the set of possible $j$ values falls in the range $[\max\{j_{min},\frac{N}{2}-k\},\frac{N}{2}]$. The ratio of populations between angular momentum subspace $j$ and $j+1$, with $j$ in the range aforementioned, is then given by:
    \begin{equation}
        r=\frac{d_j}{d_{j+1}}\frac{\mu}{\mu+1},
    \end{equation}
    where $\mu$ is the number of states available for subspace $j$ (if $j$ is not the smallest value permitted for a given $k$ choice, one could have say the lower $3$ levels for $j$ and the lower $4$ for $j+1$).
    
    Generally $\frac{d_j}{d_{j+1}}\in (1/3, N-1]$, but suppose we restrict ourselves to $j$ such that $\frac{d_j}{d_{j+1}}\in (1+N^{-1/2},N-1]$, which occurs for $j=\Theta (j^*)=\Theta(\sqrt{N})$. Now, we wish for $r=1+\Omega(N^{-1/2})$ to be satisfied. Note that $\mu/(\mu+1)$ begins at $\frac{1}{2}$ and tends to $1$, so if we can ensure $\mu/(\mu+1)=1-O(N^{-1/2})$ then $r$ will satisfy our desired value. Once $\mu\approx \sqrt{N}$, then:
    \begin{eqnarray}
        \frac{\mu}{\mu+1}&=& \frac{\sqrt{N}}{\sqrt{N}+1}\\
        &=& \frac{1}{1+N^{-1/2}}\\
        &\approx & 1-N^{-1/2}
    \end{eqnarray}
    Putting this together, it means that to obtain the sum for some fixed $k$ value, assuming $d_j/d_{j+1}=1+N^{-1/2}$, we sum from $\max\{ j_{min},\frac{N}{2}-k\}$ for $O(\sqrt{N})$ terms, then for another $O(\sqrt{N})$ terms whereby these terms converge to within a $\delta$ error within $O(\sqrt{N})$ terms from the prior lemma and corollary. Lastly, note that $d_j/d_{j+1}=1+N^{-1/2}$ for $j=O(\sqrt{N})$ so when the ratio is below this, there will be at most $O(\sqrt{N})$ more $j$ values to sum. Putting these together, we obtain a $\delta$ error, for any constant $\delta$, with only $O(\sqrt{N})$ terms summed.
    
    Lastly, for the matching lower-bound, we can use the result from our prior work, \cite{gunderman2022lamb}, where we showed that $\Omega(\sqrt{N})$ angular momentum values contain the far majority of the degeneracies in the system. This provides a lower-bound for higher excitations whereby most of the degeneracies are available. Together this means that this procedure is effectively optimal in the spin system size, $N$, with a runtime of $\Theta(\sqrt{N})$.
    \end{proof}

\if{false}    
    \subsection{$T$ dependence}
    
    \begin{equation}
        e^{-\beta\omega_0} P_k+D_j\leq P_{k+1}\leq e^{-\beta\omega_0} 2 P_k+D_j
    \end{equation}
    Leads to $O(T)$ bound, although requires caution since this is a limit of large $T$ and you can't get that high without breaking RWA.
    
        The ratio in the populations in angular momentum subspace $j$ and $j+1$ is given by:
    \begin{equation}
    \frac{d_j}{d_{j+1}}\frac{\mu}{\mu+1} e^{-\beta\omega_0}
    \end{equation}
    where $\mu$ ranges from $1$ to $2j+1$. If we require this ratio to be at least $2$ (or any value separated from $1$) then the ratio between $j$ and $j+1$ subspaces will be $2$, and the ratio of $j$ and $j+2$ will be $4$, and so forth, doubling each time. This occurs for:
    \begin{equation}
        T> \frac{\omega_0}{\log (\frac{1}{2}\frac{\mu}{\mu+1}(d_j/d_{j+1}))}.
    \end{equation}
    As shorter hand we can take the $\mu$ limit of $1$, reducing this to $\omega_0/\log(d_j/(4d_{j+1}))$. Note that for $j$ such that $d_j<4d_{j+1}$, there is no temperature for which these states can have a larger population.
    
\fi
    
    

    

\subsection{Monotonicity of Population Ratio Temperature}\label{ratio}
    Here we show that the temperature needed to have more population in a low $j$ value than the Dicke value slowly rises to a limiting temperature requirement of $\omega_0/(2\log 2)$. Note that for small $j$, the following approximation may be made:
    \begin{eqnarray}
        d_j &=& {N\choose N/2+j}\cdot\frac{2j+1}{N/2+j+1}\\
        &\approx & \frac{2^N}{\sqrt{N\pi/2}} e^{-(2j)^2/(2N)}\cdot\frac{2j+1}{N/2+j+1}
    \end{eqnarray}
    where the approximation is due to \cite{spencer2014asymptopia}. This is to leading order given by $2^N$ and so:
    \begin{equation}
        \lim_{N\rightarrow \infty} \frac{\omega_0 (N/2-j)}{\log d_j}=\frac{\omega_0}{2\log 2}.
    \end{equation}
    Lastly the expression $\omega_0 (N/2-j)/\log d_j$ has a unique root, the above, and is always positive up to $j^*$. After this temperature the other angular momentum subspaces are populated according to the main text.
    
\if{false}  
     Setting the Dicke ground state energy to $0$, by shifting the Hamiltonian, we can see that the population of the Dicke ground state is given by $1/Z(\beta)$. Now, consider the $k$--th partial partition function, formed by summing over the first $k$ excitation subspaces only:
    \begin{equation}
        Z_k(\beta) = \sum_{k'=0}^{k} \sum_{j=j_0(k')}^{N/2}\bigg( d_j e^{-\beta k' \omega_0} \sum_{\lambda\in\Lambda(j,k')} e^{-\beta \lambda g}\bigg).
    \end{equation}
    Since each term in the representation of the partition function is positive, we have that $Z_k(\beta) < Z(\beta)$, which implies that the population of the Dicke ground state must be bounded above by $1/Z_k(\beta)$ for all $k$. We have that we can compute the eigenvalues exactly for any $j,k$ subspace with dimension less than or equal to 9, which means we could in principle give exact values for $Z_k(\beta)$ up to $k = 9$ total excitations \cite{PREV PAPER}.
\fi
    
    
    
    

\subsection{Thermal Expectation Expansions}\label{pertexp}

Here we show the error in truncating the expansion for the partition function. The partition function can be used to compute averages directly related to the partition function and also by taking the expectation of the partition function we can compute other averages.

The exact expression for the partition function for our system of interest is given by:
\begin{eqnarray}
    Z&=&\sum_{j,k,\ \lambda \in \Lambda(j,k)} d_j e^{-\beta (k\omega_0 \openone+g_0L(j,k))}\\
    &=&\sum_k e^{-\beta k\omega_0 \openone} \sum_j d_j \sum_{\lambda\in \Lambda(j,k)} e^{-\beta g_0 \lambda}.
\end{eqnarray}
We focus on truncating the inner sum. We set the inner summation as $\tau(j,k,\beta)$. Then $Z$ is:
\begin{equation}\label{tauz}
    Z=\sum_k e^{-\beta k\omega_0} \tau(j,k,\beta).
\end{equation}
We may then expand $\tau$ in a power series. This will allow us to find the bounded error in a truncated expansion. Before beginning, we note that we may assume we have $\|\Lambda(j,k)\|_\infty\ll \omega_0/g_0$, as in this regime the rotating-wave approximation holds, which is needed to have the pair of good quantum numbers utilized in the Tavis--Cummings Hamiltonian. 

We begin with an expansion of the innermost sum in $\tau$:
\begin{equation}
    \sum_{\lambda\in \Lambda(j,k)} e^{-\beta g_0 \lambda}\approx \sum_\lambda 1-\beta g_0 \sum_\lambda \lambda + \frac{(\beta g_0)^2}{2}\sum_{\lambda} \lambda^2-\frac{(\beta g_0)^3}{3!}\sum_{\lambda} \lambda^3+\frac{(\beta g_0)^4}{4!}\sum_{\lambda} \lambda^4+\ldots
\end{equation}
As we showed in our prior work all odd moments of the eigenvalues are zero since eigenvalues come in positive-negative pairs \cite{gunderman2022lamb}. The number of eigenvalues is $|\mathcal{B}_{j,k}|$, so this expansion is then given by:
\begin{equation}
    |\mathcal{B}_{j,k}|(1+\frac{(\beta g_0)^2}{2}Var(\Lambda(j,k)))+O((\beta g_0)^4 \sum_\lambda \lambda^4)
\end{equation}
This expansion says that the partition function can be considered as a summation over the states of the uncoupled system plus contributions from second order transitions from the coupling Hamiltonian, and the fourth order, and so forth. We wish for this to truncate, so let's consider what temperatures we can promise the higher order terms to be negligible. Note that $(\frac{1}{2}(\beta g_0)^2\sum \lambda^2)^2>\frac{1}{4}(\beta g_0)^4\sum \lambda^4$, and generally $(\frac{1}{2}(\beta g_0)^2\sum \lambda^2)^t>\frac{1}{2^t}(\beta g_0)^{2t}\sum \lambda^{2t}$, and so if $\frac{1}{2}(\beta g_0)^2\sum \lambda^2<1$ then the higher order terms are all smaller and can be neglected if sufficiently small. This means that the first order correction is dominant so long as $|\mathcal{B}_{j,k}|\frac{(\beta g_0)^2}{2}Var(\Lambda(j,k))\ll 1$. 

\if{false}

We begin by upper bounding the sums of powers of the Lamb shifts:
\begin{equation}
    \sum_\lambda \lambda^{2t} \leq |\mathcal{B}_{j,k}| \|\Lambda (j,k)\|_\infty^{2t}\ll |\mathcal{B}_{j,k}| (\omega_0/g_0)^{2t},
\end{equation}
where the last expression is from being in the weak-coupling regime and being allowed to make the rotating-wave approximation. The coefficient for this term is $\frac{(\beta g_0)^{2t}}{(2t)!}$ and so the additional powers are able to be dropped so long as $\beta \omega_0<1$. This is around $0.8$ Kelvin for $\omega_0=10$ GHz. This cutoff can be reduced further by utilizing the expression for $\|\Lambda(j,k)\|_\infty$ from our prior paper--generally we just need $T>g_0\max_{j,k}\|\Lambda(j,k)\|_\infty$, which (if $k\ll N$) is roughly $T\gtrsim g_0N\sqrt{k}$.

Assuming we have $\beta\omega_0<1$, we have:\fi

Assuming we satisfy this condition we have:
\begin{equation}\label{orderone}
    \tau(j,k,\beta)\approx \sum_j d_j|\mathcal{B}_{j,k}|(1+\frac{(\beta g_0)^2}{2} Var(\Lambda(j,k))).
\end{equation}
Including this back in the summation over $k$, we have:
\begin{equation}
    Z_{total}=Z_0+Z_{pert}
\end{equation}
where $Z_0$ is the partition function when $g_0=0$ (just a collection of spins uncoupled with a cavity), and $Z_{pert}$ is the contribution due to the interaction between the cavity and spins.

This also means that the normalization factor, $Z^{-1}$, can be expanded as:
\begin{eqnarray}
    \frac{1}{Z_{total}}&=&\frac{1}{Z_0+Z_{pert}}\\
    &\approx& \frac{1}{Z_0}(1-\frac{Z_{pert}}{Z_0}),
\end{eqnarray}
where the last expression comes from utilizing a truncated geometric series expansion.


\subsection{Average Internal Energy Shift}\label{epert}
We may now compute the average energy of this system, as well as the associated shift induced due to the coupling:
\begin{equation}
    \langle E\rangle=-\frac{1}{Z}\frac{\partial}{\partial \beta}Z.
\end{equation}
Applying this to our expression for $Z_{total}$ in (\ref{tauz}) and applying the chain rule for differentiating, then using the first order correction given by (\ref{orderone}) we obtain:
\begin{equation}
    = \langle k \rangle_{Z} \omega_0-\frac{2}{\beta}\frac{Z_{pert}}{Z}.
\end{equation}
We may now break this into perturbation powers to obtain a more physically meaningful expression. For each line we plug in our expansions and drop higher order terms:
\begin{eqnarray}
    &=& \langle k\rangle_{Z_0}Z_0 \omega_0 (\frac{1}{Z_0}(1-\frac{Z_{pert}}{Z_0}))+[Z_{pert}\langle k\rangle_{Z_{pert}} \omega_0-\frac{2}{\beta}Z_{pert}](\frac{1}{Z_0}(1-\frac{Z_{pert}}{Z_0}))\\
    &= & \langle k\rangle _{Z_0} \omega_0 (1-\frac{Z_{pert}}{Z_0})+\langle k\rangle_{Z_{pert}}\frac{Z_{pert}}{Z_0} \omega_0-\frac{2}{\beta}\frac{Z_{pert}}{Z_0}\\
    &=& \langle k\rangle _{Z_0}\omega_0+((\langle k\rangle_{Z_{pert}}-\langle k\rangle_{Z_0}) \omega_0-\frac{2}{\beta})\frac{Z_{pert}}{Z_0}
\end{eqnarray}
This means that the shift in the average internal energy is given by $\Delta\langle E\rangle=((\langle k\rangle_{Z_{pert}}-\langle k\rangle_{Z_0}) \omega_0-\frac{2}{\beta})\frac{Z_{pert}}{Z_0}$. This expression can be considered as the average excitation difference between the perturbed portion of the system and the uncoupled system, $(\langle k\rangle_{Z_{pert}}-\langle k\rangle_{Z_0})$, with the addition of a thermally scaling factor of the perturbation expansion variable $-\frac{2}{\beta} \frac{Z_{pert}}{Z_0}$, and will always be negative, so long as the expansions are valid. This agrees with intuition as having the lower-energy Lamb shifted energy states to place some population into can only decrease the average energy of the system.

\if{false}
\subsection{Helmholtz Free Energy Shift}
The same computation can be done for the Helmholtz free energy. The expression for the Helmholtz free energy in terms of the partition function provides:
\begin{eqnarray}
    -\beta \langle A\rangle&=& \log Z\\
    &=& \log (Z_0+Z_{pert})\\
    &=& \log Z_0 + \log (1+Z_{pert}/Z_0)\\
    &=& \log Z_0+\frac{Z_{pert}}{Z_0}+O((\frac{Z_{pert}}{Z_0})^2)
\end{eqnarray}
This then provides $\Delta \langle A\rangle=-\frac{1}{\beta}\frac{Z_{pert}}{Z_0}$ as the shift induced in the Helmholtz free energy.
\fi
\if{false}

\subsection{Upper bound on $Z_{pert}/Z_0$}

As we already have an exact expression for $Z_0$, we focus on upper bounding $Z_{pert}$. We recall the expression for $Z_{pert}$ from above:
\begin{equation}
    Z_{pert} :=\sum_k e^{-\beta k\omega_0 \openone} \sum_j d_j [\frac{(\beta g_0)^2}{2}|\mathcal{B}_{j,k}|Var(\Lambda(j,k))+O((\beta g_0)^4)].
\end{equation}
We will only derive an asymptotic expression for the upper bound and so will drop the higher order term in the above expression. Noticing that the expression for $Var(\Lambda(j,k))$ is a sub-exponential function in $j$ and $k$ we may utilize the region of strong support for $d_j$ to reduce the summation from being over all $j$ to just $j$ roughly in $[0,2j^*]$, where $j^*$ is the maximally degenerate collective angular momentum subspace at $\sqrt{N}/2$ to leading order.

Since we are only finding an upper bound we note that we may use $d_{j^*}$ for all $d_j$ in this region as $d_{j^*}\geq d_j$ by definition. We now utilize the prior result that $d_{j^*}=O(2^N/N)$. Additionally, since we are above $T_c$ we are strongly biased toward values of $k$ whereby $j^*$ is populated due to the massive degeneracy of this regime, and so we may take $k\geq N/2-2j^*$. Using these approximations we have reduced the problem to:
\begin{equation}
    Z_{pert}\leq O\left((\beta g_0)^2\sum_{k=N/2-\sqrt{N}}^\infty e^{-\beta k\omega_0} \sum_{j=0}^{\sqrt{N}}\frac{2^N}{N} |\mathcal{B}_{j,k}|Var(\Lambda(j,k))\right).
\end{equation}
The next challenge is to simplify the argument of the innermost summation. The expressions for $|\mathcal{B}_{j,k}|$ and $Var(\Lambda(j,k))$ have a conditional structure based on whether all spin $J_z$ values have been saturated. We denote the variance for $k$ below this switch with $Var_1(\Lambda(j,k))$ and for $k$ above we use $Var_2(\Lambda(j,k)$. The same notation is used for $|\mathcal{B}_{j,k}|$. This switching happens when:
\begin{equation}
    k+j-\frac{N}{2}\geq 2j,
\end{equation}
or equivalently $k\geq \frac{N}{2}+j$. Since the variance function is continuous even at this value and we are only interested in the leading term, we will take the approximation that the dominant contributing factors are for the case above this transition so set $Var(\Lambda(j,k))=Var_2(\Lambda(j,k))$ and $|\mathcal{B}_{j,k}|=2j+1$. Plugging this into the expression for the variance, we obtain as the summand in the innermost sum:
\begin{eqnarray}
    \nonumber |\mathcal{B}_{j,k}|\operatorname{Var}(\Lambda(j,k)) &\approx & \frac{1}{2}\abs{\mathcal{B}_{j,k}}^4 -\frac{1}{3}\abs{\mathcal{B}_{j,k}}^3(2k'+4j+7)\\
    & &+ \abs{\mathcal{B}_{j,k}}^2(2jk'+2k'+4j+7/2) - \frac{1}{3}|\mathcal{B}_{j,k}|(6jk' + 8j + 4k' + 5)\\
    &=& \frac{1}{2}(2j+1)^4 -\frac{1}{3}(2j+1)^3(6j+2k-N+7)\\
    & &+ (2j+1)^2(2j^2+2jk-jN+6j+2k-N+7/2)\\
    & &- \frac{1}{3}(2j+1)(6j^2+6jk-3jN+12j+4k-2N+5)\\
    &=& (\frac{8}{3}j^3+4j^2+2j+\frac{1}{3}-4j^3+8j^2-5j-1+2j^2+\frac{7}{3}j+\frac{2}{3})N\\
    & & + (-\frac{16}{3}j^3-8j^2-4j-\frac{2}{3}+8j^3+16j^2+10j+2-4j^2-\frac{14}{3}j-\frac{4}{3})k\\
    & & + [-8j^4-\frac{80}{3}j^3-28j^2-12j-\frac{11}{6}]+[8j^4+32j^3+40j^2+20j+\frac{7}{2}]\\
    & &+[-4j^3-10j^2-\frac{22}{3}j-\frac{5}{3}]
\end{eqnarray}
where we have used $k':=k-k_0(j)=k+j-\frac{N}{2}$. Simplifying this expression we have:
\begin{equation}
    (-\frac{4}{3}j^3+14j^2-\frac{2}{3}j)N+(\frac{8}{3} j^3+4j^2+\frac{4}{3}j)k+(\frac{4}{3}j^3+2j^2+\frac{2}{3}j).
\end{equation}
We will drop the lower order terms to take this as $(-\frac{4}{3}j^3+14j^2)N+(\frac{8}{3} j^3+4j^2)k$. 

We now return to the original expression for $Z_{pert}$ and perform the innermost summation:
\begin{eqnarray}
    \sum_{j=0}^{\sqrt{N}} |\mathcal{B}_{j,k}|Var(\Lambda(j,k))&\approx&   \sum_{j=0}^{\sqrt{N}} (-\frac{4}{3}j^3+14j^2)N+(\frac{8}{3} j^3+4j^2)k\\
    &=& -\frac{1}{3}(\sqrt{N}+1)N^{3/2}(N-13\sqrt{N}-7)+\frac{2}{3}k(\sqrt{N}+1)\sqrt{N}(N+3\sqrt{N}+1)\\
    &\approx & -\frac{1}{3}N^{2}(N-13\sqrt{N})+\frac{2}{3}kN(N+3\sqrt{N}).
\end{eqnarray}
Our expression is now able to be computed:
\begin{eqnarray}
    Z_{pert}&=& O\left((\beta g_0)^2\sum_{k=N/2-\sqrt{N}}^\infty e^{-\beta k\omega_0}\frac{2^N}{N}[2kN(N+3\sqrt{N})-N^{2}(N-13\sqrt{N})]\right)\\
    &=& O\left((\beta g_0)^2\sum_{\tilde{k}=0}^\infty e^{-\beta (\tilde{k}+N/2-\sqrt{N})\omega_0}\frac{2^N}{N}[2(\tilde{k}+N/2-\sqrt{N})N(N+3\sqrt{N})-N^{2}(N-13\sqrt{N})] \right)\\
    &\approx & O\left((\beta g_0)^2\sum_{\tilde{k}=0}^\infty e^{-\beta (\tilde{k}+N/2-\sqrt{N})\omega_0}2^N[2\tilde{k} N+14N^{3/2}] \right)\\
    &=& O\left((\beta g_0)^22^Ne^{-\beta(N/2-\sqrt{N})\omega_0}[2N\frac{e^{\beta\omega_0}}{(e^{\beta\omega_0}-1)^2}+14N^{3/2}\frac{1}{1-e^{-\beta\omega_0}}]\right)
\end{eqnarray}
This expression alone provides the thermal state's variance in the Lamb shifts. In particular this is an expression for the degeneracy averaged variance as a function of particle number, $N$, and temperature, $T$ (given in terms of $\beta=T^{-1}$). We may finally bound the ratio:
\begin{eqnarray}
    \frac{Z_{pert}}{Z_0}&=&O\left(\frac{(\beta g_0)^22^Ne^{-\beta(N/2-\sqrt{N})\omega_0}[2N\frac{e^{\beta\omega_0}}{(e^{\beta\omega_0}-1)^2}+14N^{3/2}\frac{1}{1-e^{-\beta\omega_0}}]}{(1-e^{-\beta\omega_0})^{-1}(e^{-\beta\omega_0/2}+e^{\beta\omega_0/2})^N} \right)\\
    &=&O\left((\beta g_0)^2\left(\frac{2 e^{-\beta\omega_0 /2}}{e^{-\beta\omega_0/2}+e^{\beta\omega_0/2}}\right)^N\frac{[N\frac{e^{\beta\omega_0}}{(e^{\beta\omega_0}-1)^2}+7N^{3/2}\frac{1}{1-e^{-\beta\omega_0}}]}{(1-e^{-\beta\omega_0})^{-1}} \right)\\
    &=&O\left((\beta g_0)^2\left(\frac{2 e^{-\beta\omega_0 /2}}{e^{-\beta\omega_0/2}+e^{\beta\omega_0/2}}\right)^N[N(e^{\beta\omega_0}-1)^{-1}+7N^{3/2}] \right)\\
    &=& O\left((\beta g_0)^2\left(\frac{2}{1+e^{\beta\omega_0}}\right)^N[7N^{3/2}+N(e^{\beta\omega_0}-1)^{-1}] \right).
\end{eqnarray}
When in the regime that $\beta\omega_0\ll 1$ as well, then this is simplified to
\begin{equation}
    O\left((\beta g_0)^2\left(1-\frac{1}{2}\beta\omega_0\right)^N[7N^{3/2}+\frac{N}{\beta\omega_0}] \right).
\end{equation}
These expressions for the ratio of the partition functions provides a bound on the shift in the Helmholtz free energy in the presence of the coupling and is one element in the value of the average internal energy shift.

\fi

\subsection{Rapid Computation of Expectations with Certain Invariances}

Here we outline the analytical results implications for bounding expectations of observables of thermal states of the Tavis--Cummings model as well as the runtime improvements for numerical methods evaluating these expectations. For this we focus on observables with particular invariances, and list them as cases. For each case, $\sum$ means that the variable(s) must be summed over, whereas $\int$ means that that variable is either invariant with respect to that variable or can analytically have its dependence removed. In what follows $k$ is the number of excitations, $j$ is the angular momentum subspace value, and $m$ is the spin value within that angular momentum subspace value.

\begin{enumerate}
    \item $\sum j,k,m$: in this case $\mathcal{O}(k,j,m)$ depends on all three parameters. In this case the full summation must be considered which makes the problem nearly intractable and has runtime of $O(TN^2)$. If one can ensure that the region of strong support is satisfied, then this can be reduced to $O(TN)$.
    \item $\sum k,j \int m$: $\mathcal{O}(j,k)$ is the observable. The probabilities are only functions of $j$ and $k$ in this case so $Var(\Lambda(j,k))$ may be used in the expectation computation. Additionally, so long as $\mathcal{O}(j,k)$ is sub-exponential in $j$ we may use the region of strong support within the degeneracies to simplify the expression and improve the runtime of computing this expectation. This results in $O(T\sqrt{N})$ time.
    \item $\sum k, \int j,m$: in this case the observable only depends on the number of excitations $\mathcal{O}(k)$. Since the dependency on $j$ is removed and can be tabulated, this loses all dependency on $N$ and requires only $O(T)$ time.
    \item $\sum j, \int k,m$: this follows similar reasoning as the above. The temperature dependence can be removed and summed over independently, meaning just the spin states matter, taking $O(N)$ unless the region of strong support can be provided, in which case this reduces to $O(\sqrt{N})$.
\end{enumerate}

For observables such as $J_z$ the pair of quantum numbers remain valid and so the trace value is basis independent so we just take the summation over $|k,j,m\rangle$. Some examples of these categories include: $a^\dag a$ and $a+a^\dag$ in (c) and $J_z$ and $J_x$ in (b). In fact, any function of cavity operators ($a$ and $a^\dag$) and collective spin operators ($J_+$, $J_-$, and $J_z$) is covered in case (b), with any function that grows subexponentially in $j$ also being able to exploit the region of strong support. Some examples of subexponentially growing functions include $J_x$, $e^{itJ_z}$, and $e^{(a^\dag a^\dag- aa) it}$.

\if{false}\subsection{Simulation Equations Corrected}

For this section we mirror some of the transformations from \cite{wood_cavity_2014}. Our Hamiltonian with observable is given by:
\begin{equation}
    \omega_0 (a^\dag a +J_z)+g_0 (a^\dag J_-+aJ_+)+\eta J_z,
\end{equation}
where $\eta J_z$ generates the signal we wish to see and $\eta \ll \omega_0$. We move into a rotating frame given by $(\omega_0+\eta) (a^\dag a +J_z)$:
\begin{equation}
    H'(t)= \eta a^\dag a +g_0 (a^\dag J_-+aJ_+).
\end{equation}
Moving to an interaction picture given by $\eta a^\dag a$, this becomes:
\begin{eqnarray}
    H_I(t) &=& g_0 (e^{-i\eta t}a^\dag J_-+e^{i\eta t} J_+a)
\end{eqnarray}
Note that $(e^{-i\eta t}a^\dag J_-+e^{i\eta t} J_+a)$, within values of $j$ and $k$, is very similar to $\Lambda_{j,k}$. In particular the off diagonal entries are scaled by factors of $e^{\pm \eta t}$, and in particular the diagonal entries are all zero for odd powers, and the same as $\Lambda_{j,k}^2$ for even powers, regardless of the value for $\eta$. Since we will be taking the trace of this, and trace is basis invariant, this is equivalent to finding $\langle tr_m (\cos(t g_0 \Lambda_{j,k}))\rangle_{Z_{total}} $. Since we wish to preserve the sinusoidal behavior of this function, we stop simplifications here.

\if{false}
Taylor expanding this to the first nontrivial order gives:
\begin{equation}
    \langle \mathcal{B}_{j,k}(1-\frac{1}{2}g_0^2 Var(\Lambda_{j,k}) t^2)\rangle_{Z_{total}}
\end{equation}
The difference induced by the Lamb shift is then:
\begin{equation}
    \frac{1}{Z_0} (Z_{pert}-\sum_{j,k}\frac{1}{2} g_0^2 \mathcal{B}_{j,k}Var(\Lambda_{j,k}) t^2 Z_0(j,k))
\end{equation}
Slowly step through arguments to take careful account. This is likely mostly right.

where did $\eta$ go?? Do I just need $\eta$ small for the approximations?
\fi
\if{false}

While $[\Lambda_{j,k},\Gamma_{j,k}]\neq 0$, it is the case that $\Gamma_{j,k}\ll \Lambda_{j,k}$. Explicitly we have:
\begin{equation}
    tr(-\Gamma_{j,k}^2)=tr(\Lambda_{j,k}^2)
\end{equation}

We wish to compute:
\begin{equation}
    \langle e^{it (H+\eta J_z)}\rangle_{Z_{total}}
\end{equation}
with $\eta\ll \omega_0$.
We expand this out as:
\begin{equation}
    \langle k,j,m |e^{it (H_0+\eta J_z+H_{pert})}|k,j,m\rangle
\end{equation}
We begin by moving into the interaction frame of $H$, which provides:
\begin{equation}
    H'=\eta e^{it H} J_z e^{-it H}=\eta e^{itH_{pert}}J_z e^{-it H_{pert}}
\end{equation}
Note that $H_0$ cancels out in the above as it commutes with $J_z$. 
\if{false}Next, observe that:
\begin{equation}
    [J_-a^\dag+J_+ a,J_z]=J_-a^\dag -J_+ a,\quad [J_-a^\dag -J_+ a,J_z]=J_-a^\dag +J_+ a.
\end{equation}
This means that the commutation relations... nope...
\fi

While we could commute the operators, instead we take another approach. Since we will be computing the trace for various choices of time, we require that the net change in $m$ is zero. Only $H_{pert}$ can induce a change in $m$, so we need the power of $H_{pert}$ to be even between the left and right terms. [should this instead be the imaginary terms cancelling? since we could have a shift then shift back in the later Taylor series for time evolution. now I'm back again... is basis independence of trace gonna help me here?] Taylor expanding action of $e^{it H_{pert}}$ we obtain:
\begin{equation}
    \langle k,j,m | H' |k,j,m\rangle=\langle k,j,m | \cos(\Lambda_{j,k}t)  J_z  \cos(\Lambda_{j,k}t)+ \sin(\Lambda_{j,k}t) J_z  \sin(\Lambda_{j,k}t) |k,j,m\rangle 
\end{equation}
While $[H_{pert},J_z]\neq 0$, the action of $\Lambda_{j,k}$ in terms of the parameters $k,j,m$ is able to be rearranged. This means that we may reduce the expression to:
\begin{eqnarray}
    &=& \langle k,j,m | (\cos^2(\Lambda_{j,k}t)+\sin^2(\Lambda_{j,k}t))  J_z  |k,j,m\rangle\\
    &=& \langle k,j,m | J_z  |k,j,m\rangle
\end{eqnarray}
The Lamb shift induced signal is then the below expressions averaged over $Z_{pert}$.

\if{false}
[can we reorder this and cancel it out?]

[think to see if there's a way to view this without destroying exp/cyclical behavior. rewrite cos and sin in terms of exp then do like before. This is closer... sin is the problem here since parity and it doesn't keep things fixed]
Since we wish to integrate out the variable $m$, we will will only keep terms that are $O(g_0^2)$:
\begin{equation}
    \langle k,j,m | H' |k,j,m\rangle=\langle k,j,m | (I+\frac{1}{2}(g_0\Lambda_{j,k}t)^2)  J_z  (I+\frac{1}{2}(g_0\Lambda_{j,k}t)^2)+ (g_0\Lambda_{j,k}t)  J_z (g_0\Lambda_{j,k}t) |k,j,m\rangle 
\end{equation}
seems this is the best I can do... which is fair since we required the parameter in the above to be $\ll 1$ for expansion of partition function anyway.
\begin{equation}
    \langle k,j,m | H' |k,j,m\rangle=\langle k,j,m | J_z +2J_z(g_0\Lambda_{j,k}t)^2|k,j,m\rangle 
\end{equation}
$\Lambda^2~O(m)$ doesn't it? Can I just take it as $e^{if(m)}=\cos(f(m))+i\sin(f(m))$?
\fi
\fi
\fi
\if{false}
\subsection{Simulation Equations}\label{evolution}

We now derive expressions for $\langle e^{itJ_z}\rangle $. First, we note that $\langle e^{itJ_z}\rangle_{Z_{pert}}=0$, as in the basis $|k,j,m\rangle$ the interaction Hamiltonian has overlap $0$, and since the trace is basis independent this component will not alter the expectation value. We then only need to consider the uncoupled Hamiltonian's expectation and the normalization from the interaction term.

\if{false}
\begin{eqnarray}
    \langle J_z\rangle&=&\omega_0 \langle a^\dag a J_z\rangle +\omega_0 \langle J_z^2\rangle\\
    &=&\sum_{k,j,m}\langle k,j,m|a^\dag a J_z|k,j,m\rangle+\langle k,j,m|J_z^2|k,j,m\rangle\\
    &=& \sum_{k,j,m}km+d_j m^2\\
    &=& 0+2\sum_{k,j} d_j \sum_{m=0}^j m^2\\
    &=& \sum_k e^{-\beta \omega_0 k} \sum_{j=N/2}^{N/2-k} \frac{1}{3}d_j j(j+1)(2j+1) 
\end{eqnarray}

Whereas we have:
\begin{eqnarray}
    \langle J_z^2\rangle &=& \omega_0\langle a^\dag a J_z^2\rangle+\omega_0\langle J_z^3\rangle\\
    &=&\sum_{k,j,m}\langle k,j,m|a^\dag a J_z^2|k,j,m\rangle+\langle k,j,m|J_z^3|k,j,m\rangle\\
    &=& \sum_{k,j,m}km^2+d_j m^3\\
    &=& 2\sum_{k,j} kd_j \sum_{m=0}^j m^2+0\\
    &=& \sum_k ke^{-\beta \omega_0 k} \sum_{j=N/2}^{N/2-k} \frac{1}{3}d_j j(j+1)(2j+1)
\end{eqnarray}
\fi
The real part of $\langle e^{itJ_z}\rangle$ is due to $a^\dag a$, whereas the imaginary part is due to the Zeeman portion. We may not truncate the Taylor series for $e^{itJ_z}$ without losing the periodicity of the function, so we will instead perform the sums using the basis $|k,j,m\rangle$:
\begin{eqnarray}
    Re(\langle e^{itJ_z}\rangle)&=& \sum_k ke^{-\beta \omega_0 k}\sum_j d_j Re(\sum_{m=-j}^j e^{itm})\\
    &=& \sum_k ke^{-\beta \omega_0 k}\sum_j d_j Re(e^{-ijt}\sum_{m=0}^{2j} e^{itm})\\
    &=& \sum_k ke^{-\beta \omega_0 k}\sum_j d_j Re(\frac{e^{it(2j+1)}-1}{e^{it}-1}e^{-ijt})\\
    &=& \sum_k ke^{-\beta \omega_0 k}\sum_j d_j(\sin(jt)\cot(t/2)+\cos(jt))
\end{eqnarray}
\if{false}
\begin{eqnarray}
    Re(\langle e^{itJ_z}\rangle)&=& \sum_k ke^{-\beta \omega_0 k}\sum_j d_j Re(\sum_{m=-j}^j e^{itm})\\
    &=& 2\sum_k ke^{-\beta \omega_0 k}\sum_j d_j Re(\sum_{m=j_{min}}^j e^{itm})\\
    &=...& 2\sum_k ke^{-\beta \omega_0 k}\sum_j d_j Re(\frac{e^{itj}-1}{e^{it}-1}e^{itj_{min}})\\
    &\approx& \sum_k ke^{-\beta \omega_0 k}\sum_j d_j Re((e^{itj}-1)(e^{-it}-1)\frac{e^{itj_{min}}}{1-\cos(t)})\\
    &\approx& \sum_k ke^{-\beta \omega_0 k}\sum_j d_j \frac{1}{1-\cos(t)} Re(1-\cos(t)-\cos(jt)+\cos((j-1)t))\\
    &\approx& \sum_k ke^{-\beta \omega_0 k}\sum_j d_j \frac{1}{1-\cos(t)} (1-\cos(t)+2\sin(t/2)\sin((j-1/2)t))\\
    &\approx& \sum_k ke^{-\beta \omega_0 k}\sum_j d_j (1-\csc(t/2)\sin((j-1/2)t))
\end{eqnarray}
\fi

The other expression is then given by:
\begin{eqnarray}
    Im(\langle e^{itJ_z}\rangle)&=& \sum_k e^{-\beta \omega_0 k}\sum_j d_j Im(\sum_{m=-j}^j e^{itm}m)\\
    &=& \sum_k e^{-\beta \omega_0 k}\sum_j d_j Im(e^{-ijt}\sum_{m=0}^{2j} e^{itm}(m-j))\\
    &=& \sum_k e^{-\beta \omega_0 k}\sum_j d_j 2\sin^2(t/2)((j+1)\sin(jt)-j\sin((j+1)t))
\end{eqnarray}
\if{false}
\begin{eqnarray}
    Im(\langle e^{itJ_z}\rangle)&=& \sum_k e^{-\beta \omega_0 k}\sum_j d_j Im(\sum_{m=-j}^j e^{itm}m)\\
    &\approx& \sum_k e^{-\beta \omega_0 k}\sum_j d_j Im(\sum_{m=j_{min}}^j e^{itm}m)\\
    &\approx& \sum_k e^{-\beta \omega_0 k}\sum_j d_j Im(\frac{1}{i}\frac{d}{dt}[\frac{e^{itj}-1}{e^{it}-1}e^{itj_{min}}])\\
    &\approx& \sum_k e^{-\beta \omega_0 k}\sum_j d_j Im(\frac{1}{i}\frac{d}{dt}[(e^{itj}-1)(e^{-it}-1)\frac{e^{itj_{min}}}{1-\cos(t)}])\\
    &\Rightarrow & \frac{1}{(1-\cos(t))^2}[(1-\cos(t))(\sin(t)+\sin((j-1)t)-j(\sin((j-1)t)-\sin(jt)))\\
    & & +\sin(t)(1-\cos(t)-\cos(jt)+\cos((j-1)t))]\\
    &= & \frac{1}{(1-\cos(t))}[j\sin(jt)+\sin(t)-(j-1)\sin((j-1)t)\\
    & & +\sin(t)(1-\csc(t/2)\sin((j-1/2)t))]\\
    &=& \frac{1}{1-\cos(t)}[j\sin(jt)+2\sin(t)-(j-1)\sin((j-1)t)-2\cos(t/2)\sin((j-1/2)t)]\\
    &=& \frac{1}{1-\cos(t)}((j-1)\sin(jt)-j\sin((j-1)t)+2\sin(t))
\end{eqnarray}
\fi
\fi
\subsection{Perturbations and Inhomogeneities}\label{perts}

The following theorem is used, it is a direct consequence from the Bauer-Fike Theorem and Weyl's inequality \cite{bauer1960norms,eisenstat1998three}:
\begin{theorem}\label{inhombound}
Let $H$ and $H+\delta H$ be Hermitian matrices with eigenvalues $\{\lambda_1,\lambda_2,\ldots\}$ and $\{\mu_1,\mu_2,\ldots\}$, respectively, in non-increasing order. Then for all eigenvalues, with index $i$, we have:
\begin{equation}
    |\lambda_i-\mu_i|\leq \|\delta H\|_2
\end{equation}
\end{theorem}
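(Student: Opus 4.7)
The plan is to prove this as a direct corollary of Weyl's inequality for sums of Hermitian matrices, rather than invoking Bauer--Fike (which is more natural for diagonalizable perturbations and gives a similar bound using the condition number of the eigenvector matrix --- here trivially 1 since $H$ is Hermitian). Both routes yield the same conclusion, but the Weyl route is cleaner because the operator-2-norm of a Hermitian matrix coincides exactly with its spectral radius.

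First I would recall Weyl's inequality: if $A$ and $B$ are $n\times n$ Hermitian with eigenvalues $\alpha_1\geq\alpha_2\geq\cdots\geq\alpha_n$ and $\beta_1\geq\beta_2\geq\cdots\geq\beta_n$ respectively, and $A+B$ has eigenvalues $\gamma_1\geq\cdots\geq\gamma_n$, then for every index $i$
\begin{equation}
\alpha_i+\beta_n \;\leq\; \gamma_i \;\leq\; \alpha_i+\beta_1.
\end{equation}
Applying this with $A=H$ and $B=\delta H$ (both Hermitian by hypothesis) gives $\lambda_i+\beta_n(\delta H)\leq \mu_i\leq \lambda_i+\beta_1(\delta H)$, where $\beta_1(\delta H)$ and $\beta_n(\delta H)$ are the largest and smallest eigenvalues of the perturbation.

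Next I would use the fact that for a Hermitian matrix the spectral norm equals the spectral radius, i.e.\ $\|\delta H\|_2=\max\{|\beta_1(\delta H)|,|\beta_n(\delta H)|\}$. This immediately bounds $\beta_1(\delta H)\leq \|\delta H\|_2$ and $\beta_n(\delta H)\geq -\|\delta H\|_2$, so the two-sided inequality collapses to $|\mu_i-\lambda_i|\leq \|\delta H\|_2$ for every index $i$, which is the stated result.

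The main ``obstacle'' is really only bookkeeping: one must be careful that the eigenvalues of $H$ and $H+\delta H$ are both listed in the same (non-increasing) order so that Weyl matches them index-by-index, and that the same index $i$ on the left-hand side refers to this common ordering. There is no analytical difficulty --- the result follows in one line from Weyl once the identification $\|\delta H\|_2=\rho(\delta H)$ for Hermitian $\delta H$ is in hand. Since the paper only needs this as a robustness tool (to argue that small inhomogeneities or couplings perturb the Lamb-shift spectrum by at most $\|\delta H\|_2$ per eigenvalue), no refinement of the bound is required.
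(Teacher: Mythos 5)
Your proof is correct and takes essentially the same route as the paper, which offers no independent derivation but simply cites Weyl's inequality (alongside Bauer--Fike) as the source of the bound; your Weyl-based argument is exactly the standard one being invoked. One minor remark: the paper later evaluates $\|\delta H\|_2$ as $\sqrt{\operatorname{tr}(\delta H\,\delta H)}$, i.e.\ the Frobenius norm, whereas your bound is in terms of the spectral norm, which is dominated by the Frobenius norm, so your version is at least as strong as the one the paper actually uses.
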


This result naturally extends to the case of block preserving perturbations. The summation structure for our Hamiltonian remains the same so long as $\|\delta H\|_2\ll \omega_0$. The perturbation expansion needs the variance updated with $tr((H+\delta H)^2)-tr(H+\delta H)^2$, which slightly increases the value, but if small enough perturbations, the net change is effectively nothing.

It is worth noting that formally $H$, and $\delta H$, act on Hilbert spaces with countably infinite bases, so without the thermal effects the norm may not converge. If one requires block preserving perturbations, then the level of perturbation needs only be bounded by $\omega_0$ within each block, and so the infiniteness of the Hilbert space is circumvented. Some examples of block preserving perturbations include those which preserve the angular momentum and total excitation quantum numbers, which would include most forms of field and coupling inhomogeneities, but would not include dipolar interactions or individual spin orientation perturbations.

Formally these operators are semi-bounded Hermitian operator on a countably infinite space and the results are rigorous so long as $\|\delta H\|_2$ is bounded \cite{kato2013perturbation}. Fortunately, the Hilbert spaces of $H$ and $H+\delta H$ are effectively truncated at large $k$ since the rotating-wave approximation breaks down, which results in the loss of the good quantum numbers and the lack of validity for the direct sum decomposition.

As an explicit example, let us consider dipolar flip-flop coupling between pairs of spins:
\begin{equation}
    \delta H_{ff}=d_{ij}(\sigma_+^{(i)} \sigma_-^{(j)}+\sigma_-^{(i)}\sigma_+^{(j)}),\quad i\neq j.
\end{equation}
Note that this is independent of the number of excitations in the cavity so the Hilbert space can be taken as a direct sum over the number of excitations in the cavity, meanning that we only require $\|\delta H_{ff}\|_2\ll \omega_0$ within each block. We then have:
\begin{eqnarray}
    \|\delta H_{ff}\|_2 &=& \sqrt{tr(\delta H_{ff} \delta H_{ff})}\\
    &=& \sqrt{ \sum_{ij} 2 d_{ij}^2}.
\end{eqnarray}
Then this says that interaction due to dipolar flip-flops does not change the summation structure so long as $\sum_{ij} 2 d_{ij}^2\ll \omega_0^2$. If one wished to bound this in some lattice-like scenario where each spin couples predominantly with $O(1)$ spins, then the total summation will have $\sum_{ij} 2d_{ij}^2$ going as $O(N)$. If no such restriction is made, then there will be $O(N^2)$ terms in the summation.


\if{false}
\subsection{Number Operator Self-evolution}\label{nshift}

To compute the expectation of the number operator under self-evolution, we will need to compute $\hat{n}_{pert}(j,k):=tr(\omega_0 a^\dag a L^2(j,k))$.
\if{false}
\begin{eqnarray}
    & &tr(a^\dag a L^2(j,k))\\
    &=& \langle 1_{j,k} |a^\dag a|1_{j,k}\rangle l_1^2(j,k)+\langle |\mathcal{B}_{j,k}| |a^\dag a||\mathcal{B}|_{j,k}\rangle l_{|\mathcal{B}_{j,k}|}^2(j,k)+\sum_{\alpha=2}^{|\mathcal{B}_{j,k}|-1} \langle \alpha_{j,k} |a^\dag a|\alpha_{j,k}\rangle (l_\alpha^2(j,k)+l_{\alpha+1}^2(j,k))\\
    &=& (k+j-\frac{N}{2}-1)(2j)(k+j-\frac{N}{2})+(k+j-\frac{N}{2}-|\mathcal{B}|)(|\mathcal{B}|(2j-(|\mathcal{B}|-1))(k+j-\frac{N}{2}+1-|\mathcal{B}|)\\
    & & +\sum_{\alpha=2}^{|\mathcal{B}|-1} (k+j-\frac{N}{2}-\alpha)(\alpha (2j-\alpha+1)(k+j-\frac{N}{2}-\alpha+1)+(\alpha+1)(2j-\alpha)(k+j-\frac{N}{2}-\alpha))\\
    &=& (k+j-\frac{N}{2}-1)(2j)(k+j-\frac{N}{2})+(k+j-\frac{N}{2}-|\mathcal{B}|)(|\mathcal{B}|(2j-(|\mathcal{B}|-1))(k+j-\frac{N}{2}+1-|\mathcal{B}|)\\
    & & + \sum_{\alpha=2}^{|\mathcal{B}|-1}[ (-2)\alpha^4+ (1+8j +4k-2N)\alpha^3\\
    & &+(-1-10j^2-j-12jk+6jN-2k^2-k+2kN-N^2/2+N/2)\alpha^2\\
    & &+(4j^3-2j^2+8j^2k-4j^2N+j+4jk^2-2jk-4jkN+jN^2+jN+k-N/2)\alpha\\
    & &+(2j^3+4j^2k-2j^2N+2jk^2+jN^2/2)]\\
    &=& (k+j-\frac{N}{2}-1)(2j)(k+j-\frac{N}{2})+(k+j-\frac{N}{2}-|\mathcal{B}|)(|\mathcal{B}|(2j-(|\mathcal{B}|-1))(k+j-\frac{N}{2}+1-|\mathcal{B}|)\\
    & & + A\frac{1}{30}(6|\mathcal{B}|^5-15|\mathcal{B}|^4+10|\mathcal{B}|^3-|\mathcal{B}|-30)+B\frac{1}{4}(|\mathcal{B}|^4-2|\mathcal{B}|^3+|\mathcal{B}|^2-4)\\
    & &+C\frac{1}{6}(2|\mathcal{B}|^3-3|\mathcal{B}|^2+|\mathcal{B}|-6)+D\frac{1}{2}(|\mathcal{B}|^2-|\mathcal{B}|-2)+E(|\mathcal{B}|-2)
\end{eqnarray}
\fi
Written using $k'=k-k_0(j)$ and following the notations used in \cite{gunderman2022lamb}, we may compute:
\begin{eqnarray}
    & &tr(a^\dag a L^2(j,k))\\
    &=& \langle 1_{j,k} |a^\dag a|1_{j,k}\rangle l_1^2(j,k)+\langle |\mathcal{B}_{j,k}| |a^\dag a||\mathcal{B}|_{j,k}\rangle l_{|\mathcal{B}_{j,k}|}^2(j,k)+\sum_{\alpha=2}^{|\mathcal{B}_{j,k}|-1} \langle \alpha_{j,k} |a^\dag a|\alpha_{j,k}\rangle (l_\alpha^2(j,k)+l_{\alpha+1}^2(j,k))\\
    &=& (k'-1)(2j)(k')+(k'-|\mathcal{B}|)(|\mathcal{B}|(2j-(|\mathcal{B}|-1))(k'+1-|\mathcal{B}|)\\
    & & +\sum_{\alpha=2}^{|\mathcal{B}|-1} (k'-\alpha)(\alpha (2j-\alpha+1)(k'-\alpha+1)+(\alpha+1)(2j-\alpha)(k'-\alpha))\\
    &=& (k'-1)(2j)(k')+(k'-|\mathcal{B}|)(|\mathcal{B}|(2j-(|\mathcal{B}|-1))(k'+1-|\mathcal{B}|)\\
    & & + \sum_{\alpha=2}^{|\mathcal{B}|-1}[ (-2)\alpha^4+ (1+4j+4k')\alpha^3+(-1-8jk'-2k'^2-k')\alpha^2+(4jk'^2-2jk'+k')\alpha+(2jk'^2)]\\
    &=& 2jk'(k'-1)+[-|\mathcal{B}|^4+(2j+2k')|\mathcal{B}|^3+(-4jk'-2j-k'^2-3k'-1)|\mathcal{B}|^2+(2jk'^2+2jk'+k'^2+k')|\mathcal{B}|\\
    & & + A\frac{1}{30}(6|\mathcal{B}|^5-15|\mathcal{B}|^4+10|\mathcal{B}|^3-|\mathcal{B}|-30)+B\frac{1}{4}(|\mathcal{B}|^4-2|\mathcal{B}|^3+|\mathcal{B}|^2-4)\\
    & &+C\frac{1}{6}(2|\mathcal{B}|^3-3|\mathcal{B}|^2+|\mathcal{B}|-6)+D\frac{1}{2}(|\mathcal{B}|^2-|\mathcal{B}|-2)+E(|\mathcal{B}|-2)\\
    &=& \frac{A}{5}|\mathcal{B}|^5+(-1-\frac{A}{2}+\frac{B}{4})|\mathcal{B}|^4+(2j+2k'+\frac{A}{3}-\frac{B}{2}+\frac{C}{3})|\mathcal{B}|^3\\
    & &+(-4jk'-2j-k'^2-3k'-1+\frac{B}{4}-\frac{C}{2}+\frac{D}{2})|\mathcal{B}|^2+(2jk'^2+2jk'+k'^2+k'-\frac{A}{30}+\frac{C}{6}-\frac{D}{2}+E)|\mathcal{B}|\\
    & & +(2jk'(k'-1)-A-B-C-D-2E)\\
    &=& -\frac{2}{5}|\mathcal{B}|^5+\frac{1}{4}(1+4j+4k')|\mathcal{B}|^4+\frac{1}{6}(-16jk'-4k'^2-2k'-9)|\mathcal{B}|^3+\frac{1}{4}(8jk'^2-4jk'-4j-4k'-1)|\mathcal{B}|^2\\
    & & +\frac{1}{30}(60jk'^2+50jk'+20k'^2+10k'-3)|\mathcal{B}|+2(-3jk'^2+4jk'-2j+k'^2-2k'+1)
\end{eqnarray}
Evaluating this at the two regimes $|\mathcal{B}|=\min \{2j+1,k'+1\}$, we obtain:
\begin{equation}
    \frac{16}{5}j^5-\frac{16}{3}j^4k'+4j^4+\frac{8}{3}j^3k'^2-\frac{20}{3}j^3k'-16j^3+4j^2k'^2-\frac{2}{3}j^2k'-25j^2-\frac{14}{3}jk'^2+\frac{26}{3}jk'-\frac{81}{5}j+2k'^2-4k'
\end{equation}
and
\begin{equation}
    \frac{1}{3} jk'^4+jk'^3-\frac{16}{3}jk'^2+8jk'-4j-\frac{1}{15}k'^5-\frac{1}{12}k'^4-\frac{11}{6}k'^3-\frac{47}{12}k'^2-\frac{101}{10}k',
\end{equation}
respectively. These are analytical expressions still.

\if{false}
We wish to find: NEED CONJ BY H? Something isn't right here........
\begin{equation}
    \frac{tr(\omega_0a^\dag a e^{-\beta H}e^{it H})}{Z_{total}}=\frac{tr(e^{(it-\beta)H})}{Z_{total}}
\end{equation}
Note that due to the parity of the eigenvalues of $H_{pert}$, this can be written as:
\begin{equation}
    \frac{tr(\omega_0a^\dag a e^{(it-\beta)H_0}\cosh((it-\beta)H_{pert}))}{2Z_{total}}
\end{equation}
Expanding the $\cosh$, this becomes:
\begin{equation}
    \frac{tr(\omega_0 a^\dag a e^{(it-\beta)H_0} [\cosh(\beta H_{pert})\cos(t H_{pert})+i\sinh(\beta H_{pert})\sin(t H_{pert})]}{2Z_{total}}
\end{equation}
We will consider $t$ such that $tH_{pert}\ll 1$ and the perturbative expansion only requires the first term. Then this reduces to:
\begin{equation}
    \frac{tr(\omega_0a^\dag a e^{(it-\beta)H_0} [1+(\beta H_{pert})^2/2-(t H_{pert})^2/2+i(\beta H_{pert})(t H_{pert})])}{2Z_{total}}
\end{equation}
Breaking this into real and imaginary components and rewriting the expression in terms of sums, we obtain as the perturbation:
\if{false}
\begin{multline}
    Real(t)=\frac{1}{2Z_{total}}(\sum_k e^{-\beta \omega_0 k}\cos(\omega_0 k t)\sum_j d_j |\mathcal{B}|(1+\frac{\beta^2-t^2}{2}Var(\Lambda(j,k)))\\
    -\sum_k e^{-\beta\omega_0 k}\sin(\omega_0 k t)\beta t \sum_j d_j |\mathcal{B}|Var(\Lambda(j,k)))
\end{multline}
\begin{multline}
    Imag(t)=\frac{1}{2Z_{total}}(\sum_k e^{-\beta \omega_0 k}\sin(\omega_0 k t)\sum_j d_j |\mathcal{B}|(1+\frac{\beta^2-t^2}{2}Var(\Lambda(j,k))) \\
    + \sum_k e^{-\beta\omega_0 k}\cos(\omega_0 k t)\beta t \sum_j d_j |\mathcal{B}|Var(\Lambda(j,k)))
\end{multline}
The perturbation is given by:
\fi
\begin{multline}
    Real(t)=\frac{1}{2Z_{total}}(\sum_k e^{-\beta \omega_0 k}\cos(\omega_0 k t)\sum_j d_j \frac{\beta^2-t^2}{2}\hat{n}_{pert}(j,k)-\sum_k e^{-\beta\omega_0 k}\sin(\omega_0 k t)\beta t \sum_j d_j \hat{n}_{pert}(j,k)
\end{multline}
\begin{multline}
    Imag(t)=\frac{1}{2Z_{total}}(\sum_k e^{-\beta \omega_0 k}\sin(\omega_0 k t)\sum_j d_j \frac{\beta^2-t^2}{2}\hat{n}_{pert}(j,k)     + \sum_k e^{-\beta\omega_0 k}\cos(\omega_0 k t)\beta t \sum_j d_j \hat{n}_{pert}(j,k))
\end{multline}
\fi

\subsubsection{Second moment}
We begin by defining $\hat{n}_{pert}^{(2)}(j,k):=tr(\omega_0^2 (a^\dag a)^2 L^2(j,k))$. Using a similar method as just before, we may compute:
\begin{eqnarray}
    & &tr((a^\dag a)^2 L^2(j,k))\\
    &=& \langle 1_{j,k} |(a^\dag a)^2|1_{j,k}\rangle l_1^2(j,k)+\langle |\mathcal{B}_{j,k}| |(a^\dag a)^2||\mathcal{B}|_{j,k}\rangle l_{|\mathcal{B}_{j,k}|}^2(j,k)+\sum_{\alpha=2}^{|\mathcal{B}_{j,k}|-1} \langle \alpha_{j,k} |(a^\dag a)^2|\alpha_{j,k}\rangle (l_\alpha^2(j,k)+l_{\alpha+1}^2(j,k))\\
    &=& (k'-1)^2(2j)(k')+(k'-|\mathcal{B}|)^2(|\mathcal{B}|(2j-(|\mathcal{B}|-1))(k'+1-|\mathcal{B}|)\\
    & & +\sum_{\alpha=2}^{|\mathcal{B}|-1} (k'-\alpha)^2(\alpha (2j-\alpha+1)(k'-\alpha+1)+(\alpha+1)(2j-\alpha)(k'-\alpha))\\
    &=& (k'-1)^2(2j)(k')+(k'-|\mathcal{B}|)^2(|\mathcal{B}|(2j-(|\mathcal{B}|-1))(k'+1-|\mathcal{B}|)\\
    & & + \sum_{\alpha=2}^{|\mathcal{B}|-1}[ 2\alpha^5+(-1-4j-6k')\alpha^4+(1+12jk'+6k'^2+2k')\alpha^3\\
    & &+(-12jk'+2jk'-2k'^3-k'^2-2k')\alpha^2+(4jk'^3-4jk'^2+k'^2)\alpha+(2jk'^3)]\\
    &=& |\mathcal{B}|^5+(-2j-3k'-2)|\mathcal{B}|^4+(6jk'+2j+3k'^2+5k'+1)|\mathcal{B}|^3+(-6jk'^2-4jk'-k'^3-4k'^2-2k')|\mathcal{B}|^2\\
    & &+(2jk'^3+2jk'^2+k'^3+k'^2)|\mathcal{B}|+(2jk'^3-4jk'^2+2jk')\\
    & & +A\frac{1}{12}(2|\mathcal{B}|^6-6|\mathcal{B}|^5+5|\mathcal{B}|^4-|\mathcal{B}|^2-12)+B\frac{1}{30}(6|\mathcal{B}|^5-15|\mathcal{B}|^4+10|\mathcal{B}|^3-|\mathcal{B}|-30)\\
    & & +C\frac{1}{4}(|\mathcal{B}|^4-2|\mathcal{B}|^3+|\mathcal{B}|^2-4) +D\frac{1}{6}(2|\mathcal{B}|^3-3|\mathcal{B}|^2+|\mathcal{B}|-6)+E\frac{1}{2}(|\mathcal{B}|^2-|\mathcal{B}|-2)+F(|\mathcal{B}|-2)\\
    &=& \frac{A}{6}|\mathcal{B}|^6+(1-\frac{A}{2}+\frac{B}{5})|\mathcal{B}|^5+(-2j-3k'-2+\frac{5}{12}A-\frac{B}{2}+\frac{C}{4})|\mathcal{B}|^4\\
    & & +(6jk'+2j+3k'^2+5k'+1+\frac{B}{3}-\frac{C}{2}+\frac{D}{3})|\mathcal{B}|^3\\
    & &+(-6jk'^2-4jk'-k'^3-4k'^2-2k'-\frac{A}{12}+\frac{C}{4}-\frac{D}{2}+\frac{E}{2})|\mathcal{B}|^2\\
    & &+(2jk'^3+2jk'^2+k'^3+k'^2-\frac{B}{30}+\frac{D}{6}-\frac{E}{2}+F)|\mathcal{B}|+(2jk'^3-4jk'^2+2jk'-A-B-C-D-E-2F)\\
    &=& \frac{1}{3}|\mathcal{B}|^6+\frac{1}{5}(-1-4j-6k')|\mathcal{B}|^5+(3jk'+\frac{3}{2}k'^2+\frac{1}{2}k'-\frac{5}{12})|\mathcal{B}|^4\\
    & &+(-\frac{10}{3}jk'+\frac{2}{3}j-\frac{2}{3}k'^3-\frac{1}{3}k'^2+\frac{4}{3}k'+\frac{1}{6})|\mathcal{B}|^3+(2jk'^3-8jk'^2+4jk'-\frac{3}{2}k'^2-\frac{1}{2}k'+\frac{1}{12})|\mathcal{B}|^2\\
    & &+(2jk'^3+4jk'^2-\frac{5}{3}jk'+\frac{2}{3}k'^3+\frac{1}{3}k'^2-\frac{1}{3}k'-\frac{1}{15})|\mathcal{B}|+(-jk'^3+4j+2k'^3-6k'^2+6k'-2)
\end{eqnarray}

Evaluating this at the two regimes $|\mathcal{B}|=\min \{2j+1,k'+1\}$, we obtain:
\begin{multline}
    -\frac{64}{15}j^6+\frac{48}{5}j^5k'-\frac{32}{5}j^5+24j^4k'^2-\frac{56}{3}j^4k'-\frac{4}{3}j^4+\frac{8}{3}j^3k'^3+\frac{40}{3}j^3k'^2-\frac{64}{3}j^3k'+\frac{4}{3}j^3\\
    +4j^2k'^2+2j^2k'^2-\frac{16}{3}j^2k'+\frac{1}{3}j^2+\frac{1}{3}jk'^3+\frac{2}{3}jk'^2-\frac{2}{3}jk'+\frac{56}{15}j+2k'^3-6k'^2+\frac{29}{5}k'-\frac{21}{10}
\end{multline}
and
\begin{equation}
    -\frac{13}{10}k'^6+\frac{21}{5}jk'^5-\frac{1}{30}k'^5+\frac{8}{3}jk'^4+\frac{1}{12}k'^4-\frac{13}{3}jk'^3+2k'^3-\frac{5}{3}jk'^2-\frac{25}{4}k'^2+\frac{86}{15}k'+\frac{58}{15}j-\frac{21}{10}
\end{equation}
respectively. These are analytical expressions still.

\subsection{Variance of $\hat{n}$ due to the Lamb shifts}

\begin{equation}
    Var(a^\dag a)=\langle (a^\dag a)^2\rangle-\langle (a^\dag a)\rangle^2
\end{equation}
As per appendix \ref{pertexp}, the density matrix is a sum of the $g_0=0$ density matrix, then an increasing series in powers of $g_0^2$. By the linearity of trace these expectations will also be sums of these forms. \if{false} Since we are interested in the variance in $a^\dag a$ induced by the Lamb shifts we will subtract out the variance that naturally exists in the system, resulting in needing to evaluate the following to obtain the leading perturbation:
\begin{multline}
    tr((a^\dag a)^2 e^{(it-\beta)H_0}[\frac{\beta^2-t^2}{2} H_{pert}^2+i\beta t H_{pert}^2])-2tr((a^\dag a e^{(it-\beta)H_0}))tr(a^\dag a e^{(it-\beta)H_0}[\frac{\beta^2-t^2}{2} H_{pert}^2+i\beta t H_{pert}^2])\\
    -tr(a^\dag a e^{(it-\beta)H_0}[\frac{\beta^2-t^2}{2} H_{pert}^2+i\beta t H_{pert}^2])^2
\end{multline}
Replacing $H_{pert}^2$ with $\hat{n}$ and $\hat{n}^{(2)}$ and dropping the last term as it is comparable to $g_0^4$, this becomes:
\begin{equation}
    tr( e^{(it-\beta)H_0}[\frac{\beta^2-t^2}{2} \hat{n}^{(2)}+i\beta t \hat{n}^{(2)}])-2tr((a^\dag a e^{(it-\beta)H_0}))tr( e^{(it-\beta)H_0}[\frac{\beta^2-t^2}{2} \hat{n}+i\beta t \hat{n}]),
\end{equation}
where the subscripts and arguments for $\hat{n}$ and $\hat{n}^{(2)}$ have been suppressed for notational clarity. Breaking this into real and imaginary components then expressing the real part as sums, we obtain:
\begin{multline}
    Real[ Var(a^\dag a)] (t)=\frac{1}{2Z_{total}}(\sum_k e^{-\beta \omega_0 k}\cos(\omega_0 k t)\sum_j d_j \frac{\beta^2-t^2}{2}\hat{n}^{(2)}-\sum_k e^{-\beta\omega_0 k}\sin(\omega_0 k t)\beta t \sum_j d_j \hat{n}^{(2)})\\
    -2\frac{1}{(2Z_{total})^2} (\sum_k ke^{-\beta\omega_0}\cos(\omega_0 k t)\sum_j d_j|\mathcal{B}|) (\sum_k e^{-\beta \omega_0 k}\cos(\omega_0 k t)\sum_j d_j \frac{\beta^2-t^2}{2}\hat{n}-\sum_k e^{-\beta\omega_0 k}\sin(\omega_0 k t)\beta t \sum_j d_j \hat{n})\\
    -2\frac{1}{(2Z_{total})^2} (\sum_k ke^{-\beta\omega_0}\sin(\omega_0 k t)\sum_j d_j|\mathcal{B}|) (\sum_k e^{-\beta \omega_0 k}\sin(\omega_0 k t)\sum_j d_j \frac{\beta^2-t^2}{2}\hat{n}+\sum_k e^{-\beta\omega_0 k}\cos(\omega_0 k t)\beta t \sum_j d_j \hat{n})
\end{multline}

\if{false}
\subsection{Partition Functions Distributions}
This section contains plots of the partition function as a plot...
\begin{figure}[th]
    \centering
    \includegraphics[scale=0.5]{Visuals (44).png}
    \caption{Schematic figure showing $j$ values that need summing over for any given $k$ choice. The width of each shape is $\Theta (\sqrt{N})$, aside from the tapered tip of the triangle for which there are insufficiently many angular momentum values available, so all of them are used there.}
\end{figure}

    \begin{figure}[th]
    \centering
    \includegraphics[scale=0.5]{Visuals (45).png}
    \caption{Schematic figure showing $j$ values that need summing over for any given $k$ choice. The width of each shape is $\Theta (\sqrt{N})$, aside from the tapered tip of the triangle for which there are insufficiently many angular momentum values available, so all of them are used there.}
\end{figure}

\fi
\fi
\clearpage
\subsection{Fractional changes in $\langle a^\dag a\rangle$ and $Var(a^\dag a)$ due to the Lamb shifts}

From the prior pages we found expressions for the expectations and variances of the uncoupled system and the coupled system. Taking the difference of the values for $a^\dag a$ and $Var(a^\dag a)$ and dividing by the uncoupled system value, we obtain the following pair of plots, both of which are linear. In effect, these plot the change in the number of photons in the cavity due strictly to the Lamb shifts, and likewise for the variance in the number of photons in the cavity.
\fi

\if{false}
    \begin{figure}[h]
    \centering
    \includegraphics[scale=0.5]{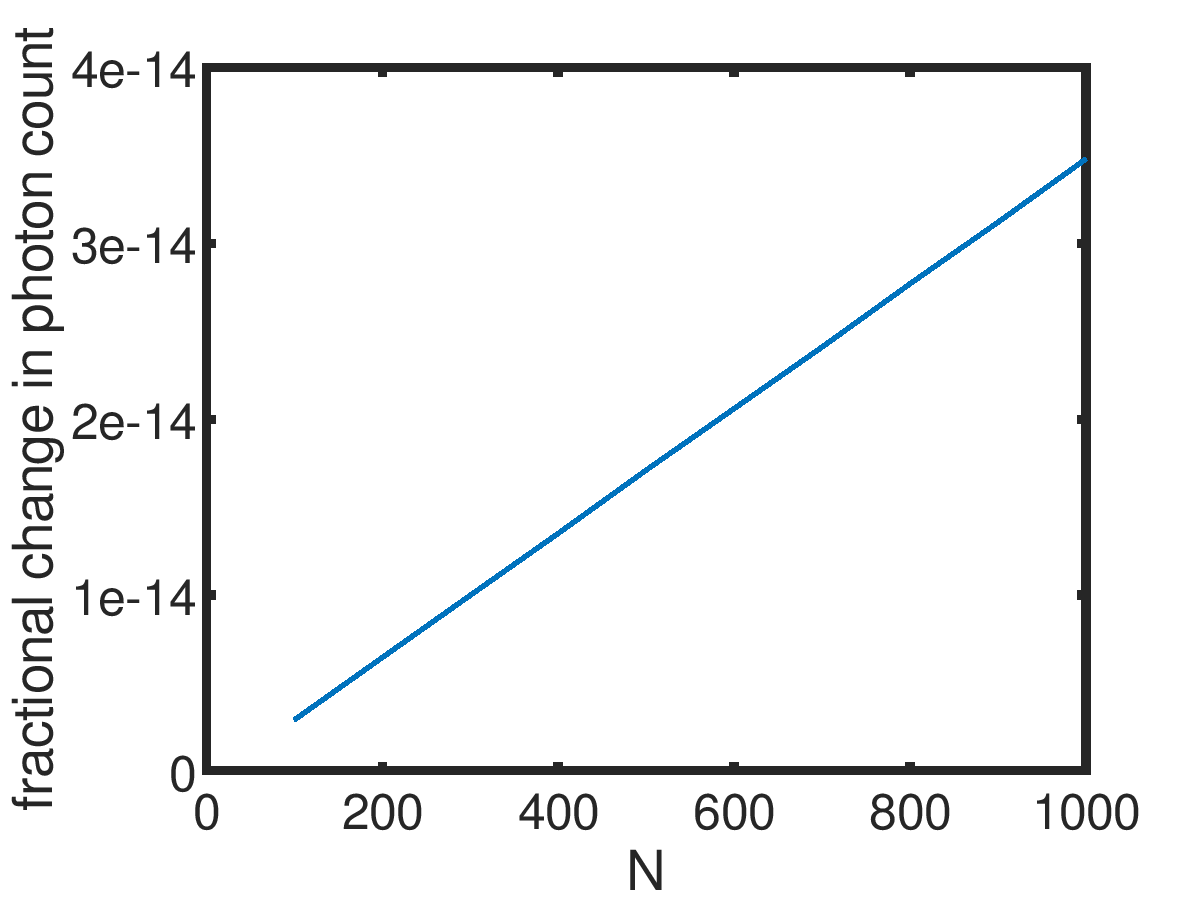}
    \caption{The fractional change in the mean number of photons in the cavity as a function of $N$. This is plotted for $T=0.3$ K, $g_0=200\pi$ Hz, $\omega_0=20\pi$ GHz, and $N=100$ to $1000$. This is a linear trend ($r^2=1$) with slope $3.54\cdot 10^{-17}$ and intercept $-6.49\cdot 10^{-16}$. Once $N\approx 10^{15}$ this fraction would be appreciable.}
    \label{nmeanshift}
\end{figure}

    \begin{figure}[h]
    \centering
    \includegraphics[scale=0.5]{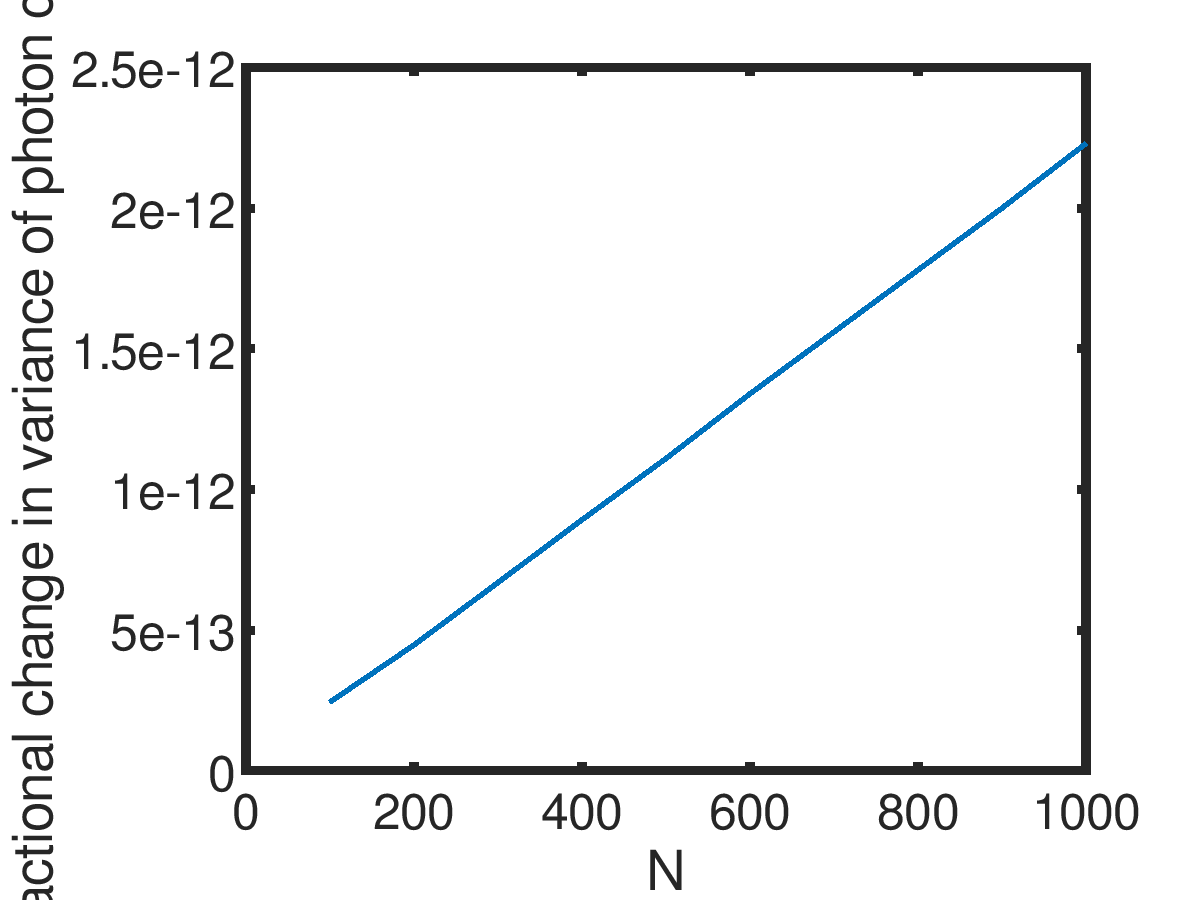}
    \caption{The fractional change in the variance of the number of photons in the cavity as a function of $N$. This is plotted for $T=0.3$ K, $g_0=200\pi$ Hz, $\omega_0=20\pi$ GHz, and $N=100$ to $1000$. This is a linear trend ($r^2=1$) with slope $2.21\cdot 10^{-15}$ and intercept $9.4\cdot 10^{-15}$. Once $N\approx 10^{13}$ this fraction would be appreciable.}
    \label{nvarshift}
\end{figure}
\fi

\if{false}
\clearpage
\subsection{Driven Solution}\label{driven}

We will consider the result of:
\begin{equation}
    tr(J_y e^{it(H+2\Omega\cos(\omega_0 t) J_x)}\rho_{th}e^{-it(H+2\Omega\cos(\omega_0 t) J_x)})
\end{equation}
Note that since $\rho_{th}\propto e^{-\beta H}$ we may commute the evolution of $H_{int}$ around it as $H_{int}$ (without the rotating-wave approximation) commutes with $J_x$. Further the term $a^\dag a$ in $H_0$ commutes with $J_y$, so we may cancel that term as well. Then this reduces to:
\begin{equation}
    tr(J_y e^{it(\omega_0J_z+2\Omega\cos(\omega_0 t)J_x)}\rho_{th}e^{-it(\omega_0J_z+2\Omega\cos(\omega_0 t)J_x)}).
\end{equation}
We will now move into the rotating frame defined by $\omega_0 J_z$ for the Hamiltonian $\omega_0 J_z+ 2\Omega\cos(\omega_0 t) J_x$. This becomes $\Omega J_x$ \cite{wood_cavity_2014}. Our expression is now:
\begin{equation}
    tr(e^{-it\Omega J_x}J_y e^{it\Omega J_x} \rho_{th}).
\end{equation}
If we only keep the terms with constant number of excitations, we obtain from \cite{wood_cavity_2014} that:
\begin{eqnarray}
    tr(e^{-it\Omega J_x}J_y e^{it\Omega J_x} \rho_{th})&=& tr(\frac{1}{2} (e^{-i\Omega t} (J_y+iJ_z)+e^{i\Omega t}(J_y-iJ_z))\rho_{th})\\
    &=& tr(\cos(\Omega t)J_y \rho_{th})+tr(\sin(\Omega t)J_z\rho_{th})\\
    &=& \sin(\Omega t) tr(J_z \rho_{th}), \label{behavior}
\end{eqnarray}
where the first trace is zero due to not preserving the number of excitations.

Computing $tr(\omega_0 J_z\rho_{th})$ first non-zero change:
\begin{eqnarray}
    tr(J_z L^2(j,k))&=& (1-j)l_1^2(j,k)+(|\mathcal{B}|-j)l_{|\mathcal{B}|}^2(j,k)+\sum_{\alpha=2}^{|\mathcal{B}|-1}(\alpha-j)(l_\alpha^2(j,k)+l_{\alpha+1}^2(j,k))\\
    &=& (1-j)(2j)(k')+(|\mathcal{B}|-j)(|\mathcal{B}|(2j-(|\mathcal{B}|-1))(k'+1-|\mathcal{B}|)\\
    & & +\sum_{\alpha=2}^{|\mathcal{B}|-1}(\alpha-j)(\alpha (2j-\alpha+1)(k'-\alpha+1)+(\alpha+1)(2j-\alpha)(k'-\alpha))\\
    &=& |\mathcal{B}|^4+(-3j-k'-2)|\mathcal{B}|^3+(2j^2+3jk'+4j+k'+1)|\mathcal{B}|^2\\
    & & +(-2j^2k'-2j^2-jk'-j)|\mathcal{B}|+(-2j^2k'+2jk')\\
    & & +\sum_{\alpha=2}^{|\mathcal{B}|-1} 2\alpha^4+(-1-6j-2k')\alpha^3+(1+4j^2+j+6jk')\alpha^2+(-4j^2k'-j+2jk')\alpha+2j^2k'\\
    &=& |\mathcal{B}|^4+(-3j-k'-2)|\mathcal{B}|^3+(2j^2+3jk'+4j+k'+1)|\mathcal{B}|^2\\
    & & +(-2j^2k'-2j^2-jk'-j)|\mathcal{B}|+(-2j^2k'+2jk')\\
    & & +A\frac{1}{30}(6|\mathcal{B}|^5-15|\mathcal{B}|^4+10|\mathcal{B}|^3-|\mathcal{B}|-30)+B\frac{1}{4}(|\mathcal{B}|^4-2|\mathcal{B}|^3+|\mathcal{B}|^2-4)\\
    & & +C\frac{1}{6}(2|\mathcal{B}|^3-3|\mathcal{B}|^2+|\mathcal{B}|-6)+D\frac{1}{2}(|\mathcal{B}|^2-|\mathcal{B}|-2)+E(|\mathcal{B}|-2)\\
    &=& \frac{A}{5}|\mathcal{B}|^5+(1-\frac{A}{2}+\frac{B}{4})|\mathcal{B}|^4+(-3j-k'-2+\frac{A}{3}-\frac{B}{2}+\frac{C}{3})|\mathcal{B}|^3\\
    & &+ (2j^2+3jk'+4j+k'+1+\frac{B}{4}-\frac{C}{2}+\frac{D}{2})|\mathcal{B}|^2\\
    & &+ (-2j^2k'-2j^2-2jk'-j-\frac{A}{30}+\frac{C}{6}-\frac{D}{2}+E)|\mathcal{B}|\\
    & &+(-2j^2k'+2jk'-A-B-C-D-2E)\\
    &=& \frac{2}{5}|\mathcal{B}|^5+\frac{1}{4}(-1-6j-2k')|\mathcal{B}|^4+\frac{1}{6}(8j^2+12jk'+2j-3)|\mathcal{B}|^3\\
    & & +\frac{1}{4}(-8j^2k'+4jk'+6j+2k'+1)|\mathcal{B}|^2+\frac{1}{30}(60j^2k'-40j^2-60jk'-10j+3)|\mathcal{B}|\\
    & & +(-2j^2k'-4j^2-6jk'+6j+2k'-2)
\end{eqnarray}
\fi

\if{false}
    \begin{figure}[h]
    \centering
    \includegraphics[scale=0.8]{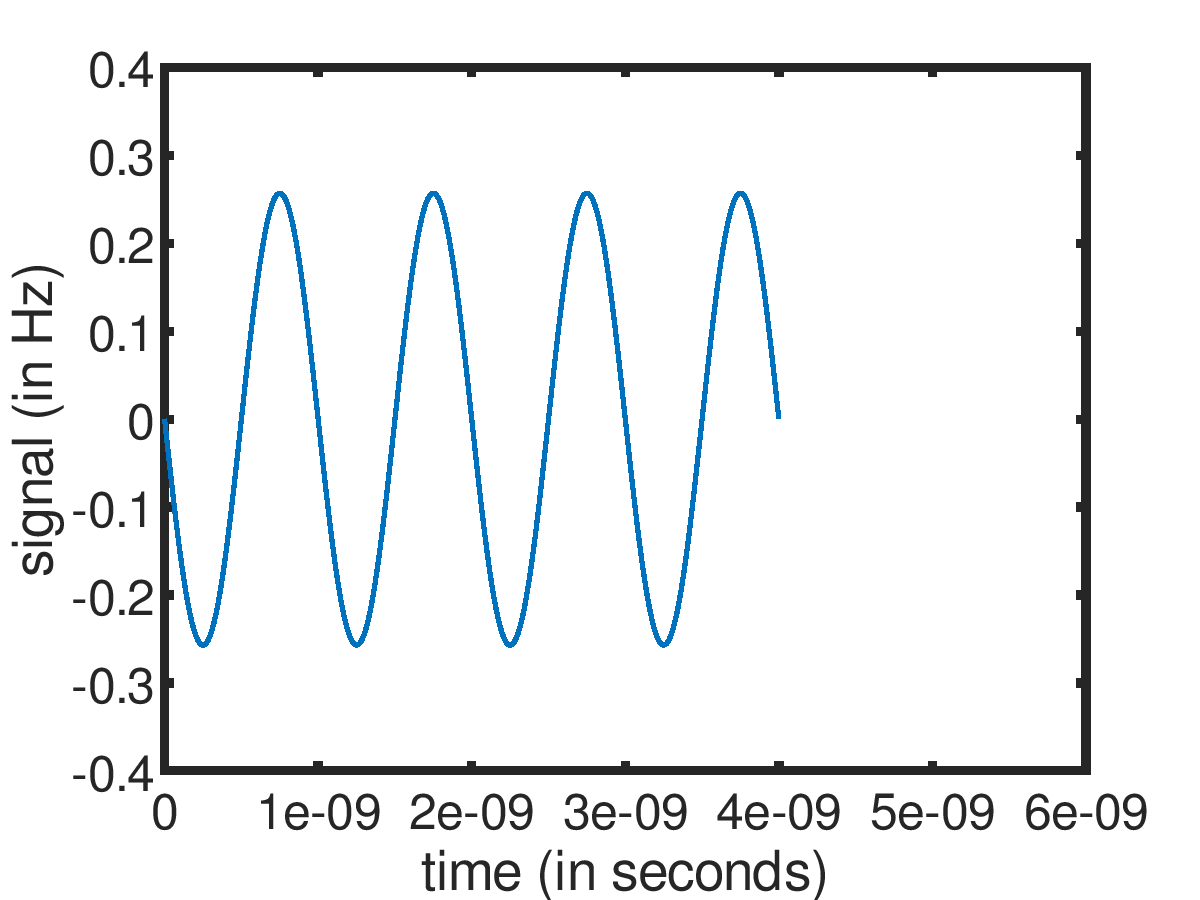}
    \caption{The signal of $\omega_0 J_y$ after time $t$ in the rotating-frame of $J_z$ for the driven Hamiltonian $H+\Omega J_x$ with $\Omega=2\pi$ GHz. This is plotted for $T=0.3$ K, $g_0=200\pi$ Hz, $\omega_0=20\pi$ GHz, and $N=1000$. As stated in equation (\ref{behavior}), this is a simple sine times the shift from the prior figure.}
    \label{jzshiftplot}
\end{figure}
\fi

\subsection{Number Operator Expectation Fractional Shift and Variance Fractional Shift}\label{nshift}
To compute the expectation of the number operator under self-evolution, we will need to compute $\hat{n}_{\text{pert}}(j,k):=\text{tr}(\omega_0 a^\dag a L^2(j,k))$.
\if{false}
\begin{eqnarray}
    & &\text{tr}(a^\dag a L^2(j,k))\\
    &=& \langle 1_{j,k} |a^\dag a|1_{j,k}\rangle l_1^2(j,k)+\langle |\mathcal{B}_{j,k}| |a^\dag a||\mathcal{B}|_{j,k}\rangle l_{|\mathcal{B}_{j,k}|}^2(j,k)+\sum_{\alpha=2}^{|\mathcal{B}_{j,k}|-1} \langle \alpha_{j,k} |a^\dag a|\alpha_{j,k}\rangle (l_\alpha^2(j,k)+l_{\alpha+1}^2(j,k))\\
    &=& (k+j-\frac{N}{2}-1)(2j)(k+j-\frac{N}{2})+(k+j-\frac{N}{2}-|\mathcal{B}|)(|\mathcal{B}|(2j-(|\mathcal{B}|-1))(k+j-\frac{N}{2}+1-|\mathcal{B}|)\\
    & & +\sum_{\alpha=2}^{|\mathcal{B}|-1} (k+j-\frac{N}{2}-\alpha)(\alpha (2j-\alpha+1)(k+j-\frac{N}{2}-\alpha+1)+(\alpha+1)(2j-\alpha)(k+j-\frac{N}{2}-\alpha))\\
    &=& (k+j-\frac{N}{2}-1)(2j)(k+j-\frac{N}{2})+(k+j-\frac{N}{2}-|\mathcal{B}|)(|\mathcal{B}|(2j-(|\mathcal{B}|-1))(k+j-\frac{N}{2}+1-|\mathcal{B}|)\\
    & & + \sum_{\alpha=2}^{|\mathcal{B}|-1}[ (-2)\alpha^4+ (1+8j +4k-2N)\alpha^3\\
    & &+(-1-10j^2-j-12jk+6jN-2k^2-k+2kN-N^2/2+N/2)\alpha^2\\
    & &+(4j^3-2j^2+8j^2k-4j^2N+j+4jk^2-2jk-4jkN+jN^2+jN+k-N/2)\alpha\\
    & &+(2j^3+4j^2k-2j^2N+2jk^2+jN^2/2)]\\
    &=& (k+j-\frac{N}{2}-1)(2j)(k+j-\frac{N}{2})+(k+j-\frac{N}{2}-|\mathcal{B}|)(|\mathcal{B}|(2j-(|\mathcal{B}|-1))(k+j-\frac{N}{2}+1-|\mathcal{B}|)\\
    & & + A\frac{1}{30}(6|\mathcal{B}|^5-15|\mathcal{B}|^4+10|\mathcal{B}|^3-|\mathcal{B}|-30)+B\frac{1}{4}(|\mathcal{B}|^4-2|\mathcal{B}|^3+|\mathcal{B}|^2-4)\\
    & &+C\frac{1}{6}(2|\mathcal{B}|^3-3|\mathcal{B}|^2+|\mathcal{B}|-6)+D\frac{1}{2}(|\mathcal{B}|^2-|\mathcal{B}|-2)+E(|\mathcal{B}|-2)
\end{eqnarray}
\fi
Written using $k'=k-k_0(j)$ and following the notations used in \cite{gunderman2022lamb}, we may compute:
\begin{eqnarray}
    & &tr(a^\dag a L^2(j,k))\\
    &=& \langle 1_{j,k} |a^\dag a|1_{j,k}\rangle l_1^2(j,k)+\langle |\mathcal{B}_{j,k}| |a^\dag a||\mathcal{B}|_{j,k}\rangle l_{|\mathcal{B}_{j,k}|}^2(j,k)\\
    &&+\sum_{\alpha=2}^{|\mathcal{B}_{j,k}|-1} \langle \alpha_{j,k} |a^\dag a|\alpha_{j,k}\rangle (l_\alpha^2(j,k)+l_{\alpha+1}^2(j,k))\\
    &=& (k'-1)(2j)(k')+(k'-|\mathcal{B}|)(|\mathcal{B}|(2j-(|\mathcal{B}|-1))(k'+1-|\mathcal{B}|)\\
    & & +\sum_{\alpha=2}^{|\mathcal{B}|-1} (k'-\alpha)(\alpha (2j-\alpha+1)(k'-\alpha+1)+(\alpha+1)(2j-\alpha)(k'-\alpha))\\
    &=& (k'-1)(2j)(k')+(k'-|\mathcal{B}|)(|\mathcal{B}|(2j-(|\mathcal{B}|-1))(k'+1-|\mathcal{B}|)\\
    & & + \sum_{\alpha=2}^{|\mathcal{B}|-1}[ (-2)\alpha^4+ (1+4j+4k')\alpha^3+(-1-8jk'-2k'^2-k')\alpha^2\\
    & &+(4jk'^2-2jk'+k')\alpha+(2jk'^2)]\\
    &=& 2jk'(k'-1)+[-|\mathcal{B}|^4+(2j+2k')|\mathcal{B}|^3+(-4jk'-2j-k'^2-3k'-1)|\mathcal{B}|^2\\
    & &+(2jk'^2+2jk'+k'^2+k')|\mathcal{B}|\\
    & & + A\frac{1}{30}(6|\mathcal{B}|^5-15|\mathcal{B}|^4+10|\mathcal{B}|^3-|\mathcal{B}|-30)+B\frac{1}{4}(|\mathcal{B}|^4-2|\mathcal{B}|^3+|\mathcal{B}|^2-4)\\
    & &+C\frac{1}{6}(2|\mathcal{B}|^3-3|\mathcal{B}|^2+|\mathcal{B}|-6)+D\frac{1}{2}(|\mathcal{B}|^2-|\mathcal{B}|-2)+E(|\mathcal{B}|-2)\\
    &=& \frac{A}{5}|\mathcal{B}|^5+(-1-\frac{A}{2}+\frac{B}{4})|\mathcal{B}|^4+(2j+2k'+\frac{A}{3}-\frac{B}{2}+\frac{C}{3})|\mathcal{B}|^3\\
    & &+(-4jk'-2j-k'^2-3k'-1+\frac{B}{4}-\frac{C}{2}+\frac{D}{2})|\mathcal{B}|^2\\
    && +(2jk'^2+2jk'+k'^2+k'-\frac{A}{30}+\frac{C}{6}-\frac{D}{2}+E)|\mathcal{B}|\\
    & & +(2jk'(k'-1)-A-B-C-D-2E)\\
    &=& -\frac{2}{5}|\mathcal{B}|^5+\frac{1}{4}(1+4j+4k')|\mathcal{B}|^4+\frac{1}{6}(-16jk'-4k'^2-2k'-9)|\mathcal{B}|^3\\
    &&+\frac{1}{4}(8jk'^2-4jk'-4j-4k'-1)|\mathcal{B}|^2\\
    & & +\frac{1}{30}(60jk'^2+50jk'+20k'^2+10k'-3)|\mathcal{B}|\\
    &&+2(-3jk'^2+4jk'-2j+k'^2-2k'+1)
\end{eqnarray}
Evaluating this at the two regimes $|\mathcal{B}|=\min \{2j+1,k'+1\}$, we obtain:
\begin{multline}
    \frac{16}{5}j^5-\frac{16}{3}j^4k'+4j^4+\frac{8}{3}j^3k'^2-\frac{20}{3}j^3k'-16j^3+4j^2k'^2-\frac{2}{3}j^2k'-25j^2\\
    -\frac{14}{3}jk'^2+\frac{26}{3}jk'-\frac{81}{5}j+2k'^2-4k'
\end{multline}
and
\begin{equation}
    \frac{1}{3} jk'^4+jk'^3-\frac{16}{3}jk'^2+8jk'-4j-\frac{1}{15}k'^5-\frac{1}{12}k'^4-\frac{11}{6}k'^3-\frac{47}{12}k'^2-\frac{101}{10}k',
\end{equation}
respectively. These are analytical expressions still.

We wish to find, where $\rho_{th}$ is the coupled thermal state, the perturbative term of which is computed using $\hat{n}_{\text{pert}}(j,k)$:
\begin{equation}
    \text{fractional mean shift}:=\frac{\text{tr}(\omega_0a^\dag a \rho_{\text{th}})-\text{tr}(\omega_0 a^\dag a \rho_0)}{\text{tr}(\omega_0a^\dag a \rho_0)}
\end{equation}
\if{false}
Note that due to the parity of the eigenvalues of $H_{pert}$, this can be written as:
\begin{equation}
    \frac{tr(\omega_0a^\dag a e^{(it-\beta)H_0}\cosh((it-\beta)H_{pert}))}{2Z_{total}}
\end{equation}

Expanding the $\cosh$, this becomes:
\begin{equation}
    \frac{tr(\omega_0 a^\dag a e^{(it-\beta)H_0} [\cosh(\beta H_{pert})\cos(t H_{pert})+i\sinh(\beta H_{pert})\sin(t H_{pert})]}{2Z_{total}}
\end{equation}
We will consider $t$ such that $tH_{pert}\ll 1$ and the perturbative expansion only requires the first term. Then this reduces to:
\begin{equation}
    \frac{tr(\omega_0a^\dag a e^{(it-\beta)H_0} [1+(\beta H_{pert})^2/2-(t H_{pert})^2/2+i(\beta H_{pert})(t H_{pert})])}{2Z_{total}}
\end{equation}
Breaking this into real and imaginary components and rewriting the expression in terms of sums, we obtain as the perturbation:
\if{false}
\begin{multline}
    Real(t)=\frac{1}{2Z_{total}}(\sum_k e^{-\beta \omega_0 k}\cos(\omega_0 k t)\sum_j d_j |\mathcal{B}|(1+\frac{\beta^2-t^2}{2}Var(\Lambda(j,k)))\\
    -\sum_k e^{-\beta\omega_0 k}\sin(\omega_0 k t)\beta t \sum_j d_j |\mathcal{B}|Var(\Lambda(j,k)))
\end{multline}
\begin{multline}
    Imag(t)=\frac{1}{2Z_{total}}(\sum_k e^{-\beta \omega_0 k}\sin(\omega_0 k t)\sum_j d_j |\mathcal{B}|(1+\frac{\beta^2-t^2}{2}Var(\Lambda(j,k))) \\
    + \sum_k e^{-\beta\omega_0 k}\cos(\omega_0 k t)\beta t \sum_j d_j |\mathcal{B}|Var(\Lambda(j,k)))
\end{multline}
The perturbation is given by:
\fi
\begin{multline}
    Real(t)=\frac{1}{2Z_{total}}(\sum_k e^{-\beta \omega_0 k}\cos(\omega_0 k t)\sum_j d_j \frac{\beta^2-t^2}{2}\hat{n}_{pert}(j,k)\\-\sum_k e^{-\beta\omega_0 k}\sin(\omega_0 k t)\beta t \sum_j d_j \hat{n}_{pert}(j,k)
\end{multline}
\begin{multline}
    Imag(t)=\frac{1}{2Z_{total}}(\sum_k e^{-\beta \omega_0 k}\sin(\omega_0 k t)\sum_j d_j \frac{\beta^2-t^2}{2}\hat{n}_{pert}(j,k)\\     + \sum_k e^{-\beta\omega_0 k}\cos(\omega_0 k t)\beta t \sum_j d_j \hat{n}_{pert}(j,k))
\end{multline}
\fi

\subsubsection{Second moment}
We begin by defining $\hat{n}_{\text{pert}}^{(2)}(j,k):=\text{tr}(\omega_0^2 (a^\dag a)^2 L^2(j,k))$. Using a similar method as just before, we may compute:
\begin{eqnarray}
    & &\text{tr}((a^\dag a)^2 L^2(j,k))\\
    &=& \langle 1_{j,k} |(a^\dag a)^2|1_{j,k}\rangle l_1^2(j,k)+\langle |\mathcal{B}_{j,k}| |(a^\dag a)^2||\mathcal{B}|_{j,k}\rangle l_{|\mathcal{B}_{j,k}|}^2(j,k)\\
   \nonumber & &+\sum_{\alpha=2}^{|\mathcal{B}_{j,k}|-1} \langle \alpha_{j,k} |(a^\dag a)^2|\alpha_{j,k}\rangle (l_\alpha^2(j,k)+l_{\alpha+1}^2(j,k))\\
    &=& (k'-1)^2(2j)(k')+(k'-|\mathcal{B}|)^2(|\mathcal{B}|(2j-(|\mathcal{B}|-1))(k'+1-|\mathcal{B}|)\\
    \nonumber& & +\sum_{\alpha=2}^{|\mathcal{B}|-1} (k'-\alpha)^2(\alpha (2j-\alpha+1)(k'-\alpha+1)+(\alpha+1)(2j-\alpha)(k'-\alpha))\\
    &=& (k'-1)^2(2j)(k')+(k'-|\mathcal{B}|)^2(|\mathcal{B}|(2j-(|\mathcal{B}|-1))(k'+1-|\mathcal{B}|)\\
    \nonumber& & + \sum_{\alpha=2}^{|\mathcal{B}|-1}[ 2\alpha^5+(-1-4j-6k')\alpha^4+(1+12jk'+6k'^2+2k')\alpha^3\\
    \nonumber& &+(-12jk'+2jk'-2k'^3-k'^2-2k')\alpha^2+(4jk'^3-4jk'^2+k'^2)\alpha+(2jk'^3)]\\
    &=& |\mathcal{B}|^5+(-2j-3k'-2)|\mathcal{B}|^4+(6jk'+2j+3k'^2+5k'+1)|\mathcal{B}|^3\\
    \nonumber& &+(-6jk'^2-4jk'-k'^3-4k'^2-2k')|\mathcal{B}|^2\\
    \nonumber& &+(2jk'^3+2jk'^2+k'^3+k'^2)|\mathcal{B}|+(2jk'^3-4jk'^2+2jk')\\
    \nonumber & & +A\frac{1}{12}(2|\mathcal{B}|^6-6|\mathcal{B}|^5+5|\mathcal{B}|^4-|\mathcal{B}|^2-12)\\
    \nonumber & &+B\frac{1}{30}(6|\mathcal{B}|^5-15|\mathcal{B}|^4+10|\mathcal{B}|^3-|\mathcal{B}|-30) +C\frac{1}{4}(|\mathcal{B}|^4-2|\mathcal{B}|^3+|\mathcal{B}|^2-4)\\
    \nonumber & &+D\frac{1}{6}(2|\mathcal{B}|^3-3|\mathcal{B}|^2+|\mathcal{B}|-6)+E\frac{1}{2}(|\mathcal{B}|^2-|\mathcal{B}|-2)+F(|\mathcal{B}|-2)
    \end{eqnarray}
    Grouping by powers of $|\mathcal{B}|$, we obtain:
    \begin{eqnarray}
    &=& \frac{A}{6}|\mathcal{B}|^6+(1-\frac{A}{2}+\frac{B}{5})|\mathcal{B}|^5+(-2j-3k'-2+\frac{5}{12}A-\frac{B}{2}+\frac{C}{4})|\mathcal{B}|^4\\
    & & +(6jk'+2j+3k'^2+5k'+1+\frac{B}{3}-\frac{C}{2}+\frac{D}{3})|\mathcal{B}|^3\\
    \nonumber & &+(-6jk'^2-4jk'-k'^3-4k'^2-2k'-\frac{A}{12}+\frac{C}{4}-\frac{D}{2}+\frac{E}{2})|\mathcal{B}|^2\\
    \nonumber & &+(2jk'^3+2jk'^2+k'^3+k'^2-\frac{B}{30}+\frac{D}{6}-\frac{E}{2}+F)|\mathcal{B}|\\
    \nonumber & &+(2jk'^3-4jk'^2+2jk'-A-B-C-D-E-2F)\\
    &=& \frac{1}{3}|\mathcal{B}|^6+\frac{1}{5}(-1-4j-6k')|\mathcal{B}|^5+(3jk'+\frac{3}{2}k'^2+\frac{1}{2}k'-\frac{5}{12})|\mathcal{B}|^4\\
    \nonumber& &+(-\frac{10}{3}jk'+\frac{2}{3}j-\frac{2}{3}k'^3-\frac{1}{3}k'^2+\frac{4}{3}k'+\frac{1}{6})|\mathcal{B}|^3\\
    \nonumber& &+(2jk'^3-8jk'^2+4jk'-\frac{3}{2}k'^2-\frac{1}{2}k'+\frac{1}{12})|\mathcal{B}|^2\\
    \nonumber& &+(2jk'^3+4jk'^2-\frac{5}{3}jk'+\frac{2}{3}k'^3+\frac{1}{3}k'^2-\frac{1}{3}k'-\frac{1}{15})|\mathcal{B}|\\
    \nonumber& &+(-jk'^3+4j+2k'^3-6k'^2+6k'-2)
\end{eqnarray}

Evaluating this at the two regimes $|\mathcal{B}|=\min \{2j+1,k'+1\}$, we obtain:
\begin{multline}
    -\frac{64}{15}j^6+\frac{48}{5}j^5k'-\frac{32}{5}j^5+24j^4k'^2-\frac{56}{3}j^4k'-\frac{4}{3}j^4+\frac{8}{3}j^3k'^3+\frac{40}{3}j^3k'^2-\frac{64}{3}j^3k'+\frac{4}{3}j^3\\
    +4j^2k'^2+2j^2k'^2-\frac{16}{3}j^2k'+\frac{1}{3}j^2+\frac{1}{3}jk'^3+\frac{2}{3}jk'^2-\frac{2}{3}jk'+\frac{56}{15}j+2k'^3-6k'^2+\frac{29}{5}k'-\frac{21}{10}
\end{multline}
and
\begin{equation}
    -\frac{13}{10}k'^6+\frac{21}{5}jk'^5-\frac{1}{30}k'^5+\frac{8}{3}jk'^4+\frac{1}{12}k'^4-\frac{13}{3}jk'^3+2k'^3-\frac{5}{3}jk'^2-\frac{25}{4}k'^2+\frac{86}{15}k'+\frac{58}{15}j-\frac{21}{10}
\end{equation}
respectively. These are analytical expressions still. From here the fractional variance shift can be found using linearity of the trace and thermal density state.

\subsection{Driven Solution}\label{drivensoln}

We will consider the result of:
\begin{equation}
    \text{tr}(J_y e^{it(H+2\Omega\cos(\omega_0 t) J_x)}\rho_{th}e^{-it(H+2\Omega\cos(\omega_0 t) J_x)})
\end{equation}
Note that since $\rho_{\text{th}}\propto e^{-\beta H}$ we may commute the evolution of $H_{\text{int}}$ around it as $H_{\text{int}}$ (without the rotating-wave approximation) commutes with $J_x$. Further the term $a^\dag a$ in $H_0$ commutes with $J_y$, so we may cancel that term as well. Then this reduces to:
\begin{equation}
    \text{tr}(J_y e^{it(\omega_0J_z+2\Omega\cos(\omega_0 t)J_x)}\rho_{\text{th}}e^{-it(\omega_0J_z+2\Omega\cos(\omega_0 t)J_x)}).
\end{equation}
We will now move into the rotating frame defined by $\omega_0 J_z$ for the Hamiltonian $\omega_0 J_z+ 2\Omega\cos(\omega_0 t) J_x$. This becomes $\Omega J_x$ \cite{wood_cavity_2014}. Our expression is now:
\begin{equation}
    \text{tr}(e^{-it\Omega J_x}J_y e^{it\Omega J_x} \rho_{\text{th}}).
\end{equation}
If we only keep the terms with constant number of excitations, we obtain from \cite{wood_cavity_2014} that:
\begin{eqnarray}
    \text{tr}(e^{-it\Omega J_x}J_y e^{it\Omega J_x} \rho_{\text{th}})&=& \text{tr}(\frac{1}{2} (e^{-i\Omega t} (J_y+iJ_z)+e^{i\Omega t}(J_y-iJ_z))\rho_{\text{th}})\\
    &=& \text{tr}(\cos(\Omega t)J_y \rho_{\text{th}})+\text{tr}(\sin(\Omega t)J_z\rho_{\text{th}})\\
    &=& \sin(\Omega t) \text{tr}(J_z \rho_{\text{th}}), \label{behavior}
\end{eqnarray}
where the first trace is zero due to not preserving the number of excitations.

Computing $\text{tr}(\omega_0 J_z\rho_{\text{th}})$ first non-zero change:
\begin{eqnarray}
    &&\text{tr}(J_z L^2(j,k))\\
    &=& (1-j)l_1^2(j,k)+(|\mathcal{B}|-j)l_{|\mathcal{B}|}^2(j,k)+\sum_{\alpha=2}^{|\mathcal{B}|-1}(\alpha-j)(l_\alpha^2(j,k)+l_{\alpha+1}^2(j,k))\\
    &=& (1-j)(2j)(k')+(|\mathcal{B}|-j)(|\mathcal{B}|(2j-(|\mathcal{B}|-1))(k'+1-|\mathcal{B}|)\\
    \nonumber& & +\sum_{\alpha=2}^{|\mathcal{B}|-1}(\alpha-j)(\alpha (2j-\alpha+1)(k'-\alpha+1)+(\alpha+1)(2j-\alpha)(k'-\alpha))\\
    &=& |\mathcal{B}|^4+(-3j-k'-2)|\mathcal{B}|^3+(2j^2+3jk'+4j+k'+1)|\mathcal{B}|^2\\
    \nonumber& & +(-2j^2k'-2j^2-jk'-j)|\mathcal{B}|+(-2j^2k'+2jk')\\
    \nonumber& & +\sum_{\alpha=2}^{|\mathcal{B}|-1} 2\alpha^4+(-1-6j-2k')\alpha^3+(1+4j^2+j+6jk')\alpha^2\\
    \nonumber& &+(-4j^2k'-j+2jk')\alpha+2j^2k'\\
    &=& |\mathcal{B}|^4+(-3j-k'-2)|\mathcal{B}|^3+(2j^2+3jk'+4j+k'+1)|\mathcal{B}|^2\\
    \nonumber& & +(-2j^2k'-2j^2-jk'-j)|\mathcal{B}|+(-2j^2k'+2jk')\\
    \nonumber& & +A\frac{1}{30}(6|\mathcal{B}|^5-15|\mathcal{B}|^4+10|\mathcal{B}|^3-|\mathcal{B}|-30)+B\frac{1}{4}(|\mathcal{B}|^4-2|\mathcal{B}|^3+|\mathcal{B}|^2-4)\\
    \nonumber& & +C\frac{1}{6}(2|\mathcal{B}|^3-3|\mathcal{B}|^2+|\mathcal{B}|-6)+D\frac{1}{2}(|\mathcal{B}|^2-|\mathcal{B}|-2)+E(|\mathcal{B}|-2)\\
    &=& \frac{A}{5}|\mathcal{B}|^5+(1-\frac{A}{2}+\frac{B}{4})|\mathcal{B}|^4+(-3j-k'-2+\frac{A}{3}-\frac{B}{2}+\frac{C}{3})|\mathcal{B}|^3\\
    \nonumber& &+ (2j^2+3jk'+4j+k'+1+\frac{B}{4}-\frac{C}{2}+\frac{D}{2})|\mathcal{B}|^2\\
    \nonumber& &+ (-2j^2k'-2j^2-2jk'-j-\frac{A}{30}+\frac{C}{6}-\frac{D}{2}+E)|\mathcal{B}|\\
    \nonumber& &+(-2j^2k'+2jk'-A-B-C-D-2E)\\
    &=& \frac{2}{5}|\mathcal{B}|^5+\frac{1}{4}(-1-6j-2k')|\mathcal{B}|^4+\frac{1}{6}(8j^2+12jk'+2j-3)|\mathcal{B}|^3\\
    \nonumber& & +\frac{1}{4}(-8j^2k'+4jk'+6j+2k'+1)|\mathcal{B}|^2\\
    \nonumber& &+\frac{1}{30}(60j^2k'-40j^2-60jk'-10j+3)|\mathcal{B}|\\
    \nonumber& & +(-2j^2k'-4j^2-6jk'+6j+2k'-2)
\end{eqnarray}

\if{false}
\clearpage
\subsection{Constant Photon Counts}
We stratify by photon count $\kappa$ and take the conditional result:
\begin{eqnarray}
    |_{\kappa}&=&\sum_{w_H=0}^N e^{-\beta\omega_0 w_H}\sum_{j=N/2}^{N/2-w_H} d_j \sum_{m=-j}^{\min (j,w_H-j)} \langle j,m|(1+\frac{1}{2}H_{int}^2)(| \kappa \rangle\otimes | j,-j+w_H\rangle)(\langle \kappa |\otimes \langle j,-j+w_H|)|j,m\rangle\\
    &=& \sum_{w_H=0}^N e^{-\beta\omega_0 w_H}\sum_{j=N/2}^{\max (N/2-w_H, j_{min})} d_j \langle \kappa |\langle j,-j+w_H-(N/2-j) |(1+\frac{1}{2}H_{int}^2)|\kappa\rangle  | j,-j+w_H-(N/2-j)\rangle\\
    \nonumber &=& \sum_{w_H=0}^N e^{-\beta\omega_0 w_H}\sum_{j=N/2}^{\max (N/2-w_H, j_{min})} d_j \langle \kappa |\langle j,-N/2+w_H |(1+\frac{1}{2}H_{int}^2)|\kappa\rangle  | j,-N/2+w_H)\rangle \Theta(j-(-N/2+w_H))
\end{eqnarray}
First order term is cancelled since $\kappa$ is not constant.

 \begin{figure}[h]
    \centering
    \includegraphics[scale=0.5]{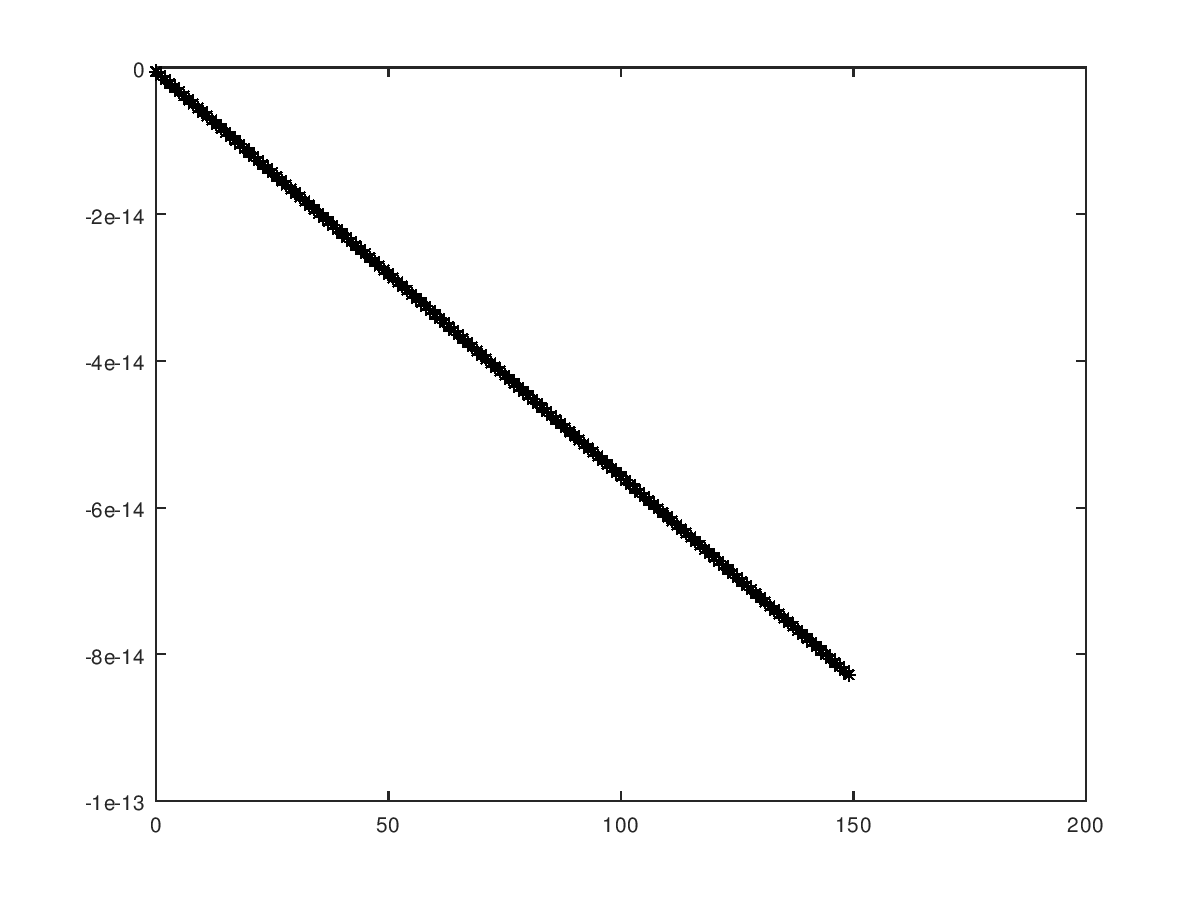}
    \caption{The fractional mean shift: The $x$-axis is the number of photons, while $y$-axis is the Lamb shift induced change in the mean, as a fraction of the unshifted mean. This curve is linear, and manifestly so from the $\langle \kappa |H_{int}^2|\kappa\rangle$ (photon count) term being able to be factored out. $N=100$, $T=0.6$ K. Typical $g_0$ and $\omega_0$ values.}
\end{figure}

\clearpage

\subsection{Observing Coherent States}
aaaaa
\fi
\end{widetext}

\end{document}